\documentclass[11pt]{article}
\usepackage[utf8]{inputenc}
\usepackage{amsfonts,latexsym,amsthm,amssymb,amsmath,amscd,euscript,mathrsfs,graphicx}
\usepackage{framed}
\usepackage{fullpage}
\usepackage{color}
\usepackage[obeyFinal,textsize=scriptsize,shadow]{todonotes}
\usepackage{tikz}
\usetikzlibrary{matrix}
\usepackage{hyperref}
\usepackage{cleveref}

\RequirePackage{algorithm}
\RequirePackage{algorithmic}

\usepackage{subfig}
%\captionsetup[subfloat]{listofformat=parens}

%Below are the theorem, definition, example, lemma, etc. body types.
\theoremstyle{plain}
\newtheorem{theorem}{Theorem}[section]
\newtheorem{proposition}[theorem]{Proposition}
\newtheorem{lemma}[theorem]{Lemma}

\theoremstyle{definition}
\newtheorem{definition}[theorem]{Definition}

\newtheorem{question}[theorem]{Question}
\theoremstyle{remark}

% You can define new commands to make your life easier.

\newcommand{\BE}{\mathbb E}

\newcommand{\BZ}{\mathbb Z}
\newcommand{\eps}{\varepsilon}

\DeclareMathOperator{\MWM}{MWM}
\DeclareMathOperator{\MST}{MST}
\DeclareMathOperator{\PF}{PF}
\DeclareMathOperator{\diam}{diam}
\DeclareMathOperator{\HST}{HST}
\DeclareMathOperator{\OPT}{OPT}
\DeclareMathOperator{\cost}{cost}
\DeclareMathOperator{\GMM}{GMM}
\renewcommand{\div}{\mathrm{div}}

\title{Improved Diversity Maximization Algorithms for Matching and Pseudoforest}
\author{Sepideh Mahabadi\thanks{Microsoft Research, Redmond. \texttt{smahabadi@microsoft.com}} , Shyam Narayanan\thanks{Massachusetts Institute of Technology. Work done as an intern at Microsoft Research. \texttt{shyamsn@mit.edu}}}
\date{\today}

\pagenumbering{gobble}

\begin{document}

\maketitle

\begin{abstract}
    In this work we consider the diversity maximization problem, where given a data set $X$ of $n$ elements, and a parameter $k$, the goal is to pick a subset of $X$ of size $k$ maximizing a certain diversity measure. 
    \cite{chandra2001diversitymax} defined a variety of diversity measures based on pairwise distances between the points.
    A constant factor approximation algorithm was known for all those diversity measures except ``remote-matching'', where only an $O(\log k)$ approximation was known. 
    In this work we present an $O(1)$ approximation for this remaining notion.
    Further, we consider these notions from the perpective of composable coresets. \cite{indyk2014diversity} provided composable coresets with a constant factor approximation for all but ``remote-pseudoforest'' and ``remote-matching'', which again they only obtained a $O(\log k)$ approximation. Here we also close the gap up to constants and present a constant factor composable coreset algorithm for these two notions.
    For remote-matching, our coreset has size only $O(k)$, and for remote-pseudoforest, our coreset has size $O(k^{1+\eps})$ for any $\eps > 0$, for an $O(1/\eps)$-approximate coreset.
\end{abstract}

\newpage

\pagenumbering{arabic}

\section{Introduction}
Diverse Subset Selection is the task of searching for a subset of the data that preserves its diversity as much as possible. This task has a large number of applications in particular while dealing with large amounts of data, including data summarization (e.g.~\cite{lin2009graph, gong2014diverse, lin2011class}), search and information retrieval (e.g.~\cite{angel2011efficient, abbar2013diverse, jain2004providing, drosou2010search, welch2011search}), and recommender systems  (e.g.~\cite{zhou2010solving, abbar2013real, ziegler2005improving, yu2009recommendation}), among many others (e.g. \cite{gollapudi2009axiomatic, pilourdault2017motivation}). 
Here, given a ground set of $n$ vectors $X$ in a metric space $(\mathcal{X},\rho)$, representing a data set of objects (e.g. using their feature vectors), and a parameter $k$, the goal is to choose a subset $S\subseteq X$ of this data set of size $k$, that maximizes a pre-specified optimization function measuring the diversity. 

Many diversity measures have been introduced and used in the literature that fit different tasks. A large number of these measures are defined based on pairwise distances between the vectors in $X$. In particular the influential work of \cite{chandra2001diversitymax} introduced a taxonomy of pairwise-distance based diversity measures which is shown in Table~\ref{tab:diversity}. For example, remote-edge measures the distance of the closest points picked in the subset $S$, and remote-clique measures the sum of pairwise distances between the points in the subset $S$. Remote-pseudoforest falls between these two where it wants to ensure that the average distance of a point to its nearest neighbor is large.  
Remote-matching measures the diversity as the cost of minimum-weight-matching.
Various other measures have also been considered:
Table \ref{tab:diversity} includes each of their definitions along with the best known approximation factor for these measures known up to date. In particular, by \cite{chandra2001diversitymax} it was known that all these measures except remote-pseudoforest and remote-matching admit a constant factor approximation. More recently, \cite{bhaskara2016pseudoforest} showed a constant factor randomized LP-based algorithm for remote-pseudoforest. They also showed the effectiveness of the remote-pseudoforest measure on real data over the other two common measures (remote-edge and remote-clique).
On the lower bound side, it was known by \cite{halldorsson1999finding} that for remote-matching, one cannot achieve an approximation factor better than $2$. However, despite the fact that there has been a large body of work on diversity maximization problems \cite{borodin2012max, abbassi2013diversity, ceccarello2016mapreduce, cevallos2018diversity,epasto2022improved},
the following question had remained unresolved for over two decades.
\begin{question}
Is it possible to get an $O(1)$ approximation algorithm for the remaining notion of remote-matching?
\end{question}

\begin{table*}
\begin{center}
 {\small
 \begin{tabular}[h]{|l|l|l|}
  \hline 
    Problem & Diversity of the point set $S$ & Apx factor \\   \hline 
    Remote-edge & $\min_{x,y\in S} dist(x,y)$ &  $O(1)$\\ 
    Remote-clique & $\sum_{x,y\in S} dist(x,y)$ & $O(1)$  \\ 
    Remote-tree& $wt(MST(S) )$, weight of the minimum spanning tree of $S$& $O(1)$ \\
    Remote-cycle & $\min_C wt(C)$ where $C$ is a TSP tour on $S$& $O(1)$\\
    Remote $t$-trees & $\min_{S=S_1|...|S_t} \sum_{i=1}^t wt(MST(S_i))$ & $O(1)$\\
    Remote $t$-cycles & $\min_{S=S_1|...|S_t} \sum_{i=1}^t wt(TSP(S_i))$ & $O(1)$\\
    Remote-star & $\min_{x\in S} \sum_{y\in S\setminus\{x\}} dist(x,y)$& $O(1)$\\
    Remote-bipartition & $\min_B wt(B)$, where $B$ is a bipartition (i.e., bisection) of $S$& $O(1)$\\
    Remote-pseudoforest & $\sum_{x\in S} \min_{y\in S\setminus\{x\}} dist(x,y)$& $O(1)$ (Offline)\\
    & & $O(\log k)$ (Coreset) \\
    Remote-matching & $\min_M wt(M)$, where $M$ is a perfect matching of $S$& $O(\log k)$\\
    \hline 
  \end{tabular}
  }
\end{center}
\caption{This table includes the notions of diversity considered by \cite{chandra2001diversitymax} ($S=S_1|...|S_t$ is used to denote that $S_1 \ldots S_t$ is a partition of $S$ into $t$ sets). We also include the best previously-known approximation factors, both in the standard (offline) and coreset setting. If not explicitly stated, the approximation factor holds for both the offline setting and the coreset setting. We note that the previously known $O(1)$-approximate remote-pseudoforest offline algorithm is randomized, whereas the rest of the previously known algorithms are deterministic.}
\label{tab:diversity}
\end{table*}

Later following a line of work on diversity maximization in big data models of computations, \cite{indyk2014diversity}  presented algorithms producing a composable coreset for the diversity maximization problem under all the aformentioned diversity measures. An $\alpha$-approximate composable coreset for a diversity objective is a mapping that given a data set $X$, outputs a small subset $C \subset X$ with the following composability property: given multiple data sets $X^{(1)},\cdots,X^{(m)}$, the maximum achievable diversity over the union of the composable coresets $\bigcup_i C^{(i)}$ is within an $\alpha$ factor of the maximum diversity over the union of those data sets $\bigcup_i X^{(i)}$. It is shown that composable coresets naturally lead to solutions in several massive data processing models including distributed and streaming models of computations, and this has lead to recent interest in composable coresets since their introduction \cite{mirrokni2015randomized, assadi2017randomized, mahabadi2019composable, indyk2020composable, aghamolaei2015diversity, epasto2019scalable}.
\cite{indyk2014diversity} showed $\alpha$-approximate compsable coresets again for all measures of diversity introduced by \cite{chandra2001diversitymax}. 
They presented constant factor $\alpha$-approximate composable coresets for all diversity measures except remote-pseudoforest and remote-matching, where they provided only $O(\log k)$-approximations for these measures. Again the following question remained open.
\begin{question}
Is it possible to get a constant factor composable coreset for the remote-pseudoforest and remote-matching objective functions? 
\end{question}

In this work we answer above two questions positively and close the gap up to constants for these problems.

\subsection{Our Results}

In this work, we resolve a longstanding open question of~\cite{chandra2001diversitymax} by providing polynomial-time $O(1)$-approximation algorithms for the remote-matching problem. We also resolve a main open question of~\cite{indyk2014diversity} by providing polynomial-time algorithms that generate $O(1)$-approximate composable coresets for both the remote-pseudoforest and remote-matching problems. Hence, our paper establishes $O(1)$-approximate offline algorithms \emph{and} $O(1)$-approximate composable coresets for \emph{all} remaining diversity measures proposed in~\cite{chandra2001diversitymax}.
Specifically, we have the following theorems.

\begin{theorem}[Remote-Matching, Offline Algorithm] \label{thm:matching_offline}
    Given a dataset $X = \{x_1, \dots, x_n\}$ and an integer $k \le \frac{n}{3}$, w.h.p. Algorithm \ref{alg:mwm_offline} outputs an $O(1)$-approximate set for remote-matching.
\end{theorem}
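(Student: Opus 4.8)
The plan is to sandwich $\OPT=\max_{|S|=k}\MWM(S)$ between two versions of one hierarchical, parity‑sensitive quantity, and then to show that a dynamic program over a greedily‑built decomposition of $X$ finds a set realizing it.

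The first step is a matching lower bound (a ``dual'' certificate) that is easy to evaluate. For $t>0$, let $G_t[S]$ be the graph on $S$ with an edge joining $u,v$ whenever $\rho(u,v)<t$, and let $\mathrm{oc}_t(S)$ denote the number of connected components of $G_t[S]$ of odd size. Any perfect matching of $S$ must match some vertex of each odd component to a vertex outside it, and such an edge has length $\ge t$ since its endpoints lie in different $G_t[S]$‑components; as a matching edge is incident to at most two odd components, at least $\tfrac12\mathrm{oc}_t(S)$ such edges are present, so $\MWM(S)\ge\tfrac{t}{2}\mathrm{oc}_t(S)$ for every $t$. Applying the same reasoning to the minimum matching of $S$ shows that for every $t$ at least $\tfrac12\mathrm{oc}_t(S)$ of its edges are longer than $t$, and integrating the lengths of those edges over $t$ gives
\[
\MWM(S)\ \ge\ \frac12\int_0^\infty \mathrm{oc}_t(S)\,dt .
\]
The crux of the analysis is to prove the reverse, $\MWM(S)=O\!\big(\int_0^\infty\mathrm{oc}_t(S)\,dt\big)$, which then yields the structural identity $\OPT=\Theta\!\big(\max_{|S|=k}\int_0^\infty\mathrm{oc}_t(S)\,dt\big)$; for this direction I would compare the optimal matching with one that matches greedily inside the clusters of the single‑linkage dendrogram of $S$ (using that a minimum‑weight perfect matching of an even point set costs at most its $\MST$) and account for the forced inter‑cluster edges.

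On the algorithmic side I expect Algorithm~\ref{alg:mwm_offline} to (i) run $\GMM$ on $X$, producing a permutation with non‑increasing insertion radii $r_1\ge r_2\ge\cdots$ whose $j$‑prefix is $r_j$‑separated, $r_{j+1}$‑dense, and a $2$‑approximation to the $j$‑th packing radius of $X$; (ii) use the $\GMM$ Voronoi regions at dyadic scales as a laminar decomposition of $X$ whose level‑$\ell$ cells have diameter $O(2^\ell)$; (iii) dynamic‑program over this decomposition, storing for each cell and each count $0\le s\le|\text{cell}|$ the best value of (a proxy for) $\int_0^\infty\mathrm{oc}_t\,dt$ obtainable by taking $s$ points from the cell, combining children bottom‑up and crediting, when a cell closes at its level, the contribution of its odd sub‑selections; and (iv) finish with a parity‑repair pass that swaps or adds $O(1)$ points per cell to make a constant fraction of the cells odd. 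The lower bound on the returned set is then the certificate above, while its near‑optimality follows from an optimality argument for the DP together with the structural identity; a random choice in the hierarchical decomposition accounts for the ``w.h.p.''. The hypothesis $k\le n/3$ is what guarantees there are always enough unused points to form the separated selections and to perform the parity padding.

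The main obstacle, beyond the upper‑bound direction of the structural identity, is that one genuinely has to combine scales. Writing the optimal matching as $M^*=\{(a_i,b_i)\}$ with $e_i=\rho(a_i,b_i)$ and $N_\ell=\#\{i:e_i\in[2^\ell,2^{\ell+1})\}$, the easy estimate $\OPT=\sum_i e_i\le\sum_\ell 2^{\ell+1}N_\ell\le O(\log k)\cdot\max_\ell\big(2^\ell N_\ell\big)$ handled scale‑by‑scale is exactly what produced the previously known $O(\log k)$ approximation; getting $O(1)$ requires showing that the contributions $2^\ell N_\ell$ at all scales can be realized \emph{simultaneously} by one size‑$k$ set. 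I would do this via a laminar (Edmonds‑type) optimal dual together with the exchange inequalities of $M^*$: for $i\ne j$,
\[
\rho(a_i,a_j)+\rho(b_i,b_j)\ \ge\ e_i+e_j,\qquad \rho(a_i,b_j)+\rho(a_j,b_i)\ \ge\ e_i+e_j,
\]
which force the ``pair representatives'' to be spread out at the scale of their own matched edge, so that at each dyadic scale $t\approx 2^\ell$ one can extract $\Omega(N_\ell)$ points of $X$ that are $\Omega(t)$‑separated and can all be made odd components of $G_{\Omega(t)}$ for a common size‑$k$ set. The remaining friction is reconciling the two clusterings in play: the single‑linkage components used for the certificate have unbounded diameter (think of a path), whereas the $\GMM$/$\HST$ cells have bounded diameter but no built‑in separation between sibling cells, so one must argue that cell boundaries agree with $G_t[S]$‑component boundaries up to a constant factor in $t$. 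The parity bookkeeping — keeping enough cells odd at every scale at once without harming other scales — is an additional but more routine layer.
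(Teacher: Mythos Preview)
Your odd-component certificate is exactly the right lens, and the paper's analysis is built on essentially the same quantity (via an HST whose level-$t$ nodes are the connected components of $G_{2^{-t}}[Y]$, together with the identity $\MWM_{\HST}(Z)=\sum_i 2^{-i} m_i$ where $m_i$ counts nodes with an odd number of selected descendants). The upper-bound direction you flag as the ``crux'' is also obtained the way you suggest: $\MWM(Z')\le \MST(Z')\le 2\,\MST(Y)$, and $\MST(Y)$ is bounded by the component-count sum over dyadic scales. So structurally you are in the right place.

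The genuine gap is that you have guessed the wrong algorithm, and the machinery you propose (a DP over a laminar GMM decomposition, exchange inequalities of the optimal matching, multi-scale parity repair) is not what Algorithm~\ref{alg:mwm_offline} does and is not needed. The actual algorithm is: run $\GMM$ to get $Y=\{y_1,\dots,y_k\}$; take a \emph{uniformly random} subset $\hat Z\subset Y$, drop one point if $|\hat Z|$ is odd to get $Z$; then pad $Z$ up to size $k$ by repeatedly adding two points from the same GMM Voronoi cell $S_j$; finally return whichever of $Y$ and the padded set has larger matching. The hypothesis $k\le n/3$ is used only to guarantee the pair-padding step is always possible.

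The key lemma you are missing, and which dissolves your ``combining scales'' obstacle in one stroke, is
\[
\BE\big[\MWM(Z)\big]\ \ge\ \frac{1}{16}\cdot \max_{\substack{Z'\subset Y\\ |Z'|\text{ even}}} \MWM(Z').
\]
Its proof is exactly your parity observation applied to a random set: for every node of the HST (equivalently, every connected component of $G_{2^{-i}}[Y]$ at every scale $i$), the number of its leaves lying in $\hat Z$ is odd with probability $1/2$, independently of the scale. Hence $\BE[m_i]=\tfrac12 P_{2^{-i}}(Y)$ simultaneously for all $i$, so $\BE[\MWM_{\HST}(Z)]\ge \tfrac12\sum_i 2^{-i}(P_{2^{-i}}(Y)-1)\ge \tfrac12\MST(Y)\ge \tfrac14\max_{Z'}\MWM(Z')$, and the HST-to-metric comparison loses one more factor of $4$. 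No DP, no per-scale extraction, no exchange inequalities: the random subset realizes all scales at once in expectation. The remaining reduction from ``best even subset of $Y$'' to ``best size-$k$ subset of $X$'' is a short snapping argument (map each optimal point to its GMM center, observe that only parities matter for $\MWM$ of the resulting multiset, and that the pair-padding preserves those parities while restoring size $k$), with the GMM output $Y$ itself covering the case where $\OPT$ is only $O(k\cdot r)$.

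In short: keep your $\int_0^\infty \mathrm{oc}_t\,dt$ viewpoint, but replace the DP and parity-repair program by the single observation that a random subset of $Y$ is odd on every component with probability $1/2$. That is the whole proof.
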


While we assume $n$ is at least a constant factor bigger than $k$, this is usually a standard assumption (for instance both the remote-pseudoforest and remote-matching algorithms in \cite{chandra2001diversitymax} assume $n \ge 2k$, and the $O(1)$-approximate remote-pseudoforest algorithm in \cite{bhaskara2016pseudoforest} assumes $n \ge 3k$). 
We now state our theorems regarding composable coresets.

\begin{theorem}[Pseudoforest, Composable Coreset] \label{thm:pseudoforest_coreset}
    Given a dataset $X = \{x_1, \dots, x_n\}$ and an integer $k \le n$, Algorithm \ref{alg:pf_coreset} outputs an $O(1/\eps)$-approximate composable coreset $C$ for remote-pseudoforest, of size at most $O(k^{1+\eps})$.
    
    By this, we mean that if we partitioned the dataset $X$ into $X^{(1)}, \dots, X^{(m)}$, and ran the algorithm separately on each piece to obtain $C^{(1)}, \dots, C^{(m)},$ each $C^{(i)}$ will have size at most $O(k^{1+\eps})$ and $\max_{Z \subset C: |Z| = k} \PF(Z) \ge \Omega(\eps) \cdot \max_{Z \subset X: |Z| = k} \PF(Z)$, where $C = \bigcup_{i=1}^m C^{(i)}$.
\end{theorem}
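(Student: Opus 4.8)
The plan is to reduce $\PF$ of an optimal solution to a single geometric ``scale band'', and then argue that $C$ retains enough points to rebuild a $k$-set realizing that band. Set $\OPT := \max_{|Z|=k,\,Z\subseteq X}\PF(Z)$, let $Z^\ast$ attain it, and for a set $S$ and $x\in S$ write $d_x(S)$ for $\min_{y\in S\setminus\{x\}}\operatorname{dist}(x,y)$.

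\emph{Step 1: only $O(\log k)$ scales matter.} Because $\PF(Z^\ast)=\sum_{x} d_x(Z^\ast)$, every $x$ has $d_x(Z^\ast)\le\OPT$, so deleting from $Z^\ast$ the points with $d_x(Z^\ast)<\OPT/(4k)$ costs at most $\OPT/4$; since deleting points only increases nearest-neighbor distances, the resulting $Z^{\ast\ast}\subseteq Z^\ast$ has $\PF(Z^{\ast\ast})\ge\tfrac34\OPT$ and all of its nearest-neighbor distances in $[\OPT/(4k),\diam(X)]$. A short argument gives $\diam(X)=O(\OPT)$ once $n\ge 3k$ (and if $n$ is that small the algorithm just returns all of $X$, so the union of coresets equals $X$): if every point lay within some $D'$ of all but fewer than $k$ others then $X$ would lie in an $O(D')$-ball with no distant outliers, forcing $\diam(X)=O(D')$; otherwise some point is distant from $\ge k-1$ others and together with them forms a $k$-set of $\PF$ value $\Omega(\diam(X))$. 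Hence the nearest-neighbor distances in $Z^{\ast\ast}$ occupy $O(\log k)$ dyadic scales; grouping these into $O(1/\eps)$ consecutive bands of multiplicative width $k^{\eps}$ and pigeonholing, some band $[r, rk^{\eps})$ carries an $\Omega(\eps)$ fraction of $\PF(Z^{\ast\ast})$. Let $G^\ast:=\{x\in Z^{\ast\ast}: d_x(Z^{\ast\ast})\in[r,rk^{\eps})\}$; then $|G^\ast|\le k$, the points of $G^\ast$ are pairwise at distance $\ge r$, and (deletion again only helping) $\PF(G^\ast)\ge\Omega(\eps\OPT)$.

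\emph{Step 2: $C$ rebuilds $G^\ast$ up to a constant.} It now suffices to find $Z\subseteq C$ with $|Z|=k$ and $\PF(Z)=\Omega(\PF(G^\ast))$. I expect Algorithm~\ref{alg:pf_coreset} to keep, for a geometric sequence of thresholds $r$, a gadget built from (a prefix of a greedy) $\Theta(r)$-packing of $X$ together with a bounded number of representatives inside each packing cell (equivalently, a recursively computed mini-coreset of each cell), with per-band budget $O(k^{1+\eps})$, so that the $O(1/\eps)$ bands total $O(k^{1+\eps})$ points. Using a maximal $(r/5)$-packing, replace each $x\in G^\ast$ by a packing point $\nu(x)$ within $r/5$ of it. Distinct points of $G^\ast$ get distinct representatives, since two colliding would lie within $2r/5<r\le d_x(G^\ast)$ of one another; and for $x'\neq x$ in $G^\ast$, $\operatorname{dist}(\nu(x),\nu(x'))\ge\operatorname{dist}(x,x')-2r/5\ge d_x(G^\ast)-2r/5\ge\tfrac35 d_x(G^\ast)$. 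Hence $\{\nu(x):x\in G^\ast\}$ is a set of $|G^\ast|$ coreset points with $\PF\ge\tfrac35\PF(G^\ast)$. If $|G^\ast|<k$ we pad to $k$: if $X$ contains $k-|G^\ast|$ further points that are $\Omega(r)$-separated from these representatives, their representatives finish the job with no loss; otherwise $X$ is covered by $O(k)$ balls of radius $O(r)$, and, because the chosen band was the heaviest (so all strictly-finer scales together contribute only an $O(1)$ fraction of $\PF(G^\ast)$), padding with in-cell representatives costs only a constant factor. Either way $\PF(Z)=\Omega(\PF(G^\ast))=\Omega(\eps\OPT)$.

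\emph{Step 3: composability, and the main obstacle.} The only properties of the gadget used above are that it $O(r)$-covers $X$ and that it keeps enough representatives inside each cell; both survive union, since $\bigcup_i C^{(i)}$ $O(r)$-covers $\bigcup_i X^{(i)}$ and each cell still contains the representatives contributed by every piece. So running Algorithm~\ref{alg:pf_coreset} on each $X^{(i)}$ and then re-running Steps~1--2 with $C=\bigcup_i C^{(i)}$ in place of an arbitrary coreset of $\bigcup_i X^{(i)}$ yields the stated composable guarantee, with the size bound inherited per piece. The hard part is the padding in Step~2: when $|G^\ast|<k$ one must enlarge the reconstructed configuration to exactly $k$ points without letting any nearest-neighbor distance collapse, and a general metric need not leave ``room'' for far points; turning the dichotomy ``either there is room, or the space is concentrated enough that finer scales are negligible'' into a clean charging argument — while simultaneously defining the per-cell gadget so that it supports this padding, keeps the total at $O(k^{1+\eps})$, and composes — is where the real work lies. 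A secondary technical point is making the $\diam(X)=O(\OPT)$ reduction of Step~1, including the small-$n$ boundary case, fully rigorous.
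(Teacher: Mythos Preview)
Your proposal does not analyze Algorithm~\ref{alg:pf_coreset} as it is actually defined; you guess that it ``keep[s], for a geometric sequence of thresholds $r$, a gadget built from \dots a $\Theta(r)$-packing \dots together with a bounded number of representatives inside each packing cell,'' but that is not what the algorithm does. The coreset for a piece $X^{(j)}$ with $|X^{(j)}|\ge 2k^{1+\eps}+k$ consists of only \emph{five} sets of size $k$: the GMM output $Y^{(j)}$; the $k$ points $U^{(j)}$ furthest from a carefully chosen center $x^{(j)}$; an arbitrary set $P^{(j)}$ of $k$ points near $x^{(j)}$; and two sets $S^{(j)},T^{(j)}$ of size $k$ with $\rho(S^{(j)},T^{(j)})\ge \tfrac{\eps}{2}\,\tilde r^{(j)}$, where $\tilde r^{(j)}$ is the smallest radius such that some ball around a data point contains all but $k$ points. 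The $k^{1+\eps}$ in the size bound arises \emph{only} from the trivial branch ``if $|X^{(j)}|<2k^{1+\eps}+k$ return all of $X^{(j)}$.'' There is no per-scale structure and no pigeonhole over $O(1/\eps)$ dyadic bands; the $1/\eps$ loss comes from the distance guarantee on $S,T$, not from a scale selection.

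Consequently, your Step~2 is arguing about the wrong object, and the specific difficulty you flag in Step~3 --- padding $|G^\ast|<k$ up to $k$ without collapsing nearest-neighbor distances --- is exactly what the $S,T$ construction is designed to solve, via an idea your dichotomy does not capture. The point is that if $\rho(S,T)$ is large, then \emph{every} point $x$ is far from at least one of $S$ or $T$; hence for one of the three candidates $G\cup P^{(j)}$, $G\cup S^{(j)}$, $G\cup T^{(j')}$ (with $j,j'$ chosen so $\rho(S^{(j)},T^{(j')})\ge\Omega(\eps\tilde r)$), every $x\in G$ retains $\cost(x)\ge\Omega(\eps)\cdot\cost_{G'}(x)$. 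The existence of such $S,T$ when $|X|\ge 2k^{1+\eps}+k$ is itself the main lemma (a peeling argument on the growth of $|N_i(x)|$ over $1/\eps$ radii), and it is missing from your plan. Your ``either there is room, or finer scales are negligible'' alternative, as stated, does not yield a charging argument: in a general metric the heaviest band need not dominate the finer ones by a constant, and padding with in-cell representatives can zero out \emph{all} nearest-neighbor distances of $G^\ast$ simultaneously. Finally, your $\diam(X)=O(\OPT)$ reduction is both unnecessary for the actual proof and not correctly argued (the implication ``every point is within $D'$ of all but $<k$ others $\Rightarrow$ $X$ lies in an $O(D')$-ball'' is false without further work).
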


\begin{theorem}[Minimum-Weight Matching, Composable Coreset] \label{thm:matching_coreset}
    Given a dataset $X = \{x_1, \dots, x_n\}$ and an integer $k \le n$, Algorithm \ref{alg:mwm_coreset} outputs an $O(1)$-approximate composable coreset $C$ for remote-matching, of size at most $3k$.
\end{theorem}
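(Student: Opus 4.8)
The natural construction is a per-block farthest-first traversal: on input a block $X^{(i)}$, output $C^{(i)}$ consisting of $\min(3k,|X^{(i)}|)$ points obtained by repeatedly adding the point farthest from those already chosen (and $C^{(i)}=X^{(i)}$, with nothing to prove for that block, when $|X^{(i)}|\le 3k$). The two properties I will use are that there is a radius $R_i\ge 0$ (the covering radius of $X^{(i)}$ after $3k-1$ steps, positive only when $|X^{(i)}|>3k$) with (i) the points of $C^{(i)}$ pairwise at distance $\ge R_i$, and (ii) every point of $X^{(i)}$ within $R_i$ of $C^{(i)}$. The size bound $|C^{(i)}|\le 3k$ is immediate, so the content is the approximation guarantee. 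Fix blocks $X^{(1)},\dots,X^{(m)}$; write $X=\bigcup_i X^{(i)}$, $C=\bigcup_i C^{(i)}$, let $Z^*\subseteq X$ with $|Z^*|=k$ attain $\OPT:=\max_{|Z|=k}\MWM(Z)$ via an optimal perfect matching $M^*$, and set $R=\max_i R_i$; $k$ is even, as needed for a perfect matching. Fix a small constant $c_0$.

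\emph{Large radius.} If $R\ge c_0\OPT/k$, let $i$ be a block with $R_i=R$ (so $|X^{(i)}|>3k$ and $|C^{(i)}|=3k\ge k$). Any $k$-subset $Z$ of $C^{(i)}\subseteq C$ consists of points pairwise at distance $\ge R$, so every perfect matching of $Z$ has $k/2$ edges of length $\ge R$ and $\MWM(Z)\ge \tfrac k2 R\ge \tfrac{c_0}{2}\OPT$; output such a $Z$. \emph{Small radius, no collisions.} Otherwise $R<c_0\OPT/k$. Send each $z\in Z^*$ to its nearest coreset point $c(z)$ inside its own block, so $d(z,c(z))\le R$ by (ii). If $z\mapsto c(z)$ is injective then $Z:=c(Z^*)$ has $k$ points and, replacing each endpoint of each $M^*$-edge by its image (changing each edge length by at most $2R$), $\MWM(Z)\ge \OPT-\sum_z d(z,c(z))\ge (1-c_0)\OPT$. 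These two cases are the easy scaffolding; the work is the non-injective case.

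\emph{Small radius, collisions.} Partition $Z^*$ into the fibres of $c$ (``clusters''); points in one cluster lie in a common block within $R$ of a common coreset point, so each cluster has diameter $\le 2R$ and hence internal minimum-weight matching cost $O(R\cdot|\text{cluster}|)$, all clusters together carrying only $O(kR)=O(c_0\OPT)$ of such ``internal'' cost. First rewrite $M^*$ as a matching $M'$ of $Z^*$ with $\cost(M')\le \OPT+O(kR)$ in which no cluster has two vertices incident to edges leaving that cluster: while some cluster has two such vertices $u,v$ with outside partners $u',v'$, replace $(u,u'),(v,v')$ by $(u,v),(u',v')$; by the triangle inequality this increases the cost by $O(R)$ and strictly decreases the number of cross-cluster incidences, so it halts after $O(k)$ steps. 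In $M'$ the within-cluster edges carry $O(kR)$, hence the cross-cluster edges carry $\OPT-O(kR)$; routing each cross-cluster edge $(x,y)$ to $(c(x),c(y))$ — these are distinct, at most one per cluster — yields a partial matching supported on $T'\subseteq C$ with $|T'|\le k$ and cost $(1-O(c_0))\OPT$.

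The main obstacle — and where the factor $3$ in the coreset size is spent — is to complete $T'$ to a $k$-subset $Z\subseteq C$ without the minimum-weight matching of $Z$ collapsing. Each within-cluster vertex of $M'$ needs its own distinct coreset proxy, ideally one near its cluster's centre so the above bound survives. When a cluster has at least as many nearby coreset points as it has $Z^*$-vertices this is routine (take the nearest ones, whose pairwise distances are $O(R)$). The delicate subcase is an \emph{over-full} cluster, packing more $Z^*$-points into a radius-$R$ ball than the greedy placed coreset points near it; the plan is to argue that such a cluster, having diameter $O(R)$, contributes only $O(c_0\OPT)$ to $\OPT$, so its within-cluster vertices may be discarded and the shortfall in $|Z|$ made up from the $\ge 2k$ still-unused, pairwise $R_i$-separated points of the corresponding $C^{(i)}$ — then one checks, e.g.\ via $\MWM(Z)\ge \tfrac12\PF(Z)$ and the separation of those filler points, that no perfect matching of $Z$ can dodge paying for the $T'$-skeleton up to an $O(c_0\OPT)$ error. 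Choosing $c_0$ a small enough absolute constant gives $\MWM(Z)=\Omega(\OPT)$, finishing the proof. I expect this last step — extending the skeleton without collapse — to be the technically hardest part; the case analysis above is the easy part.
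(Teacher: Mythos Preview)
Two issues, one minor and one substantive.

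\textbf{The algorithm.} You are not analyzing Algorithm~\ref{alg:mwm_coreset}. The actual \textsc{MWMCoreset} runs GMM for only $k$ (not $3k$) steps to get centers $Y=\{y_1,\dots,y_k\}$, partitions $X$ into Voronoi cells $S_1,\dots,S_k$ around $Y$, sets $C\leftarrow Y$, and then $k/2$ times adds an arbitrary \emph{pair} of not-yet-chosen points lying in the \emph{same} cell. That pair-in-one-cell structure is not incidental.

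\textbf{The gap.} Your completion step is not just unfinished; the tool you reach for cannot close it. You do correctly obtain $T'\subseteq C$ with $\MWM(T')\ge(1-O(c_0))\OPT$ (and it really is $\MWM(T')$, not merely one matching on $T'$, since any matching of $T'$ routes back through $c^{-1}$ and the within-cluster edges of $M'$ to a perfect matching of $Z^*$). But padding $T'$ with pairwise $R_i$-separated filler can wipe out the matching cost. Take $T'=\{t_1,t_2\}$ with $\rho(t_1,t_2)\approx\OPT$ and add unused coreset points $f_1,f_2$ with $f_1$ within $R$ of $t_1$ and $f_2$ within $R$ of $t_2$; then $\MWM(T'\cup\{f_1,f_2\})\le 2R$ via the matching $(t_1,f_1),(t_2,f_2)$, while $f_1,f_2$ are happily $R_i$-separated from one another (indeed $\rho(f_1,f_2)\approx\OPT$). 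The inequality $\MWM(Z)\ge\tfrac12\PF(Z)$ is true but useless here, since $\PF(Z)$ is also $O(R)$. Separation of the filler points from \emph{each other} is the wrong invariant; what matters is how they sit relative to $T'$.

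What is missing is the paper's parity observation. Snap each $z\in Z^*$ to its nearest center $y_i^{(j)}$, forming a multiset $\tilde W$; then $\MWM(\tilde W)$ depends only on the parity $b_i^{(j)}$ of each multiplicity, because in any matching one may repeatedly replace edges $(y,y'),(y,y'')$ by $(y,y),(y',y'')$ at no extra cost until all duplicates are paired off. Hence it suffices to build $W'\subseteq C$ of size exactly $k$ whose snapped multiset has the \emph{same} parities as $\tilde W$; then $\MWM(\tilde W')=\MWM(\tilde W)$ and two triangle-inequality steps give $\MWM(W')\ge\OPT-2rk$. One builds $W'$ by starting from $\{y_i^{(j)}:b_i^{(j)}=1\}$ and repeatedly adding two unused coreset points from the \emph{same} cell --- which never disturbs any parity --- and pigeonhole on the $\ge k+1$ unused points of $C^{(j)}$ spread over $k$ cells guarantees such a pair exists. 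In your framework the correct completion is to add filler in \emph{close pairs} of diameter $O(r)$, so the duplicate-removal trick yields $\MWM(T'\cup F)\ge\MWM(T')-O(kr)$; well-separated filler is exactly the opposite of what is needed.
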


In all of our results, we obtain $O(1)$-approximation algorithms, whereas the previous best algorithms for all $3$ problems was an $O(\log k)$-approximation algorithm, meaning the diversity was at most $\Omega(\frac{1}{\log k})$ times the optimum. We remark that our composable coreset in Theorem \ref{thm:pseudoforest_coreset} has size $k^{1+\eps},$ for some arbitrarily small constant $\eps$ (to obtain a constant approximation) which is possibly suboptimal. We hence ask an open question as to whether an $O(1)$-approximate composable coreset for remote-pseudoforest, of size $O(k)$, exists. 

Finally, as an additional result we also prove an alternative method of obtaining an $O(1)$-approximate offline algorithm for remote-pseudoforest. Unlike~\cite{bhaskara2016pseudoforest}, which uses primal-dual relaxation techniques, our techniques are much simpler and are based on $\eps$-nets and dynamic programming. In addition, our result works for all $k \le n$, whereas the work of~\cite{bhaskara2016pseudoforest} assumes $k \le \frac{n}{3}$ and that $k$ is at least a sufficiently large constant. Also, our algorithm is deterministic, unlike~\cite{bhaskara2016pseudoforest}. We defer the statement and proof to \Cref{sec:offline_pseudoforest}.

\section{Preliminaries}

\subsection{Definitions and Notation}

We use $\rho(x, y)$ to represent the metric distance between two points $x$ and $y$. For a point $x$ and a set $S$, we define $\rho(x, S) = \min_{s \in S} \rho(x, s)$. Likewise, for two sets $S, T$, we define $\rho(S, T) = \min_{s \in S, t \in T} \rho(S, T)$.
We define the diameter of a dataset $X$ as $\diam(X) = \max_{x, y \in X} \rho(x, y).$

\paragraph{Costs and diversity measures.} For a set of points $Y = \{y_1, \dots, y_k\}$, we use $\div(Y)$, as a generic term to denote its diversity which we measure by the following cost functions.

If $k = |Y|$ is even, we define the minimum-weight matching cost $\MWM(Y)$ as the minimum total weight over all perfect matchings of $Y$. Equivalently,
\[\MWM(Y) := \min\limits_{\text{permutation } \pi: [k] \to [k]} \sum_{i=1}^{k/2} \rho(y_{\pi(2i)}, y_{\pi(2i-1)}).\]
Likewise, we define the pseudoforest cost $\PF(Y)$, also known as the sum-of-nearest-neighbor cost, of $Y$ as
\[\PF(Y) := \sum_{y \in Y} \rho(y, Y \backslash y).\]
Finally, we define the minimum spanning tree cost $\MST(Y)$ of $Y$ as the minimum total weight over all spanning trees of $Y$. Equivalently,
\[\MST(Y) = \min\limits_{G: \text{ spanning tree of } [k]} \sum_{e = (i, j) \in G} \rho(y_i, y_j).\]
The pseudoforest cost and minimum spanning tree cost do not require $Y$ to be even.

\begin{definition}[Diversity maximization] Given a dataset $X$, and a parameter $k$, the goal of the diversity maximization problem is to choose a subset $Y\subset X$ of size $k$ with maximum diversity, where in this work we focus on $\div(Y)=\MWM(Y)$ and $\div(Y)=\PF(Y)$.

We also define $\div_k(X)$ to be this maximum achievable diversity, i.e., $\div_k(X)=\max_{Y\subset X, |Y|=k}\div(Y)$. In particular we use $\MWM_k(X)$, or the \emph{remote-matching} cost of $X$, as $\max_{Y\subset X, |Y|=k}\MWM(Y)$, and define $\PF_k(X)=\max_{Y\subset X, |Y|=k}\PF(Y)$. These objectives are also known as \emph{$k$-matching} and \emph{$k$-pseudoforest}.
\end{definition}

For a specific diversity maximization objective $\div_k$ (such as $\MWM_k$), an $\alpha$-\emph{approximation algorithm} ($\alpha \ge 1$) for $\div$ is an algorithm that, given any dataset $X = \{x_1, \dots, x_n\}$, outputs some dataset $Z \subset X$ of size $k$ such that 
\[\div(Z) \ge \frac{1}{\alpha} \cdot \div_k(X) = \frac{1}{\alpha} \cdot \max_{Y \subset X: |Y| = k} \div(Y).\]

\begin{definition}[Composable coresets]We say that an algorithm $\mathcal{A}$ that acts on a dataset $X$ and outputs a subset $\mathcal{A}(X) \subset X$ forms an $\alpha$-\emph{approximate composable coreset} ($\alpha \ge 1$) for $\div$ if, for any collection of datasets $X^{(1)}, \dots, X^{(m)},$
we have
\[\div_k\left(\bigcup_{i=1}^m \mathcal{A}(X^{(i)})\right) \ge \frac{1}{\alpha} \cdot \div_k\left(\bigcup_{i=1}^m X^{(i)}\right).\]
\end{definition}

Throughout the paper, for our coreset construction algorithms, we use $X^{(1)},\cdots,X^{(m)}$ to denote the collection of the data sets.
Note that since $\mathcal{A}(X^{(i)}) \subset X^{(i)},$ the diversity of the combined coresets is always at most the diversity of the combined original datasets. We also say that the coreset is of size $k'$ if $|\mathcal{A}(X^{(i)})| \le k'$ for each $X^{(i)}$. We desire for the size $k'$ to only depend (polynomially) on $k$, and not on $n$.

\subsection{The GMM Algorithm}

The GMM algorithm~\cite{gonzalez1985clustering} is an algorithm that was first developed for the $k$-center clustering, but has since been of great use in various diversity maximization algorithms and dispersion problems, starting with~\cite{ravi1991dispersion}. The algorithm is a simple greedy procedure that finds $k$ points $Y$ in a dataset $X$ that are well spread out. It starts by picking an arbitrary point $y_1 \in X$. Given $y_1, \dots, y_p$ for $p < k,$ it chooses $y_{p+1}$ as the point that maximizes the distance $\rho(y, \{y_1, \dots, y_p\})$ over all choices of $y \in X$. We provide pseudocode in Algorithm \ref{alg:gmm}.

\begin{algorithm}[tb]
   \caption{The GMM Algorithm}
   \label{alg:gmm}
\begin{algorithmic}[1]
    \STATE {\bfseries Input:} data $X= \{x_1, \dots, x_n\}$, integer $k$.
    \STATE $y_1$ is an arbitrary point in $X$.
    \STATE \textbf{Initialize} $Y \leftarrow \{y_1\}$.
    \FOR{$p = 2$ to $k$}
        \STATE $y_p \leftarrow \arg\max_{y \in X} \rho(y, Y)$, i.e., $y_p$ is the furthest point in $X$ from the current $Y = \{y_1, \dots, y_{p-1}\}.$
        \STATE $Y \leftarrow Y \cup \{y_p\}$.
    \ENDFOR
    \STATE \textbf{Return} $Y$.
\end{algorithmic}
\end{algorithm}

The GMM algorithm serves as an important starting point in many of our algorithms, as well as in many of the previous state-of-the-art algorithms for diversity maximization.
The GMM algorithm has the following crucial property.

\begin{proposition} \label{prop:gmm}
    Suppose we run GMM for $k$ steps to produce $Y = \{y_1, \dots, y_k\}$. Let $r = \max_{x \in X} \rho(x, Y)$. Then, every pair of points $y_i, y_j$ has $\rho(y_i, y_j) \ge r$.
\end{proposition}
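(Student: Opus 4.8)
The plan is to exploit the monotonicity of the greedy gains of GMM. First I would fix a pair $y_i, y_j$ with $i < j$ and recall that at step $j$ the algorithm has $Y_{j-1} := \{y_1, \dots, y_{j-1}\}$ and picks $y_j := \arg\max_{y \in X} \rho(y, Y_{j-1})$. Since $i < j$, we have $y_i \in Y_{j-1}$, and therefore $\rho(y_i, y_j) \ge \rho(y_j, Y_{j-1})$ directly from the definition $\rho(y_j, Y_{j-1}) = \min_{s \in Y_{j-1}} \rho(y_j, s)$. So it suffices to show $\rho(y_j, Y_{j-1}) \ge r$.

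Next I would observe that the greedy choice gives $\rho(y_j, Y_{j-1}) = \max_{y \in X} \rho(y, Y_{j-1})$, and that for any fixed point $y$ the quantity $\rho(y, Y_{j-1})$ can only decrease as we add more points, since $Y_{j-1} \subseteq Y$ implies $\rho(y, Y_{j-1}) \ge \rho(y, Y)$. In particular, let $x^* \in X$ be a point realizing $r$, i.e. $\rho(x^*, Y) = r$. Then
\[
\rho(y_j, Y_{j-1}) \;=\; \max_{y \in X} \rho(y, Y_{j-1}) \;\ge\; \rho(x^*, Y_{j-1}) \;\ge\; \rho(x^*, Y) \;=\; r.
\]
Combining this with the previous paragraph yields $\rho(y_i, y_j) \ge \rho(y_j, Y_{j-1}) \ge r$, which is exactly the claim.

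There is no real obstacle here; the argument is a two-line chain of inequalities. The only subtlety worth noting is that $r$ is defined only after all $k$ points have been chosen, so one cannot directly equate it with the gain at any particular step; instead one compares the gain at step $j$ with the "would-be gain" against the final set $Y$, and the containment $Y_{j-1} \subseteq Y$ makes that comparison go the right way. (In the degenerate case where GMM is forced to repeat a point, $r = 0$ and the statement is vacuous, so we may assume the $y_i$ are distinct.)
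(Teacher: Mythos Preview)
Your argument is correct and is the standard proof of this well-known property of GMM. The paper itself states the proposition without proof, treating it as a known fact about the greedy farthest-first traversal; your chain of inequalities $\rho(y_i,y_j)\ge \rho(y_j,Y_{j-1})=\max_{y\in X}\rho(y,Y_{j-1})\ge \rho(x^*,Y_{j-1})\ge \rho(x^*,Y)=r$ is exactly the expected justification.
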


\section{Composable Coreset Constructions} \label{sec:composable_coreset}

In this section, we design algorithms for constructing $O(1)$-approximate composable coresets for remote-pseudoforest and remote-matching. In this section, we provide a technical overview and pseudocode for the algorithms, but defer the proof (and algorithm descriptions in words) to \Cref{sec:coreset_pseudoforest} (for remote-pseudoforest) and \Cref{sec:coreset_matching} (for remote-matching).

\subsection{Coreset Constructions: Technical Overview} \label{subsec:coreset_overview}

For both remote-pseudoforest and remote-matching, we start by considering the heuristic of GMM, where in each group we greedily select $k$ points. For simplicity, suppose we only have one group for now. After picking the set $Y = \{y_1, \dots, y_k\}$ from GMM, define $r$ to be the maximum distance $\rho(x, Y)$ over all remaining points $x$. Then, Proposition \ref{prop:gmm} implies that all distances $\rho(y_j, y_{j'}) \ge r$ for $j, j' \le k$. Hence, running GMM will ensure we have a set of $k$ points with minimum-weight-matching or pseudoforest cost at least $\Omega(k \cdot r).$ Hence, we only fail to get an $O(1)$-approximation if the optimum remote-matching (or remote-pseudoforest) cost is much larger than $k \cdot r$.

However, in this case, note that every point $x \in X$ satisfies $\rho(x, Y) \le r,$ meaning every point $x$ is within $r$ of some $y_i$. This suggests that if the optimum cost is $\omega(k \cdot r)$, achieved by some points $z_1, \dots, z_k,$ we could just map each $z_i$ to its closest $y_i$, and this would change each distance by no more than $O(k \cdot r)$. Hence, we can ostensibly use the GMM points to obtain a cost within $O(k \cdot r)$ of the right answer, which is within an $\Omega(1)$ (in fact a $1-o(1)$) multiplicative factor! Additionally, this procedure will compose nicely, because if we split the data into $m$ components $X^{(1)}, \dots, X^{(m)},$ each with corresponding radius $r^{(1)}, \dots, r^{(m)}$, then each individual coreset has cost at least $r^{(j)}$ (so the combination has cost at least $\max r^{(i)}$), whereas we never move a point more than $r^{(i)} \le \max r^{(i)}$.

The problem with this, however, is that we may be using each $y_j$ multiple times: for instance, if both $z_1$ and $z_2$ are closest to $y_1,$ we would use $y_1$ twice. Our goal is to find a subset of $k$ points, meaning we cannot duplicate any point.

Note that in this duplication, it is never necessary to duplicate a point more than $k$ times. So, if we could somehow pick $k$ copies of each $y_i$, we would have a coreset. However, note that it is not crucial for each $z_i$ to be mapped to its closest GMM point $y_j$: any point within distance $r$ of $y_j$ is also acceptable. Using this observation, it suffices to pick $k$ points among those closest to $y_j$ if possible - if there are fewer than $k$ points, picking all of the points is sufficient. It is also important to choose all of $Y$, in the case where the optimum cost is only $O(k \cdot r)$. Together, this generates a composable coreset for both remote-matching and remote-pseudoforest, of size only $k^2$.

\paragraph{Improving the coreset for remote-matching.} In the case of remote-matching, we can actually improve this to $O(k)$. The main observation here is to show that if a set $Z$ of size $k$ had two identical points, getting rid of both of them does not affect the minimum-weight-matching cost. (This observation does \emph{not} hold for pseudoforest). One can similarly show that if the two points were close in distance, removing both of the points does not affect the matching cost significantly. This also implies we can, rather than removing both points, move them both to a new location as long as they are close together. At a high level, this means that there must exist a near-optimal $k$-matching that only has $O(1)$ points closest to each $y_j$: as a result we do not have to store $k$ points for each $y_j$: only $O(1)$ points suffice. 

\paragraph{Improving the coreset for remote-pseudoforest.} In the case of remote-pseudoforest, the improvement is more involved. Consider a single group, and suppose GMM gives us the set $Y = \{y_1, \dots, y_k\}$. Let $X_i$ represent the set of points in $X$ closest to $Y_i$. The first observation we make is that if the optimal solution had multiple points in a single $X_i$, each such point can only contribute $O(r)$ cost. Assuming that the optimum cost is $\omega(k \cdot r)$, it may seem sufficient to simply pick $1$ point in each $Y_i$ for the coreset, as we can modify the optimum solution by removing points in $X_i$ if there are two or more of them. While this will allow us to obtain a set with nearly optimum cost, the problem is the set has size less than $k$. So, we need to add additional points while preventing the pseudoforest cost from decreasing by too much.

To develop intuition for how this can be accomplished, we first suppose that $|X_1|, |X_2| \ge k$. In this case, we could choose the coreset as $X_1 \cup X_2 \cup Y$. We know there is a subset $Z \subset Y$ with pseudoforest cost close to optimum, though $|Z|$ may be much smaller than $k$. However, since $y_1, y_2$ are far away from each other (they were chosen first in the greedy procedure of GMM), so all points in $X_1$ and all points in $X_2$ are far from each other. That means that if we randomly choose either $X_1$ or $X_2$, and add enough points from the chosen set so that we have $k$ points, each point $y \in Y$, with $50\%$ probability, is not close to the new points added (because every point $y$ must either be far from $X_1$ or far from $X_2$). Thus, the expected distance from $y$ to the closest new point is large.

In general, it is not actually important that the points come from $X_1$ and $X_2$: we just need two sets $S, T$ of $k$ points such that $\rho(S, T)$, the minimum distance between $s \in S$ and $t \in T$, is large. Then, any point $y$ cannot be close to points in both $S$ and $T$. This also composes nicely, because to find the final set of $k$ points, we only need there to be two sets $S, T$ throughout the union of the coresets with large $\rho(S, T)$.

To find large $S, T$ with large $\rho(S, T)$, we will require $|X| \ge k^{1+\eps}$ for some small constant $\eps$. For simplicity, we focus on the case when $|X| \ge k^{1.5}$. Suppose all but $k$ points are in some ball $B$ of radius $r$. If there exists $x$ that is within distance $r/10$ of $k$ points (we can make $S$ these $k$ points), then all points in $S$ must be far away from the furthest $k$ points from $x$ (which we can set as $T$), or else we could have found a smaller ball $B'$.
Otherwise, there are two options.
\begin{enumerate}
    \item The majority of points $x \in X$ are within $r/100$ of at least $\sqrt{k}$ other points, but no $x \in X$ is within $r/10$ of at least $k$ other points. Intuitively (we will make this intuition formal in \Cref{sec:coreset_pseudoforest}), a random set $S_0$ of size $O(\sqrt{k})$ should be within $r/100$ of at least $k$ other points in total (we can make $S$ these $k$ points), but there are at least $|X| - k \cdot |S_0| \ge k$ points (we can make $T$ these points) that are not within $r/10$ of $S_0$.
    \item The majority of points aren't within $r/100$ of even $O(\sqrt{k})$ points. In this case, we can pick $k$ of these points to form $S$, and they will not be within $r/100$ of at least $|X| - O(\sqrt{k}) \cdot |S| \ge k$ points. We make this intuition formal and prove the result for the more general $k^{1+\eps}$.
\end{enumerate}

\subsection{Algorithm Pseudocode}

We provide pseudocode for the remote-pseudoforest coreset in Algorithm \ref{alg:pf_coreset} and for the remote-matching coreset in Algorithm \ref{alg:mwm_coreset}. The proofs, as well as algorithm descriptions in words, are deferred to \Cref{sec:coreset_pseudoforest} (for remote-pseudoforest) and \Cref{sec:coreset_matching} (for remote-matching).

\begin{algorithm}[tb]
   \caption{\textsc{PFCoreset}: $O(1)$-approximate remote-pseudoforest composable coreset algorithm}
   \label{alg:pf_coreset}
\begin{algorithmic}[1]
    \STATE {\bfseries Input:} data $X = \{x_1, \dots, x_n\}$, integer $k$, parameter $\eps \in (0, 1]$.
    \IF{$n < 2k^{1+\eps}+k$}
        \STATE \textbf{Return} $X$.
    \ENDIF
    \STATE $Y = \{y_1, \dots, y_k\} \leftarrow \GMM(x_1, \dots, x_n, k)$.
    \FOR{$i=1$ to $n$}
        \STATE $\tilde{r}_i \leftarrow $ $(k+1)^{\text{th}}$ largest value of $\rho(x_i, x_j)$ across all $j \le n$.
    \ENDFOR
    \STATE $\tilde{r} \leftarrow \min_i \tilde{r}_i$, $x \leftarrow \arg\min_i \tilde{r}_i$
    \STATE $U \leftarrow$ $k$ furthest points in $X$ from $x$.
    \STATE $P \leftarrow$ $k$ arbitrary points within distance $\tilde{r}$ of $x$.
    \STATE $S, T \leftarrow \textsc{FindST}(X, k, \eps, \tilde{r})$. \COMMENT{See \Cref{alg:find_ST}}
    \STATE \textbf{Return} $C \leftarrow P \cup S \cup T \cup U \cup Y$.
\end{algorithmic}
\end{algorithm}

\begin{algorithm}[tb]
   \caption{\textsc{MWMCoreset}: $O(1)$-approximate remote-matching composable coreset algorithm}
   \label{alg:mwm_coreset}
\begin{algorithmic}[1]
    \STATE {\bfseries Input:} data $X = \{x_1, \dots, x_n\}$, integer $k$.
    \IF{$n \le 3k$}
        \STATE \textbf{Return} $X$
    \ELSE
        \STATE $Y = \{y_1, \dots, y_k\} \leftarrow \GMM(x_1, \dots, x_n, k)$.
        \STATE \textbf{Initialize} $S_1, \dots, S_k \leftarrow \emptyset$.
        \FOR{$i = 1$ to $n$}
            \STATE Add $i$ to $S_j$ for $j = \arg\min \rho(x_i, y_j)$.
        \ENDFOR
        \STATE \textbf{Initialize} $C \leftarrow Y$.
        \FOR{$i=1$ to $k/2$}
            \STATE \textbf{Find} $x, x' \in X \backslash C$ such that $x, x'$ are in the same $S_j$.
            \STATE $C \leftarrow C \cup \{x, x'\}$.
        \ENDFOR
        \STATE \textbf{Return} $C$.
    \ENDIF
\end{algorithmic}
\end{algorithm}

\section{Offline Remote-Matching Algorithm} \label{sec:offline_matching_body}

In this section, we design $O(1)$-approximate offline algorithms for remote-matching. In this section, we first provide a technical overview, then the algorithm description and pseudocode, and finally we provide the full analysis.

\subsection{Technical Overview} \label{subsec:offline_overview}

The remote-matching offline algorithm first utilizes some simple observations that we made in \Cref{subsec:coreset_overview}. Namely, we may assume the largest minimum-weight matching cost of any subset of $k$ points is $\omega(k \cdot r)$, or else GMM provides an $O(1)$-approximation. Next, if the optimum solution was some $Z = \{z_1, \dots, z_k\},$ we can again consider mapping each $z_i$ to its closest $y_j$, at the cost of having duplicates. However, as noted in \Cref{subsec:coreset_overview}, we may delete a point twice without affecting the matching cost: this means we can keep deleting a point twice until each $y_j$ is only there $0$ times (if the total number of $z_i$'s closest to $y_j$ was even) or $1$ time (if the total number of $z_i$'s closest to $y_j$ was odd).

However, we have no idea what $Z$ actually is, so we have no idea whether each $y_j$ should be included or not. However, this motivates the following simpler problem: among the $k$ points $Y$, choose a (even-sized) subset of $Y$ maximizing the matching cost.

One attempt at solving this problem is to choose $\{y_1, \dots, y_p\}$ for some $p \le k$: this will resemble an argument in~\cite{chandra2001diversitymax}. The idea is that if we define $r_p$ to be the maximum value $\rho(x, \{y_1, \dots, y_p\})$ over all $x \in X$, the same argument as \Cref{prop:gmm} implies that all points among $y_1, \dots, y_p$ are separated by at least $r_p$. Hence, for the best $p$ we can obtain matching cost $\Omega(\max_{1 \le p \le k} p \cdot r_p)$. Conversely, it is known that the minimum-weight-matching cost of any set of points $Z$ is upper-bounded by the cost of the minimum spanning tree of $Z$. But the minimum spanning tree has cost at most $\sum_{p=1}^k r_p$, since we can create a tree by adding an edge from each $y_{p+1}$ to its closest center among $y_1, \dots, y_p$, which has distance $r_p$. Since $\max (p \cdot r_p) \ge \Omega\left(\frac{1}{\log k}\right) \cdot \sum_{p=1}^k r_p$ (with equality for instance if $r_p = \frac{1}{p}$), we can obtain an $O(\log k)$-approximation.

For simplicity, we focus on the case where $r_p = \Theta(1/p)$ for all $1 \le p \le k$. We would hope that either the minimum spanning tree cost of $Y$, which we call $\MST(Y)$, is actually much smaller than $\log k,$ or there is some alternative selection to obtain matching cost $\Omega(\log k)$ rather than $O(1)$. Suppose that $\MST(Y) = \Omega(\log k)$: furthermore, for simplicity suppose the $p$th largest edge of the tree has weight $\frac{1}{p}.$ If we considered the graph on $Y$ connecting two points if their distance is less than $\frac{1}{p}$, it is well-known that the graph must therefore split into $p$ disconnected components.

Now, for some fixed $p$ suppose that we chose a subset $Z$ of $Y$ such that each connected component in the graph above has an odd number of points in $Z$. Then, any matching must send at least one point in each $Z \cap CC_j$ (where $CC_j$ is the $j$th connected component) to a point in a different connected component, forcing an edge of weight at least $\frac{1}{p}$. Since each of $p$ connected components has such a point, together we obtain weight at least $1$. In addition, if we can ensure this property for $p = 2, 4, 8, 16 \dots, k$, we can in fact get there must be at least $2^i$ edges of weight $1/2^i,$ making the total cost $\Omega(\log k)$, as desired.

While such a result may not be possible exactly, it turns out that even a \emph{random} subset of $Y$ satisfies this property asymptotically! Namely, if we choose each point $y \in Y$ to be in $Z$ with $50\%$ probability, each $CC_j$ is odd with $50\%$ probability. So in expectation, for all $p$, the number of connected components of odd size is $p/2$. Even if we make sure $Z$ has even size, this will still be true, replacing $p/2$ with $\Omega(p)$. Since this is true for all $p$ in expectation, by adding over powers of $2$ for $p$, we will find $k$ points with $\Omega(\log k)$ matching cost in expectation.

\subsection{Algorithm Desciption and Pseudocode}

Given a dataset $X = \{x_1, \dots, x_n\},$ we recall that the goal of the remote-matching problem is to find a subset $Z = \{z_1, \dots, z_k\} \subset X$ of $k$ points, such that the minimum-weight matching cost of $Z$, $\MWM(Z)$, is approximately maximized.
In this subsection, we describe our $O(1)$-approximate remote-matching algorithm. We also provide pseudocode in \Cref{alg:mwm_offline}. 

\paragraph{Algorithm Description:} The algorithm proceeds as follows. First, run the GMM algorithm for $k$ steps, to obtain $k$ points $Y = \{y_1, \dots, y_k\} \subset X$. Define the subsets $S_1, \dots, S_k$ as a partitioning of $X,$ where $x \in X$ is in $S_i$ if $y_i$ is the closest point to $x$ in $Y$. (We break ties arbitrarily.) We use the better of the following two options, with the larger minimum-weight matching cost.
\begin{enumerate}
    \item Simply use $Y = \{y_1, \dots, y_k\}$.
    \item Let $\hat{Z} \subset Y$ be a uniformly random subset of $Y$. Initialize $W$ to $\hat{Z}$ if $|\hat{Z}|$ is even, and otherwise initialize $W$ to $\hat{Z} \backslash \hat{z}$ for some arbitrary $\hat{z} \in \hat{Z}$. Now, if there exist two points not in $W \cup Y$ but in the same subset $S_i$, add both of them to $W$. Repeat this procedure until $|W| = k.$
\end{enumerate}
We will use whichever of $Y$ or $W$ has the larger minimum-weight matching cost. Since minimum-weight matching can be computed in polynomial time, we can choose the better of these two in polynomial time.

In Theorem \ref{thm:matching_offline}, we assume $n \ge 3k.$ Because of this assumption, if $|W| < k$, then $|W \cup Y| \le 2k-1$, which means $|X \backslash (W \cup Y)| \ge k+1$. Hence, by Pigeonhole Principle, two of these points must be in the same set $S_i,$ which means that the procedure described above is indeed doable.

\begin{algorithm}[tb]
   \caption{\textsc{MWMOffline}: $O(1)$-approximate remote-matching algorithm}
   \label{alg:mwm_offline}
\begin{algorithmic}[1]
    \STATE {\bfseries Input:} data $X = \{x_1, \dots, x_n\}$, even integer $k$.
    \STATE $Y = \{y_1, \dots, y_k\} \leftarrow \GMM(x_1, \dots, x_n, k)$.
    \STATE \textbf{Initialize} $S_1, \dots, S_k \leftarrow \emptyset$.
    \FOR{$i = 1$ to $n$}
        \STATE Add $i$ to $S_j$ if $j = \arg\min \rho(x_i, y_j)$.
    \ENDFOR
    \STATE $Z \leftarrow$ random subset of $Y$.
    \IF{$|Z|$ is odd}
        \STATE{Remove an arbitrary element from $Z$}
    \ENDIF
    \STATE \textbf{Initialize} $W \leftarrow Z$.
    \WHILE{$|W| < k$}
        \STATE Find some $x, x' \in X \backslash (W \cup Y),$ such that $x, x'$ are in the same subset $S_j$.
        \STATE Add $x, x'$ to $W$.
    \ENDWHILE
    \STATE \textbf{Return} whichever of $Y, W$ has larger minimum-weight matching cost.
\end{algorithmic}
\end{algorithm}

\subsection{Analysis}

The first ingredient in proving Theorem \ref{thm:matching_offline} is the following lemma, which shows that assuming the random subset $Z$ we chose in Line 7 of \Cref{alg:mwm_offline} is sufficiently good, the algorithm produces an $O(1)$-approximation.

\begin{lemma} \label{lem:mwm_lem_2}
    For some constant $\frac{1}{2} \ge \alpha > 0$, suppose that $\MWM(Z) \ge \alpha \cdot \max_{Z' \subset Y: |Z'| \text{ is even}} \MWM(Z')$. Then, Algorithm \ref{alg:mwm_offline} provides a $\frac{4}{\alpha}$-approximation for the remote-matching problem.
\end{lemma}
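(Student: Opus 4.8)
The plan is to compare the matching cost of the set $W$ output by the algorithm with the optimum $\MWM_k(X)$, using two structural facts. First, since $W$ starts as the (even-sized trimmed) random subset $Z \subseteq Y$ and then adds pairs $x,x'$ lying in a common $S_j$, I would show that $\MWM(W) \ge \MWM(Z) - O(1)\cdot(\text{added cost})$, where the added cost is controlled by the GMM radius. Concretely, each added pair $x,x'$ with $x,x' \in S_j$ satisfies $\rho(x,x') \le \rho(x,y_j) + \rho(y_j,x') \le 2r$, where $r = \max_{x\in X}\rho(x,Y)$: so matching $x$ to $x'$ directly costs at most $2r$ per pair, i.e. at most $k\cdot r$ total over the $\le k/2$ pairs. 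Hence $\MWM(W) \ge \MWM(Z) - k r \ge \alpha \cdot \max_{Z'\subseteq Y,\ |Z'|\text{ even}} \MWM(Z') - kr$ (actually one does not even subtract; one just keeps the $Z$-restricted matching and adds the cheap pairwise edges — so $\MWM(W)$ on its own is at least $\MWM(Z)$ only if we are careful, but the clean statement is that a matching of $W$ of cost $\le \MWM(Z) + kr$ exists, which is the wrong direction; instead I should lower bound $\MWM(W)$ directly). Let me restate: the key point is a \emph{lower} bound on $\MWM(W)$. Since any perfect matching of $W$ restricts to a matching-like structure on $Z$ up to $O(kr)$ corrections — or more simply, since $W \supseteq Z$ with $|W\setminus Z|$ points each within $r$ of $Y$ — one argues that $\MWM(W) \ge \MWM(Z) - O(kr)$ is not what is needed either; rather, I will directly produce a good $k$-subset. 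The cleanest route: the algorithm returns $\max(\MWM(Y),\MWM(W))$, so it suffices to show $\max(\MWM(Y),\MWM(W)) \ge \tfrac{\alpha}{4}\MWM_k(X)$.

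The second, and main, ingredient is the deterministic comparison between $\max_{Z'\subseteq Y,\ |Z'|\text{ even}}\MWM(Z')$ and $\MWM_k(X)$, together with handling the ``$k r$ is small'' regime. I would split on whether $\MWM_k(X) \ge 100\, k r$ or not. If $\MWM_k(X) < 100 kr$, then since $\MWM(Y) \ge (k/2)\cdot r = \Omega(k r)$ by Proposition \ref{prop:gmm} (all pairs in $Y$ are $\ge r$ apart, so every perfect matching on $Y$ has cost $\ge (k/2) r$), option 1 already gives an $O(1)$-approximation and we are done regardless of $\alpha$. In the other regime $\MWM_k(X) \ge 100 kr$: take an optimal $k$-set $Z^\star = \{z_1,\dots,z_k\}$, map each $z_i$ to its nearest $y_{j(i)} \in Y$ (distance $\le r$), obtaining a multiset on $Y$. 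By the triangle inequality, replacing the $z_i$'s by their images changes the matching cost by at most $2kr$, so the multiset $M$ on $Y$ has $\MWM(M) \ge \MWM_k(X) - 2kr \ge \tfrac{1}{2}\MWM_k(X)$ (using $100kr$). Now invoke the ``delete two copies'' observation sketched in the overview: removing two identical points from a point multiset does not decrease — in fact does not change — the minimum-weight matching cost (pair them to each other at cost $0$, or note the cost is governed by the optimal pairing of the rest). Iterating, we reduce $M$ to an actual \emph{subset} $Z' \subseteq Y$, with $|Z'| \equiv |M| = k \pmod 2$ so $|Z'|$ even, and $\MWM(Z') \ge \MWM(M) \ge \tfrac12 \MWM_k(X)$. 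Therefore $\max_{Z'\subseteq Y,\ |Z'|\text{ even}}\MWM(Z') \ge \tfrac12\MWM_k(X)$, and by hypothesis $\MWM(Z) \ge \tfrac{\alpha}{2}\MWM_k(X)$.

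Finally I would close the loop: given $\MWM(Z) \ge \tfrac{\alpha}{2}\MWM_k(X)$, I need $\MWM(W) \ge \tfrac{\alpha}{4}\MWM_k(X)$ (then the returned set, being the better of $Y$ and $W$, also achieves this). For this I use that $W \supseteq Z$ and $W\setminus Z$ consists of $\le k/2$ pairs each internally within $2r$ of each other (same $S_j$). Consider the optimal perfect matching of $W$: it has some cost $c = \MWM(W)$; I want $c \ge \MWM(Z) - O(kr)$. Build a perfect matching of $Z$ from it by a standard ``shortcutting'' / alternating-path argument: delete the $W\setminus Z$ vertices and re-pair their former partners among themselves, each re-pairing costing at most a sum of a bounded number of original edges plus $O(r)$ terms — the total extra cost incurred is $O(kr)$. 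Hence $\MWM(Z) \le \MWM(W) + O(kr)$, so $\MWM(W) \ge \MWM(Z) - O(kr) \ge \tfrac{\alpha}{2}\MWM_k(X) - O(kr)$; combined with $\MWM_k(X) \ge 100kr$ (with a large enough constant, or by again falling back to option 1 when this fails), we get $\MWM(W) \ge \tfrac{\alpha}{4}\MWM_k(X)$. The main obstacle is this last shortcutting step: making the ``re-pair the orphaned partners'' bound genuinely $O(kr)$ rather than accumulating uncontrolled chains. I expect the right way is to observe each added pair $\{x,x'\}$ is within $2r$, so in the optimal matching of $W$ either they are matched to each other (removing them is free) or to two other vertices $a,b$; swapping to match $a$ with $b$ changes cost by at most $\rho(a,b) \le \rho(a,x)+\rho(x,x')+\rho(x',b) \le \rho(a,x)+\rho(x',b)+2r$, and $\rho(a,x)+\rho(x',b)$ are original matching edges being removed — so the net change over all $\le k/2$ pairs telescopes to an increase of at most $2r \cdot (k/2) = kr$ beyond edges we delete, giving $\MWM(Z) \le \MWM(W) + kr$ cleanly.
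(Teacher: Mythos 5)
Your proposal follows essentially the same route as the paper's proof: split on whether the optimum $M$ is small compared to $kr$ (in which case the GMM set $Y$ itself, with all pairwise distances at least $r$, suffices), and otherwise map the optimal $k$-set to its nearest GMM centers, collapse duplicate pairs at no cost to obtain an even subset $Z_0\subseteq Y$ with $\MWM(Z_0)\ge M-O(kr)$, invoke the hypothesis on $Z$, and finally show $\MWM(Z)\le \MWM(W)+kr$; your closing exchange argument (swap $(a,x),(x',b)$ for $(a,b),(x,x')$, paying at most $2r$ per added pair) is the same triangle-inequality swap the paper uses, just applied directly to $W$ rather than through the multiset $\tilde W$ and its parity correspondence with $Z$, so it is not a genuinely different method.

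The one point to fix is quantitative: with a fixed threshold such as $M\ge 100kr$ you do not obtain the stated $\frac{4}{\alpha}$ factor for all $\frac12\ge\alpha>0$ --- in the low regime $\MWM(Y)\ge kr/2$ only gives a $200$-approximation (weaker than $\frac{4}{\alpha}$ when $\alpha>\frac{1}{50}$), and for very small $\alpha$ the additive $O(kr)$ losses in the high regime are not absorbed by $\frac{\alpha}{2}M$. The threshold must scale as $\Theta(kr/\alpha)$; the paper takes $M\ge 2\alpha^{-1}kr$, under which $\MWM(Y)\ge kr/2\ge\frac{\alpha}{4}M$ in the low case and $\MWM(W)\ge \alpha M-(1+\alpha)kr\ge\frac{\alpha}{4}M$ in the high case (using the tighter accounting that each matched pair's length changes by at most $2r$, so the total shift over $k/2$ pairs is $kr$, not $2kr$). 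With that $\alpha$-dependent split your outline goes through and matches the paper.
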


\begin{proof}
    Let $M$ be the optimal remote-matching cost. Let $r$ be the maximum distance from any point in $X \backslash Y$ to its closest point in $Y$. Note that $\rho(y_i, y_j) \ge r$ for all $i, j \le k,$ by Proposition \ref{prop:gmm}.
    
    First, suppose that $M \le 2 \alpha^{-1} \cdot r \cdot k$. In this case, because every pair in $Y$ has pairwise distance at least $r,$ we have $\MWM(Y) \ge r \cdot \frac{k}{2}.$ Hence, $\MWM(Y) \ge \frac{\alpha}{4} \cdot M,$ which means we have a $\frac{4}{\alpha}$-approximation.

    Alternatively, suppose $M \ge 2 \alpha^{-1} \cdot r \cdot k.$ Let $W_0 \subset X$ be the set of $p$ points that achieves this, i.e., $\MWM(W_0) = M$. Consider the following multiset $\tilde{W}_0$ of size $p$ in $Y$, where each point in $W_0$ is mapped to its closest center in $Y$ (breaking ties in the same way as in the algorithm). Then, every pair of distances between $W_0$ and $\tilde{W}_0$ changes by at most $2r.$ This means \emph{every} matching has its cost change by at most $\frac{k}{2} \cdot 2r = rk$, so $\MWM(\tilde{W}_0) \ge M- r k.$

    Now, let $Z_0 \subset Y$ be the set of points where $y_i \in Z_0$ if and only if $y_i$ is in $\tilde{W}_0$ an odd number of times. Then, $\MWM(\tilde{W}_0) = \MWM(Z_0)$. To see why, first note that $\MWM(\tilde{W}_0) \le \MWM(Z_0)$ since we can convert any matching of $Z_0$ to a matching of $\tilde{W}_0$ by simply matching duplicate points in $\tilde{W}_0$ until only $Z_0$ is left. To see why $\MWM(\tilde{W}_0) \ge \MWM(Z_0)$, note that if an optimal matching of $\tilde{W}_0$ connected some copy of $y$ to a point $y' \neq y$ and another copy of $y$ to a point $y'' \neq y$, we can always replace the edges $(y, y')$ and $(y, y'')$ with $(y, y)$ and $(y', y''),$ which by Triangle inequality will never increase the cost. We may keep doing this until a maximal number of duplicate points are matched together, and only one copy of each element in $Z_0$ will be left. Hence, we have 
\begin{equation} \label{eq:main_matching_1}
    \MWM(Z_0) = \MWM(\tilde{W}_0) \ge M-rk.
\end{equation}

    Similarly, let $Z, W$ be the sets found in the algorithm described above, and let $\tilde{W}$ be the multiset formed by mapping each point in $W$ to its nearest center in $Y$. As in the case with $Z_0$ and $\tilde{W}_0,$ we have that for $Z$ and $\tilde{W}$, a point $z$ is in $Z$ if and only if $z$ is in $\tilde{W}$ an odd number of times. Hence, $\MWM(\tilde{W}) = \MWM(Z)$. Likewise, each point in $\tilde{W}$ has distance at most $r$ from its corresponding point in $W$, which means $\MWM(W) \ge \MWM(\tilde{W}) - rk$. Hence, we have 
\begin{equation} \label{eq:main_matching_2}
    \MWM(Z) = \MWM(\tilde{W}) \le \MWM(W)+rk.
\end{equation}

    Overall, $\MWM(Z) \ge \alpha \cdot \max_{Z \subset Y: |Z| \text{ is even}} \MWM(Z) \ge \alpha \cdot \MWM(Z_0),$ so
\begin{align*}
    \MWM(W) &\ge \MWM(Z)-rk \ge \alpha \cdot \MWM(Z_0) - rk \\
    &\ge \alpha \cdot M - (1+\alpha) rk.
\end{align*}
    But note that $M \ge 2 \alpha^{-1} rk,$ which means that $(1+\alpha) rk \le \frac{\alpha(1+\alpha)}{2} \cdot M \le \frac{3}{4} \cdot \alpha \cdot M$. Hence, $\MWM(W) \ge \frac{\alpha}{4} \cdot M$, which again means we have a $\frac{4}{\alpha}$-approximation.
\end{proof}

The main technical lemma that we will combine with Lemma \ref{lem:mwm_lem_2} shows that $Z$ has the desired property. We now state the lemma, but we defer the proof slightly, to \Cref{subsec:technical_lemma}.
We remark that the proof roughly follows the intuition described at the end of \Cref{subsec:offline_overview}.

\begin{lemma} \label{lem:mwm_main}
    Let $\hat{Z}$ be a random subset of $Y$ where each element is independently selected with probability $1/2$. If $|\hat{Z}|$ is even, we set $Z = \hat{Z}$, and if $|\hat{Z}|$ is odd, we arbitrarily remove $1$ element from $\hat{Z}$ to generate $Z$. Then,
\[\BE[\MWM(Z)] \ge \frac{1}{16} \cdot \max_{Z' \subset Y: |Z'| \text{ is even}} \MWM(Z').\]
\end{lemma}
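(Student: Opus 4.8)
The plan is to prove the lemma for an \emph{arbitrary} finite point set $Y$ (the GMM structure plays no role here), by sandwiching both sides against $\MST(Y)$. Write $M^\star := \max_{Z'\subseteq Y:\ |Z'|\text{ even}}\MWM(Z')$. For the easy direction I would first note $M^\star\le\MST(Y)$: a TSP tour of $Y$ has length at most $2\MST(Y)$, shortcutting it down to the points of any even $Z'\subseteq Y$ only decreases its length, and a tour through an even number of points splits into two perfect matchings whose total weight is the tour length, so $\MWM(Z')\le\tfrac12\cdot 2\MST(Y)=\MST(Y)$. It therefore suffices to show $\BE[\MWM(Z)]\ge\Omega(\MST(Y))$, and the constant will come out to $\tfrac1{16}$ (in fact better).

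Next I would set up a multiscale decomposition. Let $w_1\ge w_2\ge\cdots\ge w_{k-1}$ be the edge weights of a fixed MST of $Y$ and put $w_k:=0$, so $\MST(Y)=\sum_{p=1}^{k-1}w_p$. For each $p$, form the graph $G_p$ on $Y$ joining two points at distance strictly less than $w_p$; by the standard analysis of Kruskal's algorithm $G_p$ has at least $p$ connected components $C^{(p)}_1,\dots,C^{(p)}_{m_p}$, and any two points lying in distinct components are at distance at least $w_p$. Now fix the outcome of $Z$, let $\mu^\star$ be an optimal matching of $Z$, and let $q_p$ be the number of components $C^{(p)}_a$ with $|Z\cap C^{(p)}_a|$ odd. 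Every odd component must match one of its $Z$-points to a point outside it, and each crossing edge is shared by only two components, so $\mu^\star$ contains at least $q_p/2$ edges of weight at least $w_p$ — simultaneously for every $p$. Writing $N(t)$ for the number of edges of $\mu^\star$ of weight $\ge t$, this says $N(t)\ge q_p/2$ for all $t\le w_p$, and a layer-cake integration over the intervals $(w_{p+1},w_p]$ gives the pointwise bound
\[\MWM(Z)=\int_0^\infty N(t)\,dt\ \ge\ \sum_{p=1}^{k-1}(w_p-w_{p+1})\,\frac{q_p}{2}.\]

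Finally I would take expectations and bound $\BE[q_p]=\sum_{a=1}^{m_p}\Pr\!\big[\,|Z\cap C^{(p)}_a|\text{ odd}\,\big]$. The key claim is that for any fixed set $C$ with $1\le|C|\le|Y|-1$ one has $\Pr[|Z\cap C|\text{ odd}]\ge \tfrac14$, \emph{regardless} of how the parity-fixing step chooses the element to delete: if $|\hat Z|$ is odd the deletion can always be performed so as to keep $|Z\cap C|$ even (delete inside $C$ when $|\hat Z\cap C|$ is odd, outside otherwise), hence $\{|Z\cap C|\text{ odd}\}\supseteq\{|\hat Z|\text{ even}\}\cap\{|\hat Z\cap C|\text{ odd}\}$, and the latter has probability $\tfrac12\cdot\tfrac12$ since the parities of $|\hat Z\cap C|$ and $|\hat Z\setminus C|$ are independent unbiased bits. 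Since $m_p\ge 2$, every $C^{(p)}_a$ has $|Y\setminus C^{(p)}_a|\ge1$, so $\BE[q_p]\ge m_p/4\ge p/4$; combining with the displayed bound and the Abel-summation identity $\sum_{p=1}^{k-1}(w_p-w_{p+1})\,p=\sum_{p=1}^{k-1}w_p=\MST(Y)$ yields $\BE[\MWM(Z)]\ge\tfrac18\MST(Y)\ge\tfrac18 M^\star$, which gives the lemma (with room to spare relative to the stated $\tfrac1{16}$).

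I expect the two genuinely delicate points to be: (i) passing from the family of ``one-threshold'' inequalities $N(w_p)\ge q_p/2$ to the single \emph{summed} bound via monotonicity of $N(\cdot)$ — skipping this is exactly what produced the old $O(\log k)$ bound; and (ii) the parity computation, since the ``arbitrary'' deletion in the algorithm must be handled as worst-case. Everything else (the TSP/MST comparison, the Kruskal component count, the summation identity) is routine.
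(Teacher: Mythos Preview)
Your argument is correct and in fact yields a slightly better constant ($\tfrac18$ rather than $\tfrac{1}{16}$). Both your proof and the paper's pass through $\MST(Y)$ and rest on the same combinatorial heart---odd components at a given scale force cross-component matching edges of at least that scale---but the implementations differ. The paper works in dyadic scales: it bounds $\MST(Y)$ by $\sum_i 2^i(P_{2^i}(Y)-1)$ (Proposition~\ref{prop:mst_cc_count}), embeds $Y$ into a hierarchically well-separated tree with these component partitions as levels, invokes an exact formula for $\MWM$ on an HST in terms of odd-descendant counts (Proposition~\ref{prop:emd_odd_count}), and pays a factor of $4$ for the distortion of the embedding. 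You instead use the actual MST edge weights $w_1\ge\cdots\ge w_{k-1}$ as thresholds, get the pointwise bound $\MWM(Z)\ge\tfrac12\sum_p(w_p-w_{p+1})q_p$ by a layer-cake integration, and then telescope via Abel summation---no HST, no dyadic rounding, no embedding loss. Your upper bound $M^\star\le\MST(Y)$ (tour of $Y$, shortcut to $Z'$, split into two matchings) is also tighter than the paper's $M^\star\le 2\,\MST(Y)$ via Propositions~\ref{prop:mwm_mst} and~\ref{prop:steiner}. The net effect is a cleaner, more self-contained proof that avoids the HST machinery; the paper's route has the virtue of tying the result to a known structural formula for matchings on trees, but yours is arguably the more elementary of the two. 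Your handling of the adversarial deletion (restricting to the event $\{|\hat Z|\text{ even}\}$ where no deletion occurs, at a cost of a factor $2$) is clean and equivalent in strength to the paper's ``each $m_i$ drops by at most $1$ with probability $\tfrac12$'' accounting.
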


Given Lemmas \ref{lem:mwm_lem_2} and \ref{lem:mwm_main}, we explain how combine them to prove Theorem~\ref{thm:matching_offline}.

\begin{proof}[Proof of Theorem \ref{thm:matching_offline}]
    Suppose we generate a random subset $Z$ of $Y$ (possibly removing an element), and suppose that $\MWM(Z) = \alpha \cdot \max_{Z \subset Y: |Z| \text{ is even}} \MWM(Z)$. Then, the output of the algorithm has matching cost at least $\frac{\alpha}{4}$ times the optimum $k$-matching cost $\MWM_k(X)$, by Lemma \ref{lem:mwm_lem_2}. However, by Lemma \ref{lem:mwm_main}, $\BE[\alpha] \ge \frac{1}{16}$, which means that the expected matching cost of the output is $\frac{\BE[\alpha]}{4} \cdot \MWM_k(X) \ge \frac{1}{64} \cdot \MWM_k(X)$.

    If we want this to occur with high probability, note that the matching cost of the output can never be more than $\MWM_k(X)$. Hence, by Markov's inequality, with at least $\frac{1}{64^2} = \frac{1}{4096}$ probability, the output has matching cost at least $\frac{1}{65} \cdot \MWM_k(X)$. If we repeat this $O(1)$ times and return the set with best matching cost, we can find a set of size $k$ with matching cost at least $\frac{1}{65} \cdot \MWM_k(X),$ with probability at least $0.99$.
\end{proof}

Before proving \Cref{lem:mwm_main}, we will need some additional preliminaries.

\subsection{Preliminaries for Lemma \ref{lem:mwm_main}} \label{subsec:prelim_technical_lemma}

To prove Lemma \ref{lem:mwm_main}, we will need several preliminary facts relating to the cost of a minimum weight matching, as well as the cost of a minimum spanning tree of a set of points.

First, we have the following fact, bounding the minimum weight matching in terms of the MST.

\begin{proposition} \label{prop:mwm_mst} (Classical, see Proof of Lemma 5.2 in \cite{chandra2001diversitymax})
    For any (finite, even sized) set of data points $Z$ in a metric space, $\MWM(Z) \le \MST(Z)$.
\end{proposition}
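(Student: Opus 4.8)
## Proof Proposal for Proposition \ref{prop:mwm_mst}

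The plan is to exhibit, for any even-sized finite point set $Z$ in a metric space, a perfect matching of $Z$ whose total weight is at most $\MST(Z)$; since $\MWM(Z)$ is the minimum over all perfect matchings, this suffices. The natural construction is to build the matching from a minimum spanning tree $T$ of $Z$ by a pairing-off argument on the tree.

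Here is the order in which I would carry out the steps. First, fix a minimum spanning tree $T$ of $Z$, rooted arbitrarily. I would then argue by induction on $|Z|$ (even) that one can select a perfect matching $M$ of $Z$ together with an injective assignment of each matched pair to a distinct edge of $T$, in such a way that the tree-distance between the two endpoints of each pair is realized along edge-disjoint tree paths. The cleanest version: repeatedly pick a deepest leaf $u$ of $T$, let $v$ be its parent. If the subtree structure allows, we want to match $u$ to some vertex and charge the cost to tree edges not yet used. Concretely, process leaves bottom-up; whenever two ``unmatched'' vertices get pushed up to a common vertex, match them to each other and charge the resulting edge (via triangle inequality along the tree path connecting them) to the tree edges on that path, which have not been charged before. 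Because $|Z|$ is even, every vertex eventually gets matched, and each tree edge is charged by at most one matched pair, so $\cost(M) \le \sum_{e \in T} w(e) = \MST(Z)$. Finally, $\MWM(Z) \le \cost(M) \le \MST(Z)$.

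An alternative, perhaps slicker, route is the classical Euler-tour / shortcutting argument: double every edge of $T$ to get an Eulerian multigraph, take an Euler tour, which visits all $|Z|$ vertices and has total length $2\,\MST(Z)$; then shortcut the tour to a Hamiltonian cycle $C$ on $Z$ of length at most $2\,\MST(Z)$ by the triangle inequality. A cycle on an even number of vertices decomposes into two perfect matchings (alternate edges), so the cheaper of the two has weight at most $\tfrac12 |C| \le \MST(Z)$, giving $\MWM(Z) \le \MST(Z)$. I would probably present this version, since it avoids a delicate induction and uses only doubling, Euler tours, shortcutting, and the parity of the cycle length.

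The main obstacle — really the only subtle point — is the edge-disjointness / charging bookkeeping: making sure that in whichever construction is chosen, no tree edge (or no doubled-edge budget) is used twice, so that the total charged weight does not exceed $\MST(Z)$ (rather than, say, $2\,\MST(Z)$ for the Euler-tour version before we exploit the even-cycle decomposition). In the induction this is the claim that the tree paths used by distinct matched pairs are edge-disjoint; in the Euler-tour version it is the observation that the even length of the Hamiltonian cycle lets us split off a matching at half the cost. Everything else (triangle inequality, existence of an MST, minimality of $\MWM$ over matchings) is routine. Since the proposition is flagged as classical with a pointer to \cite{chandra2001diversitymax}, I would keep the write-up short, most likely just presenting the Euler-tour argument in two or three sentences.
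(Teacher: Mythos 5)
Your Euler-tour argument is correct: doubling the edges of a minimum spanning tree, shortcutting an Euler tour to a Hamiltonian cycle of length at most $2\,\MST(Z)$, and splitting the even-length cycle into two alternating perfect matchings (the cheaper of which costs at most $\MST(Z)$) is exactly the classical proof that the paper invokes by citing Lemma 5.2 of \cite{chandra2001diversitymax}; the paper itself gives no proof beyond this citation. So your proposal matches the intended argument, and your suggested two-to-three sentence write-up of the Euler-tour version would be entirely appropriate here.
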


Next, given a subset $Z$ of $Y$ in a metric space, we can bound the minimum spanning tree cost of $Z$ in terms of the minimum spanning cost of $Y$.

\begin{proposition} \label{prop:steiner} (Classical, see \cite{gilbert1968steiner})
    Let $Z \subset Y$ be (finite) sets of data points in some metric space. Then, $\MST(Z) \le 2 \cdot \MST(Y)$.
\end{proposition}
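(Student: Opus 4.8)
Proof proposal for Proposition \ref{prop:steiner} ($\MST(Z) \le 2 \cdot \MST(Y)$ for $Z \subset Y$):

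The plan is to "double" a minimum spanning tree of $Y$ and then extract from it a walk that visits all of $Z$, using the metric (triangle inequality) to shortcut it into a spanning structure on $Z$. Concretely, I would first take a minimum spanning tree $T$ of $Y$, of total weight $\MST(Y)$. Doubling every edge of $T$ yields a connected multigraph in which every vertex has even degree, so it admits an Eulerian circuit; this circuit visits all vertices of $Y$ (in particular all of $Z$) and has total length exactly $2 \cdot \MST(Y)$. Walking along this Eulerian circuit and recording only the vertices of $Z$ in the order first encountered gives a cyclic ordering $z_1, z_2, \dots, z_{|Z|}$ of $Z$; by the triangle inequality, the length of the closed tour $z_1 \to z_2 \to \cdots \to z_{|Z|} \to z_1$ through these points is at most the length of the Eulerian circuit, hence at most $2 \cdot \MST(Y)$. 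Deleting the longest edge of this closed tour produces a Hamiltonian path on $Z$ — which in particular is a spanning tree of $Z$ — of weight at most $2 \cdot \MST(Y)$, and since $\MST(Z)$ is the minimum over all spanning trees of $Z$, we conclude $\MST(Z) \le 2 \cdot \MST(Y)$.

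The steps in order are: (i) fix an MST $T$ of $Y$; (ii) double its edges and invoke the existence of an Eulerian circuit on the resulting even-degree connected multigraph; (iii) shortcut the circuit past repeated vertices and past vertices of $Y \setminus Z$, invoking the triangle inequality to argue the length does not increase; (iv) observe the resulting closed walk on $Z$ is a Hamiltonian cycle of weight $\le 2\,\MST(Y)$, and dropping one edge yields a spanning tree of $Z$ of no greater weight; (v) conclude by minimality of $\MST(Z)$. This is the standard "tree-doubling" argument familiar from the $2$-approximation for metric TSP, reused here in the cleaner setting where we only need a spanning tree at the end (so we need not even be careful about which edge to drop).

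The only mild subtlety — and the step I would be most careful about — is justifying the shortcutting inequality: when we replace a subpath $u \to w_1 \to \cdots \to w_t \to v$ (where $w_1, \dots, w_t$ are either repeated vertices or vertices of $Y \setminus Z$ that we skip) by the direct edge $u \to v$, the triangle inequality gives $\rho(u,v) \le \rho(u,w_1) + \rho(w_1,w_2) + \cdots + \rho(w_t,v)$, so each shortcut can only decrease total length; iterating over all such shortcuts gives a tour on exactly the vertices of $Z$, each visited once, of length at most that of the Eulerian circuit. Everything else is routine, so I expect the whole proof to be quite short; the paper's citation to \cite{gilbert1968steiner} indicates this is treated as classical, so a brief self-contained argument along the above lines suffices.
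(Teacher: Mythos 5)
Your proof is correct and is exactly the standard tree-doubling/Euler-tour argument; the paper itself gives no proof of Proposition \ref{prop:steiner}, citing it as classical from \cite{gilbert1968steiner}, and your argument is the usual one underlying that classical bound. No gaps: the shortcutting step via the triangle inequality is justified exactly as you describe, and dropping an edge of the resulting Hamiltonian cycle on $Z$ indeed yields a spanning tree of weight at most $2 \cdot \MST(Y)$.
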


Next, we equate the minimum spanning tree of a dataset $Y$ with the number of connected components in a family of graphs on $Y$. The following proposition essentially follows from the same argument as in \cite[Lemma 2.1]{czumaj2009sublinearmst}. We prove it here for completeness since the statement we desire is not explicitly proven in \cite{czumaj2009sublinearmst}.

\begin{proposition} \label{prop:mst_cc_count}
    Given a dataset $Y$ in a metric space and a radius $r > 0$, define $G_r(Y)$ to be the graph on $Y$ that connects two data points if and only if their distance is at most $r$. Define $P_r(Y)$ to be the number of connected components in $G_r(Y).$ Then,
\[\MST(Y) \in \left[\frac{1}{2}, 1\right] \cdot \left(\sum_{i \in \BZ} 2^i \cdot (P_{2^i}(Y)-1)\right).\]
\end{proposition}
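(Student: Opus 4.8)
The plan is to reduce the statement to a clean fact about the edge weights of a fixed minimum spanning tree of $Y$. Fix a minimum spanning tree $T$ of $Y$, and let $w_1 \le w_2 \le \cdots \le w_{|Y|-1}$ be its edge weights, so $\MST(Y) = \sum_j w_j$. The key structural claim I would establish first is: for every $r > 0$, two points $u, v \in Y$ lie in the same connected component of $G_r(Y)$ if and only if every edge on the unique $u$--$v$ path in $T$ has weight at most $r$. The ``if'' direction is immediate, since that $T$-path is itself a path in $G_r(Y)$. For ``only if'' I would use the cut/exchange property of minimum spanning trees: if the $T$-path from $u$ to $v$ contained an edge $e$ with $w(e) > r$, deleting $e$ splits $T$ into vertex sets $A \ni u$, $B \ni v$; any $G_r(Y)$-path from $u$ to $v$ must traverse some edge $(a, b)$ with $a \in A$, $b \in B$ and $\rho(a,b) \le r < w(e)$, so replacing $e$ by $(a,b)$ in $T$ yields a spanning tree of strictly smaller weight, a contradiction. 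Consequently the connected components of $G_r(Y)$ are exactly those of the forest formed by the $T$-edges of weight at most $r$, so $P_r(Y) = |Y| - \#\{j : w_j \le r\}$, i.e. $P_r(Y) - 1 = \#\{j : w_j > r\}$.

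Given this, the remaining step is a dyadic bookkeeping computation. Writing $P_{2^i}(Y) - 1 = \sum_{j=1}^{|Y|-1} \mathbf{1}[w_j > 2^i]$ and exchanging the (nonnegative) order of summation,
\[\sum_{i \in \mathbb{Z}} 2^i \, (P_{2^i}(Y) - 1) = \sum_{j=1}^{|Y|-1} \ \sum_{i \in \mathbb{Z} : 2^i < w_j} 2^i.\]
For a fixed edge weight $w_j > 0$, the inner geometric sum equals $2^{i_0 + 1}$, where $i_0$ is the largest integer with $2^{i_0} < w_j$; since $2^{i_0} < w_j \le 2^{i_0 + 1}$, this inner sum lies in $[w_j, 2 w_j]$ (and it equals $0 = w_j$ when $w_j = 0$). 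Summing over the $|Y| - 1$ tree edges gives
\[\sum_{i \in \mathbb{Z}} 2^i \, (P_{2^i}(Y) - 1) \in \Big[\ \textstyle\sum_j w_j, \ 2 \sum_j w_j\ \Big] = \big[\,\MST(Y),\ 2\,\MST(Y)\,\big],\]
which is exactly the claimed $\MST(Y) \in [\tfrac12, 1] \cdot \big(\sum_{i \in \mathbb{Z}} 2^i (P_{2^i}(Y) - 1)\big)$.

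The delicate points I would be careful with — none of which is a genuine obstacle — are: (i) the exchange argument must use the \emph{strict} inequality $\rho(a,b) \le r < w(e)$ so the rewired tree is strictly lighter; this is what forces the contradiction, and it is robust to ties among edge weights and to coincident points ($w_j = 0$); (ii) the sum over $i \in \mathbb{Z}$ is a priori infinite, so one should note it converges — the terms vanish once $2^i \ge \diam(Y)$, and for $2^i$ below the smallest positive pairwise distance the factor $P_{2^i}(Y) - 1$ stabilizes at (number of distinct-point classes) $- 1$ while $2^i \to 0$ geometrically — which also licenses the interchange of summations by nonnegativity; (iii) the trivial case $|Y| \le 1$, where all quantities are $0$. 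The entire content is the component-count-to-MST-edge-weight correspondence of the first paragraph, which is the heart of the proof.
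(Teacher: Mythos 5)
Your proof is correct and follows essentially the same route as the paper: the shared core is the identity that the connected components of $G_r(Y)$ coincide with those of the forest of MST edges of weight at most $r$ (equivalently $P_r(Y)-1 = \#\{j : w_j > r\}$), followed by dyadic accounting. The only differences are organizational — you establish the correspondence for an arbitrary fixed MST via the cut-exchange argument and then sum a geometric series per edge, whereas the paper argues via Kruskal's algorithm, groups edges into dyadic weight classes $Q_t$, and telescopes with an $\alpha \in [1/2,1]$ factor — and your convergence/degenerate-case remarks are sound.
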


\begin{proof}
    Note that if $2^i$ is at least $\diam(Y),$ the diameter of $Y$, $P_{2^i}(Y) = 1,$ which means we may ignore the summation for $i$ with $2^i > \diam(Y)$. Hence, by scaling by some power of $2$, we may assume WLOG that $\diam(Y) < 1,$ and that the summation is only over $i < 0$.

    Now, for any $t \ge 0,$ let $Q_t(Y)$ be the number of edges in the MST of $Y$ with weight at most $2^{-t}$ and strictly more than $2^{-(t+1)}$ (assuming we run Kruskal's algorithm for MST). Note that $R_t(Y) := \sum_{t' \ge t} Q_{t'}(Y)$ is the number of edges with weight at most $2^{-t}$. Note that $R_t(Y)$ is precisely $n-P_{2^{-t}}(Y)$. To see why, note that the $R_t(Y)$ edges form a forest with $n-R_t(Y)$ connected components. In addition, in the graph $G_{2^{-t}}(Y)$, none of the $n-R_t(Y)$ components can be connected to each other, or else there would have been another edge of weight at most $2^{-t}$ that Kruskal's algorithm would have had to add. Therefore, $\sum_{t' \ge t} Q_{t'}(Y) = n-P_{2^{-t}}(Y)$. By subtracting this equation from the same equation replacing $t$ with $t+1$, we obtain
\begin{equation} \label{eq:czumaj_sohler_1}
    Q_t(Y) = P_{2^{-(t+1)}}(Y)-P_{2^{-t}}(Y).
\end{equation}

    Now, note that by definition of $Q_t(Y),$ the cost of $\MST(Y)$ is between $\sum_{t \ge 0} 2^{-(t+1)} \cdot Q_t(Y)$ and $\sum_{t \ge 0} 2^{-t} \cdot Q_t(Y)$. Equivalently, it equals $\alpha \cdot \left(\sum_{t \ge 0} 2^{-t} \cdot Q_t(Y)\right),$ for some $\alpha \in [1/2, 1]$. Therefore,
\begin{align*}
    \MST(Y) &= \alpha \cdot \left(\sum_{t \ge 0} 2^{-t} \cdot Q_t(Y)\right) \\
    &= \alpha \cdot \left(\sum_{t \ge 0} 2^{-t} \cdot \left(P_{2^{-(t+1)}}(Y)-P_{2^{-t}}(Y)\right)\right) \\
    &= \alpha \cdot \left(\sum_{t \ge 0} (2^{-t}-2^{-(t+1)}) P_{2^{-(t+1)}}(Y) - P_1(Y)\right) \\
    &= \alpha \cdot \left(\sum_{t \ge 1} 2^{-t} P_{2^{-t}}(Y) - 1\right) \\
    &= \alpha \cdot \left(\sum_{t \ge 1} 2^{-t} (P_{2^{-t}}(Y) - 1)\right).
\end{align*}
    The second-to-last line follows since the diameter is at most $1$ so $G_1(Y)$ has one connected component, and the last line follows because $\sum_{t \ge 1} 2^{-t} = 1.$
\end{proof}

Finally, we need to consider the minimum weight matching cost in a \emph{hierarchically well-separated tree} (HST). 

\begin{definition} \label{def:hst}
A \emph{hierarchically-well seprated tree (HST)} is a depth-$d$ tree (for some integer $d \ge 1$) with the root as depth $0$, and every leaf has depth $d$. For any node $u$ in the tree of depth $1 \le t \le d,$ each edge from $u$ to its parent has weight $2^{-t}$. For two nodes $v, w$ in the HST, the distance $d_{\HST}(v, w)$ is simply the sum of the edge weights along the shortest path from $v$ to $w$ in the tree.
\end{definition}

Note that for any two leaf nodes $v, w$ in an HST, if their least common ancestor has depth $t$, the distance between $v$ and $w$ is $2 \cdot \left(2^{-(t+1)} + 2^{-(t+2)} + \cdots + 2^{-d}\right) = 2 \cdot (2^{-t}-2^{-d})$.

We will make use of the following result about points in an HST metric.

\begin{proposition} \label{prop:emd_odd_count} \cite[Claim 3]{indyk2004emd}
    Let $Z$ be a (finite, even sized) set of points that are leaves in a depth-$d$ HST. Let $m_i$ be the number of nodes at level $i$ with an \emph{odd} number of descendants in $Z$. Then, with respect to the HST metric, the minimum weight matching cost equals
\[\MWM = \sum_{i=0}^d 2^{-i} \cdot m_i.\]
\end{proposition}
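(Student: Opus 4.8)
The plan is to prove the two inequalities $\MWM(Z) \ge \sum_{i=0}^d 2^{-i} m_i$ and $\MWM(Z) \le \sum_{i=0}^d 2^{-i} m_i$ separately, in both cases by decomposing the cost of a matching edge-by-edge along the tree. Fix any perfect matching $M$ of $Z$. For a node $v$ at level $i \ge 1$, let $e_v$ be the edge joining $v$ to its parent (which has weight $2^{-i}$), let $T_v$ be the subtree rooted at $v$, and let $c_v(M)$ be the number of pairs of $M$ having exactly one endpoint among the leaves of $T_v$. Since in an HST the distance $d_{\HST}(z,z')$ between two leaves equals the total weight of the tree path joining them, writing $\cost(M)$ for the total $\HST$-weight of $M$ (so $\MWM(Z) = \min_M \cost(M)$) we get the identity
\[ \cost(M) \;=\; \sum_{i=1}^{d} 2^{-i} \sum_{v \text{ at level } i} c_v(M), \]
which just records that each edge $e_v$ contributes its weight once for every matched pair whose connecting path crosses it.

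For the lower bound, set $n_v := |Z \cap T_v|$, so that $m_i = \#\{v \text{ at level } i : n_v \text{ odd}\}$ (a leaf $v$ has $n_v$ odd iff $v \in Z$, so this reading gives $m_d = |Z|$, which is what makes the $2^{-d}$ contribution work out). Among the points of $Z$ in $T_v$, those that $M$ matches to points also in $T_v$ come in pairs and so are even in number; the remainder is exactly $c_v(M)$, the points of $T_v$ matched across $e_v$. Hence $c_v(M) \equiv n_v \pmod 2$, and in particular $c_v(M) \ge n_v \bmod 2$. Substituting into the identity and using that $m_0 = 0$ because $|Z|$ is even, we obtain $\cost(M) \ge \sum_{i=1}^d 2^{-i} m_i = \sum_{i=0}^d 2^{-i} m_i$ for every $M$, hence $\MWM(Z) \ge \sum_{i=0}^d 2^{-i} m_i$.

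For the matching upper bound I would exhibit a particular matching $M^\star$ achieving $c_v(M^\star) = n_v \bmod 2$ for every $v$, since then the identity above immediately gives $\cost(M^\star) = \sum_{i=0}^d 2^{-i} m_i$. Construct $M^\star$ bottom-up: every leaf in $Z$ starts holding one ``token''; processing levels $i = d, d-1, \dots, 1$, at each internal node $v$ take the (pairwise distinct) tokens forwarded to $v$ by its children, pair up $\lfloor s/2 \rfloor$ of them arbitrarily where $s$ is the number of such tokens, and forward the single leftover token up to $v$'s parent exactly when $s$ is odd. Two tokens paired at $v$ come from distinct child subtrees of $v$, hence are distinct leaves, so $M^\star$ pairs each point of $Z$ exactly once; and since $|Z|$ is even the root receives an even number of tokens, so nothing survives past it and $M^\star$ is a perfect matching. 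Moreover $s$ counts the children $w$ of $v$ with $n_w$ odd, so $s \equiv \sum_{w \text{ child of } v} n_w \equiv n_v \pmod 2$; thus a token leaves $T_v$ iff $n_v$ is odd, and since that forwarded token is the only point of $T_v$ matched outside $T_v$, we get $c_v(M^\star) = n_v \bmod 2$ as required. Combining the two bounds proves the proposition.

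The load-bearing but entirely routine ingredients are the per-edge cost decomposition (the standard ``tree metric $=$ path length'' identity) and the parity observation $c_v(M) \equiv n_v \pmod 2$, which is really the whole content of the lower bound. I do not expect a genuine obstacle; the only things needing a little care are the HST weight/index conventions — an edge incident to a level-$i$ node has weight $2^{-i}$, $m_0 = 0$, and $m_d = |Z|$ — and a sentence verifying that the bottom-up construction terminates with a legal perfect matching (tokens merged at $v$ living in different child subtrees, and the root absorbing the last, even, batch).
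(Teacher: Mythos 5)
Correct. The paper does not prove this statement --- it is quoted from \cite[Claim 3]{indyk2004emd} (with only a remark reconciling the additive $|Z|$ term) --- and your argument is a sound, self-contained version of the standard one: the per-edge decomposition of the tree-path cost, the parity bound $c_v(M)\equiv n_v \pmod 2$ for the lower bound, and the bottom-up token-pairing matching achieving $c_v = n_v \bmod 2$ for the upper bound, with the conventions $m_0 = 0$ and $m_d = |Z|$ handled correctly. The only nit is the phrase ``processing levels $i=d,\dots,1$, at each internal node'' (internal nodes live at levels $0,\dots,d-1$), which is a harmless indexing slip in an otherwise valid construction.
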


We remark that the corresponding statement in \cite{indyk2004emd} has an additional additive factor of $n = |Z|$ in the right-hand side. This is because we include the bottom level in our sum (which consists of $n$ nodes each with exactly one descendant), whereas \cite{indyk2004emd} does not.

\subsection{Proof of Lemma \ref{lem:mwm_main}} \label{subsec:technical_lemma}

We are now ready to prove \Cref{lem:mwm_main}

\begin{proof}[Proof of \Cref{lem:mwm_main}]
    Assume WLOG (by scaling) that the diameter of $Y$ is at most $1$. Let $Z$ be a subset of $Y$ with even size. By Proposition \ref{prop:mwm_mst}, $\MWM(Z) \le \MST(Z)$. By Proposition \ref{prop:steiner}, $\MST(Z) \le 2 \cdot \MST(Y)$. Combining these together, we have
\begin{equation} \label{eq:mwm_mst}
    \max_{Z \subset Y: |Z| \text{ even}} \MWM(Z) \le 2 \cdot \MST(Y).
\end{equation}

    Now, for our dataset $Y$ and any positive real $r > 0,$ recall that $G_r(Y)$ is defined as the graph on $Y$ that connects two data points if their distance is at most $r$. In addition, define $\mathcal{P}_r(Y)$ to be the partitioning of $Y$ into connected components based on $G_r(Y)$, and recall that $P_r(Y) = |\mathcal{P}_r(Y)|$ equals the number of connected components in $G_r(Y)$. Then, Proposition \ref{prop:mst_cc_count} tells us that
\begin{equation} \label{eq:mst_ub}
    \MST(Y) \le \sum_{i \in \BZ} 2^i \cdot (P_{2^i}(Y)-1).
\end{equation}

    Now, consider the following ``embedding'' of $Y$ into a depth-$d$ hierarchically well-separated tree (where we will choose $d$ later) as follows. By scaling, assume WLOG that the diameter of $Y$ is $1$. For each integer $0 \le t \le d$, the nodes at level $t$ will be the connected components in $G_{2^{-t}}(Y)$, where the children of any node at depth $t$, represented by a subset $Z$ of $Y$, are simply the connected components in $\mathcal{P}_{2^{-(t+1)}}(Y)$ contained in $Z$.

    The distance $d_{\HST}(y_i, y_j)$ between any two vertices $y_i, y_j$ in the HST is precisely $2(2^{-t}-2^{-d})$ if $y_i, y_j$ have common ancestor at level $t$ of the HST. Note that if $d_{\HST}(y_i, y_j) = 2(2^{-t}-2^{-d}),$ then $y_i, y_j$ are not in the same connected component of $G_{2^{-(t+1)}},$ which means that $\rho(y_i, y_j) > 2^{-(t+1)}$. Importantly, this means that $d_{\HST}(y_i, y_j) \le 4 \rho(y_i, y_j)$ for all pairs $i, j$. Hence, for any subset $Z \subset Y$ of even size, the minimum weight matching cost $\MWM_{\HST}(Z)$ with respect to the HST metric is at most $4$ times the true minimum weight matching cost, i.e.,
\begin{equation} \label{eq:hst_mwm}
    \MWM_{\HST}(Z) \le 4 \cdot \MWM(Z).
\end{equation}

    Finally, we consider selecting a random subset $\hat{Z} \subset Y$, and provide a lower bound for $\MWM_{\HST}(Z),$ where $Z = \hat{Z}$ if $|\hat{Z}|$ is even and otherwise $Z$ equals $\hat{Z}$ after removing a single (arbitrary) element. Note that for each node $v$ of depth $t$, corresponding to a connected component in $\mathcal{P}_{2^{-t}}(Y),$ the probability that it has an odd number of descendants in $\hat{Z}$ if $\hat{Z}$ is picked at random is precisely $1/2$. This implies that the expectation of $\sum_{i = 0}^{d} 2^{-i} \cdot m_i,$ where $m_i$ is the number of nodes at level $i$ with an odd number of descendants in $\hat{Z}$, is $\frac{1}{2} \cdot \sum_{i=0}^d 2^{-i} \cdot P_{2^{-i}}(Y)$.

    Note, however, that $\hat{Z}$ has odd size with $1/2$ probability. In this event, we remove an arbitrary element of $\hat{Z}$, which may reduce each $m_i$ by $1$. This only happens with $50\%$ probability, so after this potential removal of a point, the expectation of $\sum_{i = 0}^{d} 2^{-i} \cdot m_i(Z),$ where $m_i(Z)$ is the number of nodes at level $i$ with an odd number of descendants in $Z$, is at least $\frac{1}{2} \cdot \sum_{i=0}^d 2^{-i} \cdot (P_{2^{-i}}(Y)-1)$. Since $\diam(Y)$ is at most $1$, this implies $P_{2^i}(Y)-1 = 0$ for all $i \ge 1$. Also, for $i > d,$ $2^{-i} \cdot (P_{2^{-i}}(Y)-1) \le 2^{-i} \cdot n$. If we sum this up over all $i > d$, this is still at most $2^{-d} \cdot n$. Hence, we have that
\begin{equation} \label{eq:random_large_mwm_cost}
    \BE_Z\left[\sum_{i=0}^d 2^{-i} \cdot m_i(Z)\right] \ge \frac{1}{2} \cdot \left(\sum_{i \in \BZ} 2^i (P_{2^i}(Y)-1)\right) - n \cdot 2^{-d}.
\end{equation}

    In summary, we have that
\begin{alignat*}{3}
    \max_{Z' \subset Y: |Z'| \text{ even}} \MWM(Z') 
    & \le 2 \cdot \MST(Y) && \hspace{0.5cm} \text{By Equation \eqref{eq:mwm_mst}} \\
    & \le 2 \cdot \sum_{i \in \BZ} 2^i \cdot \left(P_{2^i}(Y)-1\right) && \hspace{0.5cm} \text{By Equation \eqref{eq:mst_ub}} \\
    & \le 4 \cdot \left(\BE_Z\left[\sum_{i=0}^d 2^{-i} \cdot m_i(Z)\right] + n \cdot 2^{-d}\right) && \hspace{0.5cm} \text{By Equation \eqref{eq:random_large_mwm_cost}} \\
    & = 4 \cdot \left(\BE_Z\left[\MWM_{\HST}(Z)\right] + n \cdot 2^{-d}\right) && \hspace{0.5cm} \text{By Proposition \ref{prop:emd_odd_count}} \\
    & \le 16 \cdot \left(\BE_Z\left[\MWM(Z)\right] + n \cdot 2^{-d}\right). && \hspace{0.5cm} \text{By Equation \eqref{eq:hst_mwm}}
\end{alignat*}
    We can choose the depth of the HST to be arbitrarily large, which therefore implies that 
\[\BE_{Z}[\MWM(Z)] \ge \frac{1}{16} \cdot \max_{Z' \subset Y: |Z'| \text{ even}} \MWM(Z'). \qedhere \]
\end{proof}

\section{Composable Coreset for Remote-Pseudoforest} \label{sec:coreset_pseudoforest}

In this section, we describe and analyze 
the composable coreset algorithm for remote-pseudoforest.

\subsection{Algorithm}

In this subsection, we prove why the algorithm given in \Cref{alg:pf_coreset} creates an $O(1)$-approximate composable coreset.
First, we describe the algorithm in words. We recall that we have $m$ datasets $X^{(1)}, \dots, X^{(m)}$: we wish to create a coreset $C^{(j)}$ of each $X^{(j)}$ so that $\bigcup_{j=1}^m C^{(j)}$ contains a set $Z$ of $k$ points such that $\PF(Z) \ge \Omega(1) \cdot \PF_k\left(\bigcup_{j=1}^m X^{(j)}\right)$.

\paragraph{Coreset Construction:}
Suppose that $|X^{(j)}| \ge 2 k^{1+\eps} + k$. Let $\tilde{r}^{(j)}$ represent the smallest value such that there exists a ball $\tilde{B}^{(j)}$ of radius $\tilde{r}^{(j)}$ around some $x^{(j)} \in X^{(j)}$ that contains all but at most $k$ of the points in $X^{(j)}$.
The point $\tilde{r}^{(j)}$ and a corresponding $x^{(j)}, \tilde{B}^{(j)}$ can be found in $O(|X^{(j)}|^2)$ time.

Choose $U^{(j)}$ to be the set of $k$ points furthest from $x^{(j)}$. These will either be precisely the $k$ points outside $\tilde{B}^{(j)}$, or all of the points outside $\tilde{B}^{(j)}$ plus some points on the boundary of $\tilde{B}^{(j)}$, to make a total of $k$ points. In addition, we choose some arbitrary set $P^{(j)} \subset X^{(j)}$ of any $k$ points in the ball $B^{(j)}$.

Next, we will choose sets $S^{(j)}, T^{(j)} \subset X^{(j)}$ of size $k$, such that $\rho(S^{(j)}, T^{(j)}) \ge \frac{\eps}{2} \cdot \tilde{r}^{(j)}$. It is not even clear that such sets exist, but we will show how to algorithmically find such sets in $O(|X^{(j)}|^2)$ time (assuming $|X^{(j)}| \ge 2 k^{1+\eps}+k$).

Finally, we run the GMM algorithm on $X^{(j)}$ to obtain $Y^{(j)} = \{y_1^{(j)}, y_2^{(j)}, \dots, y_k^{(j)}\}$. The final coreset will be $C^{(j)} := P^{(j)} \cup S^{(j)} \cup T^{(j)} \cup U^{(j)} \cup Y^{(j)}$. Note that each of $P^{(j)}, S^{(j)}, T^{(j)}, U^{(j)}, Y^{(j)}$ has size at most $k$, so $|C^{(j)}| \le 5k$.

Alternatively, if $|X^{(j)}| < 2k^{1+\eps}+k,$ we choose the coreset to simply be $X^{(j)}$. For convenience, in this setting, we define $U^{(j)} := X^{(j)}$ and $P^{(j)}, S^{(j)}, T^{(j)}, Y^{(j)}$ to all be empty.

\medskip

We have not yet described how to find $S^{(j)}, T^{(j)}$, let alone prove they even exist. We now describe an $O(|X^{(j)}|^2)$ time algorithm that finds $S^{(j)}, T^{(j)} \subset X^{(j)}$ of size $k$, such that $\rho(S^{(j)}, T^{(j)}) \ge \frac{\eps}{2} \cdot \tilde{r}^{(j)}$.

\paragraph{Efficiently finding $S^{(j)}, T^{(j)}$:}
Here, we show that we can efficiently find $S^{(j)}, T^{(j)}$ from the coreset construction, as long as $|X^{(j)}| \ge 2 k^{1+\eps}+k$. We formalize this with the following lemma.

\begin{lemma} \label{lemma:find_ST}
    Let $\eps > 0$ be a fixed constant, and consider a dataset $X$ of size at least $2 k^{1+\eps} + k$, and suppose that no ball of radius smaller than $\tilde{r}$ around any $x \in X$ contains all but at most $k$ points in $X$. (In other words, for every $x \in X$, there are at least $k$ points in $X$ of distance at least $\tilde{r}$ from $x$.) Then, we in $O(|X|^2)$ time, we can find two disjoint sets $S, T \subset X$, each of size $k$, such that $\rho(S, T) \ge \frac{\eps \cdot \tilde{r}}{2}$.
\end{lemma}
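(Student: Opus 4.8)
The plan is to follow the case analysis sketched in \Cref{subsec:coreset_overview}, but carried out with the threshold $\tilde r$ replaced by a small multiple $\frac{\eps \tilde r}{2}$ so that the counting arguments go through for the general size bound $|X| \ge 2k^{1+\eps}+k$. For a point $x \in X$ and radius $\delta$, write $N_\delta(x)$ for the set of points of $X$ within distance $\delta$ of $x$ (so $|N_\delta(x)|$ includes $x$ itself), and let $d = \lceil 1/\eps \rceil$ so that $k^{d \eps} \ge k$. The hypothesis that no ball of radius $<\tilde r$ around any $x$ misses at most $k$ points says exactly that for every $x$, at least $k$ points lie at distance $\ge \tilde r$ from $x$; this is the ``$T$-side'' reservoir we will repeatedly exploit, since for any $S$ of small radius it gives a set $T$ far from $S$.

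The key dichotomy is whether there is some $x_0 \in X$ with $|N_{\eps \tilde r/2}(x_0)| \ge k$. If so, set $S$ to be $k$ such nearby points, all within $\eps \tilde r/2$ of $x_0$, hence within $\eps \tilde r$ of each other, i.e.\ $S$ has diameter $\le \eps \tilde r \le \tilde r$; then $S$ is contained in a ball of radius $\le \tilde r$ about (say) $x_0$, and by hypothesis there are at least $k$ points of $X$ at distance $\ge \tilde r$ from $x_0$ — take $T$ to be $k$ of these furthest points. Since every $t \in T$ is at distance $\ge \tilde r$ from $x_0$ and every $s \in S$ is within $\eps \tilde r / 2$ of $x_0$, the triangle inequality gives $\rho(s,t) \ge \tilde r - \eps\tilde r/2 \ge \tilde r/2 \ge \frac{\eps \tilde r}{2}$, and we are done. (If $S$ and $T$ overlap we can discard the overlap from the reservoir and repick, but in fact $S$ and $T$ are automatically disjoint since their distance to $x_0$ ranges differ.) Otherwise, no $x$ has $k$ points within $\eps\tilde r/2$; iterating this, define for $0 \le j \le d$ the ``scale-$j$'' property that some $x$ has at least $k^{j\eps}$ points within distance $\frac{j \cdot \eps \tilde r}{2d} \cdot \dots$ — more cleanly: we run a greedy ball-growing / bucketing argument. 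Pick the point $x^*$ maximizing the number of points within $\frac{\eps\tilde r}{2}$; by assumption this count is $< k$. We then want two sets of size $k$ that are $\frac{\eps\tilde r}{2}$-separated, which we build by a pigeonhole on "popularity at scale $\eps\tilde r / 2$": partition is not quite right, so instead iterate: since $|X| \ge 2k^{1+\eps}+k$ and every point has fewer than $k$ neighbors within $\frac{\eps \tilde r}{2}$...

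Actually the cleanest route, which I expect the authors take, is a \emph{net/greedy-packing} argument: greedily select points $p_1, p_2, \dots$ where each $p_{i+1}$ is chosen to be at distance $> \frac{\eps \tilde r}{2}$ from all previously selected $p_1,\dots,p_i$ that still have ``large'' neighborhoods; more precisely, one shows by an averaging argument that either many points have $\ge k^{\eps}$ close neighbors at scale $\Theta(\eps\tilde r/d)$ (in which case a random sample of $O(k/k^\eps)$ of them is close to $\ge k$ points while $\ge k$ points remain far, as in option 1 of the overview), or most points have few close neighbors at that scale (option 2), and in the latter case directly take $S$ to be any $k$ of the sparse points and $T$ to be $k$ points not within scale of $S$, using $|X| - (\text{neighbor budget}) \cdot k \ge k$. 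The recursion over $d = O(1/\eps)$ scales — each scale shrinking the radius by a factor and the neighborhood-size threshold by a factor of $k^\eps$ — is what forces the size requirement $|X| \ge 2k^{1+\eps}+k$ and the separation $\frac{\eps}{2}\tilde r$ (we lose a factor roughly $\eps/d \cdot d = \eps$ over the course of the recursion via triangle inequalities, and a factor $k^{d\eps} \ge k$ in the counting). The $O(|X|^2)$ running time is immediate: computing all pairwise distances, hence all neighborhood sizes at any scale, takes $O(|X|^2)$, and each of the $O(1/\eps) = O(1)$ recursion levels does only such computations plus a greedy/random selection.

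The main obstacle is making the recursive counting tight enough: at each of the $\sim 1/\eps$ scales we must guarantee that \emph{both} a ``dense'' set $S$ of size exactly $k$ survives \emph{and} a ``far'' set $T$ of size exactly $k$ survives, and the far set must be uniformly far from \emph{all} of $S$, not just from the center — this is where the triangle-inequality slack eats into the separation constant, and one has to check that after $O(1/\eps)$ nested applications the surviving separation is still $\ge \frac{\eps \tilde r}{2}$ rather than something that degrades like $2^{-1/\eps}\tilde r$. I would handle this by choosing the per-scale radii to form a geometric sequence summing to at most $\tilde r/2$ (e.g.\ radius $\frac{\eps \tilde r}{2} \cdot 2^{-(d-j)}$ at scale $j$), so the total triangle-inequality loss telescopes to $\le \tilde r/2$ and the final $S$–$T$ distance is at least $\tilde r - \tilde r/2 \ge \tilde r/2 \ge \frac{\eps \tilde r}{2}$; the size bookkeeping then only needs $k^{\eps}$ slack per scale, i.e.\ $k^{d\eps} \ge k$ total, which is exactly what $|X| \ge 2k^{1+\eps}+k$ provides (the ``$2$'' and ``$+k$'' absorbing the $T$-reservoir and the rounding in the recursion).
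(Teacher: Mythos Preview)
Your first case (some $x_0$ with $|N_{\eps\tilde r/2}(x_0)|\ge k$) is fine and matches the paper up to the choice of threshold radius: the paper checks at radius $\tilde r/2$ rather than $\eps\tilde r/2$, but either works for this half.

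The second case is where your proposal has a genuine gap. After the dichotomy you are left with ``every $x$ has fewer than $k$ neighbours at scale $r'=\eps\tilde r/2$'', and you need to produce $S$ of size $k$ whose $r'$-neighbourhood misses at least $k$ points. You correctly diagnose that a naive budget argument needs each point of $S$ to have only $O(k^\eps)$ neighbours, not $O(k)$, and you gesture at a $1/\eps$-scale recursion to bridge this; but you never actually execute it, and the version you sketch (geometric radii summing to $\tilde r/2$, worrying about $2^{-1/\eps}$ degradation) is both unfinished and more complicated than necessary. In particular, your ``option 2'' line, taking $S$ to be $k$ sparse points and bounding the neighbour budget by $|X| - (\text{budget})\cdot k \ge k$, does not follow from your dichotomy: you only know each point has $<k$ neighbours, so the budget is $k^2$, not $k^{1+\eps}$.

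The paper's trick avoids any recursion over scales. It \emph{peels}: repeatedly pick an arbitrary $x_h$ from the remaining set, and apply pigeonhole to the $1/\eps$ annuli $N_0(x_h)\subset N_{r'}(x_h)\subset\cdots\subset N_{(1/\eps)r'}(x_h)$ (all of size $<k$ by the else-case hypothesis) to find a single index $i(h)$ with $|N_{(i(h)+1)r'}(x_h)|\le k^\eps|N_{i(h)r'}(x_h)|$. Peel off $S_h:=N_{i(h)r'}(x_h)$ and continue until $\sum|S_h|\ge k$. The point is that the $r'$-neighbourhood of $S_h$ is contained in $N_{(i(h)+1)r'}(x_h)$, hence has size at most $k^\eps|S_h|$; summing, the $r'$-neighbourhood of all of $S=\bigcup S_h$ has size at most $k^\eps\cdot 2k=2k^{1+\eps}$, leaving $\ge k$ points for $T$. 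There is no accumulation of triangle-inequality loss because the separation scale is always exactly $r'$; the $1/\eps$ scales appear only inside the pigeonhole for each individual peeled ball, not as nested global scales. This is the missing idea in your write-up.
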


\begin{proof}
The algorithm works as follows. First, define $r' = \frac{\eps \cdot \tilde{r}}{2}$. For each point $x \in X$, and for every nonnegative integer $i$, define $N_i(x)$ as the set of points in $X$ of distance at most $i \cdot r'$ from $x$.
We can compute the set $N_i(x)$ for all $x \in X$ and $0 \le i \le 2/\eps$ in $O(|X|^2)$ time, as $\eps$ is a constant.
Suppose there exists $x \in X$ such that $|N_{1/\eps}(x)| \ge k$. By our assumption on $\tilde{r}$, there are at least $k$ points of distance at least $\tilde{r} = \frac{2}{\eps} \cdot r'$ from $x$ (or else we could have chosen $\tilde{r}$ to be smaller). Therefore, we can let $S$ be a subset of size $k$ from $N_{1/\eps}(x)$ and $T$ be a subset of size $k$ of points of distance at least $\frac{2}{\eps} \cdot r'$ from $x$. The minimum distance between any $s \in S$ and $t \in T$ is at least $r' \cdot \left(\frac{2}{\eps} - \frac{1}{\eps}\right) = \frac{\tilde{r}}{2}$.

Alternatively, every $x \in X$ satisfies $|N_{1/\eps}(x)| < k$. Now, consider the following peeling procedure. Let $X_0 := X$: for each $h \ge 1$, we will inductively create $X_h \subsetneq X_{h-1}$ from $X_{h-1}$, as follows. 
First, we pick an arbitrary point $x_h \in X_{h-1}$. For any point $x \in X_{h-1}$ and any integer $i \ge 0$, define $N_i(x_h; X_{h-1}) = N_i(x_h) \cap X_{h-1}$ to be the set of points of distance at most $i \cdot r'$ from $x_h$ in $X_{h-1}$. By our assumption, we have that $|N_{1/\eps}(x_h; X_{h-1})| \le |N_{1/\eps}(x_h)| < k$, so there exists some $i(h)$ with $0 \le i(h) \le \frac{1}{\eps}$, such that $\frac{|N_{i(h)+1}(x_h, X_{h-1})|}{|N_{i(h)}(x_h, X_{h-1})|} \le k^{\eps}$ and $|N_{i(h)}(x_h, X_{h-1})| \le k$. Choose such an $i = i(h)$, and let $X_h := X_{h-1} \backslash N_{i(h)}(x_{h}, X_{h-1})$.

We repeat this process until we have found the first $X_\ell$ with $|X \backslash X_\ell| \ge k$. Note that each removal process removes at least $1$ and at most $k$ elements, so $|X \backslash X_\ell| \le 2 k$. Let $S_h = N_{i(h)}(x_h, X_{h-1}) = X_{h-1} \backslash X_h$ for each $1 \le h  \le \ell$, so $X \backslash X_\ell = S_1 \cup \cdots \cup S_\ell$.
Note, however, that any point within distance $r'$ of some $x \in S_h$ was either in $S_1 \cup \cdots \cup S_{h-1}$, or was in $N_{i(h)+1}(x_h, X_{h-1})$. 
In other words, every point of distance $r'$ of $x \in S_1 \cup \cdots \cup S_\ell$ is in $\bigcup_{h=1}^{\ell} N_{i(h)+1}(x_h, X_{h-1})$. But this has size at most 
\[\sum_{i=1}^{\ell} k^{\eps} \cdot |N_{i(h)}(x_h, X_{h-1})| = k^\eps \cdot \sum_{i=1}^{\ell} |S_i| \le k^{\eps} \cdot 2k = 2k^{1+\eps}.\]
So, assuming that $|X| \ge 2k^{1+\eps}+k$, defining $S := S_1 \cup \cdots \cup S_\ell,$ we have that $|S| \ge k$ and there are at least $k$ points in $X$ that are \emph{not} within distance $r' = \frac{\eps \cdot \tilde{r}}{2}$ of $S$.
\end{proof}

We include pseudocode for the algorithm described in the proof of \Cref{lemma:find_ST}, in \Cref{alg:find_ST}.

\begin{algorithm}[tb]
   \caption{\textsc{FindST}: Find two sets $S, T$ of size $k$ with large $\rho(S, T)$}
   \label{alg:find_ST}
\begin{algorithmic}[1]
    \STATE {\bfseries Input:} data $X = \{x_1, \dots, x_n\}$, integer $k$, parameter $\eps \in (0, 1]$, radius $\tilde{r}$.
    \STATE $r' \leftarrow \frac{\eps \cdot \tilde{r}}{2}$.
    \FOR{$x$ in $X$}
        \FOR{$i = 0$ to $2/\eps$}
            \STATE $N_i(x) = \{z \in X: \rho(x, z) \le i \cdot r'\}$.
        \ENDFOR
        \IF{$|N_{1/\eps}(x)| \ge k$}
            \STATE $S \leftarrow$ arbitrary subset of size $k$ in $N_{1/\eps}(x)$.
            \STATE $T \leftarrow$ arbitrary $k$ points of distance at least $\tilde{r}$ from $x$.
            \STATE \textbf{Return} $S, T$.
        \ENDIF
    \ENDFOR
    \STATE $S = \emptyset, X_0 \leftarrow X, h \leftarrow 0$
    \WHILE{$|S| < k$}
        \STATE $h \leftarrow h+1$
        \STATE $x_h \in X_{h-1}$ chosen arbitrarily.
        \STATE Find $0 \le i \le \frac{1}{\eps}$ such that $\frac{|N_{i+1}(x_h) \cap X_{h-1}|}{|N_i(x_h) \cap X_{h-1}|} \le k^{\eps}$.
        \STATE $S_h \leftarrow N_i(x_h) \cap X_{h-1}$.
        \STATE $X_h \leftarrow X_{h-1} \backslash S_h$, $S \leftarrow S \cup S_h$
    \ENDWHILE
    \STATE $S \leftarrow $ arbitrary subset of size $k$ in $S$
    \STATE $T \leftarrow $ arbitrary subset of $k$ points of distance at least $r'$ from all points in $S$.
    \STATE \textbf{Return} $S, T$.
\end{algorithmic}
\end{algorithm}

\subsection{Analysis}

In this section, we prove that the algorithm indeed generates an $O(1/\eps)$-approximate composable coreset of size at most $O(k^{1+\eps})$. By making $\eps$ an arbitrarily small constant, this implies we can find a constant-approximate composable coreset of size $O(k^{1+\eps})$ for any arbitrarily small constant $\eps$.

Let $\OPT$ represent the optimal set of $k$ points in all of $X = \bigcup_{j=1}^m X^{(j)}$, that maximizes remote-pseudoforest cost.
Our goal is to show that there exists a set of $k$ points in $\bigcup_j (P^{(j)} \cup S^{(j)} \cup T^{(j)} \cup U^{(j)} \cup Y^{(j)})$ with pseudoforest cost at least $\Omega(\PF(\OPT))$.

Let $r^{(j)}$ represent the maximum distance from any point in $X^{(j)}$ to its closest point in $Y^{(j)}$.
Note that by \Cref{prop:gmm}, $\PF(Y^{(j)}) \ge k \cdot r^{(j)}$. Hence, if $\PF(\OPT) < 10 \cdot k \cdot \max_j r^{(j)}$, there exists some choice of $j$ with $\PF(Y^{(j)}) \ge 0.1 \cdot \OPT$ and $|Y^{(j)}| = k$. Hence, we get a constant-factor approximation in this case. Otherwise, we may assume that $\PF(\OPT) \ge 10 \cdot k \cdot \max_j r^{(j)}$.

Next, for a fixed $X^{(j)}$, let $X_i^{(j)}$ represent the set of points in $X^{(j)}$ closest to $y_i^{(j)}$ among all points in $Y^{(j)}$.
Given the optimal solution $\OPT$ of $k$ points, let $\OPT_i^{(j)} = \OPT \cap X_i^{(j)}$.

Now, we will define sets $G_i^{(j)}, {G'_i}^{(j)}$ based on the following cases.
\begin{enumerate}
    \item If $|X^{(j)}| < 2k^{1+\eps}+k$, define $G_i^{(j)}={G_i'}^{(j)} = \OPT_i^{(j)}$ for all $i \le k$.
    \item Else, if $y_i^{(j)} \in \OPT_i^{(j)}$, define $G_i^{(j)}$ as $y_i^{(j)} \cup (U^{(j)} \cap \OPT_i^{(j)})$ and ${G'_i}^{(j)} = \OPT_i^{(j)}$.
    \item Else, if $\OPT_i^{(j)} \backslash U^{(j)} = \emptyset$ (i.e., all points in $\OPT_i^{(j)}$ happen to be in $U^{(j)}$), define $G_i^{(j)} = {G'_i}^{(j)} = \OPT_i^{(j)}.$
    \item Else, define $G_i^{(j)} = y_i^{(j)} \cup (U^{(j)} \cap \OPT_i^{(j)})$, and define ${G'_i}^{(j)}$ as $\OPT_i^{(j)}$ with the slight modification of moving a single (arbitrary) point in $\OPT_i^{(j)} \backslash U^{(j)}$ to $y_i^{(j)}$.
\end{enumerate}
We will define the sets $G = \bigcup_{i, j} G_i^{(j)}$ and $G' = \bigcup_{i, j} {G_i'}^{(j)}$.

Importantly, the following five properties always hold for all $i \le m, j \le k$. (They even hold in the setting when $|X^{(j)}| < 2k^{1+\eps}+k$, because we defined $U^{(j)} = X^{(j)}$ and $G_i^{(j)} = {G'_i}^{(j)} = \OPT_i^{(j)}$.)
\begin{enumerate}
    \item $|{G'_i}^{(j)}| = |\OPT_i^{(j)}|$. This means that $|G'| = k$. \label{prop:1}
    \item $G_i^{(j)} \subset {G'_i}^{(j)} \subset X_i^{(j)}$. This means that $G \subset G'$. \label{prop:2}
    \item $G_i^{(j)} \subset U^{(j)} \cup Y^{(j)}$. This means that $G \subset \bigcup_j (U^{(j)} \cup Y^{(j)})$. \label{prop:3}
    \item Every point in ${G'_i}^{(j)} \backslash G_i^{(j)}$ is not in $U^{(j)}$. This means that $\bigcup U^{(j)}$ and $G' \backslash G$ are disjoint. \label{prop:4}
    \item If ${G'_i}^{(j)} \backslash G_i^{(j)}$ is nonempty, then $y_i^{(j)} \in G_i^{(j)}$, so $G_i^{(j)}$ is also nonempty. \label{prop:5}
\end{enumerate}

Now, note that from changing $\OPT$ to $G'$, we never move a point by more than $\max_j r^{(j)}$, which means that $|\PF(G') - \PF(\OPT)| \le 2 k \cdot \max_j r^{(j)}.$ As we are assuming that $\PF(\OPT) \ge 10 k \cdot \max_j r^{(j)}$, we have $\PF(G') \ge 0.8 \cdot \OPT$. Next, if a point $x$ is in ${G'_i}^{(j)}$ but not in $G_i^{(j)}$, then $x \in X^{(j)} \backslash U^{(j)}$ and $y_i^{(j)} \in G_i^{(j)}$, which means that the cost of $x$ with respect to $G'$ is at most $\max_j r^{(j)}$. So for any set $A$, if we define $\cost_A(x)$ for $x \in A$ to denote $\min_{y \in A: y \neq x} \rho(x, y)$, then 
\begin{equation} \label{eq:G'_cost}
    \sum_{x \in G} \cost_{G'}(x) \ge \PF(G') - \sum_{x \in G' \backslash G} \cost_{G'}(x) \ge \PF(G') - k \cdot \max_j r^{(j)} \ge 0.7 \cdot \OPT.
\end{equation}

We now try to find a set $G'' \supset G$ of size $k$ with large pseudoforest cost, but this time we must ensure that $G'' \subset \bigcup_j (P^{(j)} \cup S^{(j)} \cup T^{(j)} \cup U^{(j)} \cup Y^{(j)})$. In other words, to finish the analysis, it suffices to prove the following lemma.

\begin{lemma} \label{lem:better-coreset-main-analysis}
    There exists $G'' \subset \bigcup_j (P^{(j)} \cup S^{(j)} \cup T^{(j)} \cup U^{(j)} \cup Y^{(j)})$ of size at least $k$, such that $G'' \supset G$ and $\sum_{x \in G} \cost_{G''}(x) \ge \Omega(\eps) \cdot \PF(\OPT)$.
\end{lemma}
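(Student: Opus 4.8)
The plan is to build $G''$ by starting from $G$ (which already satisfies $G \subset \bigcup_j (U^{(j)} \cup Y^{(j)})$ by Property~\ref{prop:3}) and then padding it up to size exactly $k$ using points from the $P^{(j)}, S^{(j)}, T^{(j)}$ sets, so that every point of the original $G$ still sees a far-away neighbor in $G''$. The key quantity to preserve is $\sum_{x \in G} \cost_{G''}(x)$, which by Equation~\eqref{eq:G'_cost} we know is at least $0.7 \cdot \PF(\OPT)$ when computed with respect to $G' \supset G$. Since $G'' \supset G$, adding more points can only decrease $\cost_{G''}(x)$, so the whole difficulty is to add the padding points \emph{far from $G$}. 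We want to show we can add $k - |G|$ new points that collectively stay at distance $\Omega(\eps \cdot \max_j r^{(j)})$, or more precisely $\Omega(\eps \cdot \tilde r^{(j)})$, from most of $G$.

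The main step is a case analysis on whether the optimum remote-pseudoforest cost $\PF(\OPT)$ is dominated by a single group $X^{(j^\ast)}$ with a large ``separation radius'' $\tilde r^{(j^\ast)}$. Concretely: recall that for each $j$ with $|X^{(j)}| \ge 2k^{1+\eps}+k$ we have sets $S^{(j)}, T^{(j)}$ of size $k$ with $\rho(S^{(j)}, T^{(j)}) \ge \frac{\eps}{2}\tilde r^{(j)}$ (from Lemma~\ref{lemma:find_ST}), a set $U^{(j)}$ of the $k$ points furthest from $x^{(j)}$, and a set $P^{(j)}$ of $k$ points inside the small ball $\tilde B^{(j)}$. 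First I would argue that if $\PF(\OPT)$ is small compared to $k \cdot \tilde r^{(j)}$ for all $j$ — say $\PF(\OPT) \le c \cdot k \cdot \max_j \tilde r^{(j)}$ for a suitable constant $c$ — then just taking $G'' = G \cup (\text{any }k-|G|\text{ points from } P^{(j^\ast)} \cup U^{(j^\ast)})$ for the maximizing $j^\ast$ already works, because $U^{(j^\ast)}$ alone gives $k$ points each at distance $\ge \tilde r^{(j^\ast)}$ from most of the mass, hence $\PF$-cost $\Omega(k \cdot \tilde r^{(j^\ast)}) = \Omega(\PF(\OPT))$; here we essentially discard $G$ and rebuild from scratch inside one group. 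In the complementary case $\PF(\OPT) \ge c \cdot k \cdot \max_j \tilde r^{(j)}$, we instead keep $G$ and pad using $S^{(j)}$ and $T^{(j)}$: the point is that for the group $j^\ast$ achieving $\max_j \tilde r^{(j)}$, randomly choosing to add points from either $S^{(j^\ast)}$ or $T^{(j^\ast)}$ (each of size $k \ge k - |G|$) ensures that every fixed point $y \in G$ is, with probability $\ge \tfrac12$, at distance $\ge \tfrac{\eps}{4}\tilde r^{(j^\ast)}$ from the added set — because $y$ cannot be simultaneously $\tfrac{\eps}{4}\tilde r^{(j^\ast)}$-close to both $S^{(j^\ast)}$ and $T^{(j^\ast)}$, as $\rho(S^{(j^\ast)},T^{(j^\ast)}) \ge \tfrac{\eps}{2}\tilde r^{(j^\ast)}$. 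Taking expectations, $\BE\big[\sum_{x \in G} \cost_{G''}(x)\big] \ge \tfrac12 \sum_{x \in G}\min\!\big(\cost_{G'}(x), \tfrac{\eps}{4}\tilde r^{(j^\ast)}\big)$, and since we are in the regime where $\PF(\OPT) = O(k \cdot \tilde r^{(j^\ast)})$ and also $\sum_{x\in G}\cost_{G'}(x) \ge 0.7\PF(\OPT)$ (so the truncation at $\tfrac{\eps}{4}\tilde r^{(j^\ast)}$ costs at most a $\Theta(1/\eps)$ factor per unit of the $O(k\tilde r^{(j^\ast)})$ budget), this is $\Omega(\eps)\cdot \PF(\OPT)$; derandomize by picking whichever of $S^{(j^\ast)}, T^{(j^\ast)}$ gives the larger sum.

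A subtlety to handle carefully: we must make sure there genuinely are at least $k - |G|$ points available to add from the relevant sets and not already used, and that $G'' \subset \bigcup_j(P^{(j)}\cup S^{(j)}\cup T^{(j)}\cup U^{(j)}\cup Y^{(j)})$; since $|G| \le k$ and each of $S^{(j^\ast)}, T^{(j^\ast)}, P^{(j^\ast)}, U^{(j^\ast)}$ has size $k$, and the points of $G$ lie in $U^{(j)} \cup Y^{(j)}$, we have room — though if $G$ overlaps $S^{(j^\ast)}$ (say) we should add from whichever of $S^{(j^\ast)}, T^{(j^\ast)}$ overlaps $G$ least, which only changes constants. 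The other place to be careful is stitching the two cases with consistent constants, and dealing with the degenerate groups where $|X^{(j)}| < 2k^{1+\eps}+k$ (there $S^{(j)}=T^{(j)}=\emptyset$, but then $\max_j r^{(j)}$ and $\max_j \tilde r^{(j)}$ logic still applies since those groups contribute their full $X^{(j)}$ to the coreset). I expect the main obstacle to be the truncation/budget accounting in the second case — showing that replacing each $\cost_{G'}(x)$ by $\min(\cost_{G'}(x), \Theta(\eps)\tilde r^{(j^\ast)})$ loses only an $O(1/\eps)$ factor rather than something worse — which hinges precisely on the fact that we are in the regime $\PF(\OPT) = O(k \cdot \tilde r^{(j^\ast)})$, so the total ``excess'' mass above the truncation level, summed over the $\le k$ points of $G$, is $O(k \cdot \tilde r^{(j^\ast)}) = O(\PF(\OPT))$, while the truncated sum retains an $\Omega(\eps)$ fraction.
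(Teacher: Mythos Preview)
Your proposal has a genuine gap: the scale you work with, $\max_j \tilde r^{(j)}$, is the wrong one, and this breaks your Case~2 accounting.

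Concretely, in Case~2 you assume $\PF(\OPT) \ge c\,k\,\max_j \tilde r^{(j)}$, but then in the truncation step you write ``we are in the regime where $\PF(\OPT) = O(k\cdot\tilde r^{(j^\ast)})$'' --- which is the \emph{opposite} inequality. The ratio $\PF(\OPT)\big/(k\,\tilde r^{(j^\ast)})$ is unbounded above: think of two groups, each tightly packed (so every $\tilde r^{(j)}$ is tiny) but sitting at enormous distance $D$ from one another. Then $\PF(\OPT)$ can be of order $D$, while $\max_j \tilde r^{(j)}$ is arbitrarily small. Your truncation at level $\tfrac{\eps}{4}\tilde r^{(j^\ast)}$ then caps the whole sum at $k\cdot\tfrac{\eps}{4}\tilde r^{(j^\ast)}$, which is not $\Omega(\eps)\,\PF(\OPT)$. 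Your Case~1 does not rescue this regime either: the claim that $U^{(j^\ast)}$ ``gives $k$ points each at distance $\ge \tilde r^{(j^\ast)}$ from most of the mass, hence PF-cost $\Omega(k\,\tilde r^{(j^\ast)})$'' is false --- the $k$ furthest points from $x^{(j^\ast)}$ can themselves be clustered in a tiny ball, so $\PF(U^{(j^\ast)})$ can be near zero. And in any case the lemma demands $G''\supset G$, so ``discard $G$ and rebuild'' is not on the table.

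The paper's proof repairs exactly this by working with the \emph{global} radius $\tilde r$ of $\bigcup_j(X^{(j)}\setminus U^{(j)})$ rather than $\max_j \tilde r^{(j)}$. Proposition~\ref{prop:jj'} is the key missing ingredient: it shows there exist (possibly different) indices $j,j'$ with $\rho(S^{(j)},T^{(j')})\ge \tfrac{\eps}{10}\tilde r$, so the $S$/$T$ separation is at the correct global scale even when individual $\tilde r^{(j)}$'s are tiny. The paper then does a \emph{per-point} split on whether $\cost_{G'}(x)\ge 3\tilde r$, and uses a \emph{third} padding option $G\cup P^{(j)}$ for the large-cost points (these are far from the whole radius-$\tilde r$ ball, so adding $P^{(j)}$, which sits inside it, barely hurts them). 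Averaging over the three options $P^{(j)},S^{(j)},T^{(j')}$ then gives $\sum_{x\in G}\cost_{G''}(x)\ge \tfrac{\eps}{90}\sum_{x\in G}\cost_{G'}(x)$ directly, with no global dichotomy on $\PF(\OPT)$ needed.
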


To see why \Cref{lem:better-coreset-main-analysis} is sufficient to prove \Cref{thm:pseudoforest_coreset}, since $|G| \le k$ and $|G''| \ge k$ we can choose a set $\hat{G}$ of size $k$ such that $G \subseteq \hat{G} \subseteq G''$. Then, $\hat{G} \subset \bigcup_j (P^{(j)} \cup S^{(j)} \cup T^{(j)} \cup U^{(j)} \cup Y^{(j)})$ and
\[\PF(\hat{G}) = \sum_{x \in \hat{G}} \cost_{\hat{G}}(x) \ge \sum_{x \in G} \cost_{\hat{G}}(x) \ge \sum_{x \in G} \cost_{G''}(x),\]
where the last inequality holds because the cost of $x$ never increases from $\hat{G}$ to a larger set $G''$. Finally, by \Cref{lem:better-coreset-main-analysis}, this means $\PF(\hat{G}) \ge \Omega(\eps) \cdot \PF(\OPT)$, as desired.

We now prove \Cref{lem:better-coreset-main-analysis}. First, we show how to construct $G''$.
Let $g = |G|$: note that $g \le k$. If $g = k$, then in fact $G = G'$ and we can set $G'' = G$, which completes the proof by \eqref{eq:G'_cost} and Property \ref{prop:3}. 

Hence, from now on, we may assume that $g < k$.
Recall that $X^{(j)} \backslash U^{(j)}$ is contained in a ball of radius $\tilde{r}^{(j)}$. Next, let $\tilde{r}$ be the radius of $\bigcup_j (X^{(j)} \backslash U^{(j)})$. (Note that for $|X^{(j)}| < 2k^{1+\eps}+k$, $X^{(j)} \backslash U^{(j)}$ is empty.)
We claim the following proposition.

\begin{proposition} \label{prop:jj'}
    There exist $j, j' \le m$, possibly equal, such that $\rho(S^{(j)}, T^{(j')}) \ge \frac{\eps}{10} \cdot \tilde{r}$.
\end{proposition}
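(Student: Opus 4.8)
The plan is to split on whether a single group radius $\tilde r^{(j)}$ is already a constant fraction of the global radius $\tilde r$, and in the harder case to exhibit two groups whose ``centers'' $x^{(j_1)},x^{(j_2)}$ are far apart and then argue that the sets produced by \textsc{FindST} lie near those centers. For the easy case: if $\tilde r = 0$ the statement is vacuous, so assume $\tilde r > 0$. Suppose some group $j$ with $|X^{(j)}| \ge 2k^{1+\eps}+k$ has $\tilde r^{(j)} \ge \tilde r/5$. Since $S^{(j)}, T^{(j)} = \textsc{FindST}(X^{(j)}, k, \eps, \tilde r^{(j)})$, \Cref{lemma:find_ST} gives $\rho(S^{(j)}, T^{(j)}) \ge \tfrac{\eps}{2}\tilde r^{(j)} \ge \tfrac{\eps}{10}\tilde r$, and we are done with $j = j'$. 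Hence from now on I may assume every group $j$ with $|X^{(j)}| \ge 2k^{1+\eps}+k$ has $\tilde r^{(j)} < \tilde r/5$ (for the other groups $X^{(j)}\setminus U^{(j)} = \emptyset$ by the convention in the construction).

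Next I would locate two far-apart groups. The set $\bigcup_j (X^{(j)}\setminus U^{(j)})$ has enclosing radius $\tilde r > 0$, hence diameter at least $\tilde r$; pick $a,b$ in it with $\rho(a,b) \ge \tilde r$, say $a \in X^{(j_1)}\setminus U^{(j_1)}$ and $b \in X^{(j_2)}\setminus U^{(j_2)}$. Both $j_1, j_2$ then satisfy $|X^{(j)}| \ge 2k^{1+\eps}+k$ (otherwise the corresponding set is empty), so \textsc{FindST} was run on each, the four sets $S^{(j_1)}, T^{(j_1)}, S^{(j_2)}, T^{(j_2)}$ are genuine $k$-point sets, and $\tilde r^{(j_1)}, \tilde r^{(j_2)} < \tilde r/5$. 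Since $U^{(j)}$ contains every point of $X^{(j)}$ outside the ball $\tilde B^{(j)} = B(x^{(j)}, \tilde r^{(j)})$, we have $X^{(j)}\setminus U^{(j)} \subseteq \tilde B^{(j)}$, so this set has diameter $< 2\tilde r/5 < \tilde r \le \rho(a,b)$; in particular $j_1 \ne j_2$. Moreover $\rho(a, x^{(j_1)}) \le \tilde r^{(j_1)} < \tilde r/5$ and $\rho(b, x^{(j_2)}) < \tilde r/5$, so $\rho(x^{(j_1)}, x^{(j_2)}) \ge \rho(a,b) - \tilde r^{(j_1)} - \tilde r^{(j_2)} > 3\tilde r/5$.

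It then suffices to prove the structural claim that, whenever \textsc{FindST} is run on a group $j$, at least one of $S^{(j)}, T^{(j)}$ is contained in a ball of radius $O(\tilde r^{(j)})$ about $x^{(j)}$ (with the constant small enough relative to the $3\tilde r/5$ separation). I would prove this by casing on which branch of \Cref{alg:find_ST} fires. In the ``clustered'' branch, $S^{(j)} \subseteq B(x, \tilde r^{(j)}/2)$ for a point $x$ with $|N_{1/\eps}(x)| \ge k$; if that ball contains a data point of $\tilde B^{(j)}$ then $x$, and hence $S^{(j)}$, is close to $x^{(j)}$, whereas if it is disjoint from $\tilde B^{(j)}$ its $\ge k$ points are exactly the $\le k$ points of $X^{(j)}$ outside $\tilde B^{(j)}$, which forces every admissible $T^{(j)}$ (points at distance $\ge \tilde r^{(j)}$ from $x$) to lie inside $\tilde B^{(j)}$. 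The ``peeling'' branch is handled by the same dichotomy: if the peeled region meets $X^{(j)}\setminus U^{(j)}$ then $S^{(j)}$ is close to $x^{(j)}$, and otherwise $S^{(j)}$ consists only of points outside $\tilde B^{(j)}$, in which case --- since the $r'$-neighborhood of $S^{(j)}$ has size at most $2k^{1+\eps} < |X^{(j)}|$ and swallows all those outside points --- the chosen $T^{(j)}$ is again forced into $\tilde B^{(j)}$. Granting the claim, pick for $j_1$ and $j_2$ output sets $A^{(j_1)}, A^{(j_2)}$ lying in such small balls; any $p \in A^{(j_1)}, q \in A^{(j_2)}$ satisfy $\rho(p,q) \ge \rho(x^{(j_1)}, x^{(j_2)}) - O(\tilde r^{(j_1)}) - O(\tilde r^{(j_2)}) > 3\tilde r/5 - O(\tilde r/5) \ge \tfrac{\eps}{10}\tilde r$, and after possibly swapping the roles of $j_1, j_2$ so that the first set is an $S$ and the second a $T$, this is exactly the stated conclusion.

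The main obstacle I anticipate is making the constants line up: the ``clustered'' branch most naturally only places $S^{(j)}$ inside $B(x^{(j)}, 2\tilde r^{(j)})$, and two balls of radius $2\tilde r^{(j)} < 2\tilde r/5$ around centers only $3\tilde r/5$ apart need not be disjoint. Resolving this will require either sharpening where $S^{(j)}$ sits --- e.g.\ showing $S^{(j)} \subseteq B(x^{(j)}, \tfrac54 \tilde r^{(j)})$, or that $S^{(j)} \subseteq \tilde B^{(j)}$ outright by exploiting that the dense ball around $x$ must overlap the bulk $X^{(j)}\setminus U^{(j)}$ --- or arguing that in the remaining sub-case it is $T^{(j)}$ that is well localized. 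The key idea that makes all of this go through is the outlier-degeneracy observation: when $S^{(j)}$ collapses onto the $\le k$ points of $X^{(j)}$ outside $\tilde B^{(j)}$, the complementary set $T^{(j)}$ is pinned into $\tilde B^{(j)}$, which is exactly the small ball around $x^{(j)}$ we want.
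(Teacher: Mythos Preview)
Your case split and the easy case match the paper exactly. In the hard case the paper's argument is a single line: after finding centers $x^{(j)},x^{(j')}$ at distance $\ge 0.8\tilde r$, it simply asserts $\rho(S^{(j)},T^{(j')}) \ge 0.8\tilde r - \tilde r^{(j)} - \tilde r^{(j')}$, i.e.\ it treats $S^{(j)}$ and $T^{(j')}$ as if they lie inside $\tilde B^{(j)}$ and $\tilde B^{(j')}$ respectively. You are right that this containment is not immediate from the description of \textsc{FindST}, and your attempt to justify it via a ``one of $S^{(j)},T^{(j)}$ is localized near $x^{(j)}$'' claim is a natural thing to try. However, your route has two genuine gaps.

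First, the structural claim fails in the peeling branch. There $S^{(j)}$ is a union of several peeled balls $S_1,\dots,S_\ell$ around arbitrarily chosen centers $x_1,\dots,x_\ell$; one peeled ball can sit among the $\le k$ outliers of $X^{(j)}$ while another sits inside $\tilde B^{(j)}$. So ``the peeled region meets $X^{(j)}\setminus U^{(j)}$'' does \emph{not} force $S^{(j)}\subseteq B(x^{(j)},O(\tilde r^{(j)}))$. And in that mixed situation $T^{(j)}$ need not be localized either: the outliers not swallowed by $S^{(j)}$ can still land in $T^{(j)}$. Your dichotomy ``either $S^{(j)}$ is near $x^{(j)}$, or $S^{(j)}$ consists only of outliers'' is therefore not exhaustive.

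Second, even granting the structural claim, your concluding ``swap'' step does not work. The proposition requires specifically an $(S,T)$ pair: $\rho(S^{(j)},T^{(j')})$. If the localized set is $S^{(j_1)}$ for $j_1$ and also $S^{(j_2)}$ for $j_2$ (or $T$ for both), you have shown only that $\rho(S^{(j_1)},S^{(j_2)})$ is large, and no relabeling of $j_1,j_2$ turns this into an $S$--$T$ distance. Since $T^{(j_1)},T^{(j_2)}$ are unconstrained in this case, you cannot conclude the proposition from your claim as stated.
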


\begin{proof}
    Let $A \subset [m]$ be the subset of indices $j$ such that $|X^{(j)}| \ge 2k^{1+\eps}+k$.
    Suppose that $\tilde{r} \le 5 \cdot \max_{j \in A} \tilde{r}^{(j)}$. Then, by setting $j = j'$ to be $\arg\max_{j \subset A} \tilde{r}^{(j)},$ we have that $\rho(S^{(j)}, T^{(j')}) \ge \frac{\eps}{2} \cdot \max_{j \in A} \tilde{r}^{(j)} \ge \frac{\eps}{10} \cdot \tilde{r}$.

    Otherwise, $\tilde{r} > 5 \cdot \max_{j \in A} \tilde{r}^{(j)}$. So, if we pick $j$ arbitrarily, the distance between the center of the ball $\tilde{B}^j$ and the furthest center $\tilde{B}^{j'}$ must be at least $0.8 \cdot \tilde{r}$, or else the ball of radius $0.8 \cdot \tilde{r} + \max_{j \in A} \tilde{r}^{(j)} < \tilde{r}$ around the center of $\tilde{B}^j$ contains all of $\bigcup_j (X^{(j)} \backslash U^{(j)})$. Then, $d(S^{(j)}, T^{(j')}) \ge 0.8 \cdot \tilde{r} - \tilde{r}^{(j)} - \tilde{r}^{(j')} \ge 0.4 \cdot \tilde{r}$, which is at least $\frac{\eps}{10} \cdot \tilde{r}$.
\end{proof}

We now prove \Cref{lem:better-coreset-main-analysis}.
\begin{proof}
    Recall that we already proved the lemma in the case that $G = G'$. So, we may assume $|G| < k$ and $G' \backslash G$ is nonempty.
    We claim that we can set $G''$ to be one of $G \cup P^{(j)}$, $G \cup S^{(j)}$, or $G \cup T^{(j')}$, for $j, j'$ chosen in \Cref{prop:jj'}. 
    
    First, note that $P^{(j)}, S^{(j)}$, and $T^{(j')}$ have size $k$, so all three choices of $G''$ have size at least $k$.

    First, note that $G' \backslash G$ is assumed to be nonempty, which means it is contained in $\bigcup_j (X^{(j)} \backslash U^{(j)})$ by Property \ref{prop:4}, which is contained in the radius $\tilde{r}$ ball. Therefore, $G' \backslash G$ has nonempty intersection with $X \backslash (\bigcup_j U^{(j)})$. 
    Now, fix any point $x \in G$. If $\cost_{G'}(x) \ge 3 \cdot \tilde{r}$, then because $G' \backslash G$ has a point in the radius $\tilde{r}$ ball containing $X \backslash (\bigcup_j U^{(j)})$ (and this point is not $x$ since $x \in G$), $x$ has distance at least $\tilde{r}$ from the radius $\tilde{r}$ ball. So, $\cost_{G \cup P^{(j)}}(x) \ge \cost_{G'}(x) - 2 \tilde{r} \ge \frac{1}{3} \cdot \cost_{G'}(x)$.
    Alternatively, if $\cost_{G'}(x) < 3 \cdot \tilde{r}$, then $\cost_{G \cup S^{(j)}}(x) \ge \min(\cost_{G'}(x), \rho(x, S^{(j)}))$ and likewise, $\cost_{G \cup T^{(j')}}(x) \ge \min(\cost_{G'}(x), \rho(x, T^{(j')}))$. So, 
\[\cost_{G \cup S^{(j)}}(x)+\cost_{G \cup T^{(j')}}(x) \ge \min(\cost_{G'}(x), \rho(S^{(j)}, T^{(j')})) \ge \min\left(\cost_{G'}(x), \frac{\eps}{10} \cdot \tilde{r}\right) \ge \frac{\eps}{30} \cdot \cost_{G'}(x).\]
    In all cases, we have that
\[\cost_{G \cup P^{(j)}}(x) + \cost_{G \cup S^{(j)}}(x) + \cost_{G \cup T^{(j')}}(x) \ge \frac{\eps}{30} \cdot \cost_{G'}(x),\]
    so adding over all $x \in G$ and choosing among the three choices randomly, we have that the total cost in expectation is at least
\[\frac{1}{3} \cdot \left(\sum_{x \in G} \frac{\eps}{30} \cdot \cost_{G'}(x)\right) = \frac{\eps}{90} \cdot \sum_{x \in G} \cost_{G'}(x) \ge \frac{\eps}{90} \cdot 0.7 \cdot \PF(\OPT) \ge \frac{\eps}{150} \cdot \PF(\OPT).\]
    Hence, for one of the three choices of $G''$, the pseudoforest cost is at least $\frac{\eps}{150} \cdot \PF(\OPT)$.
\end{proof}

\section{Coreset for Remote-Matching} \label{sec:coreset_matching}

In this section, we prove why the algorithm given in Algorithm \ref{alg:mwm_coreset} creates an $O(1)$-approximate composable coreset. First, we describe the algorithm in words.

\subsection{Algorithm Description} 

We start by running GMM on the dataset $X$ for $k$ steps, to return $k$ points $Y = \{y_1, \dots, y_k\}$. Again, let the subsets $S_1, \dots, S_k$ be a partitioning of $X,$ where $x \in X$ is in $S_i$ if $y_i$ is the closest point in $Y$ to $x$ (breaking ties arbitrarily). Note that $y_i \in S_i$ for all $i$.

To create our coreset $C$ for $X$, if $|X| \le 3k$ we simply define $C = X$. Otherwise, we start by initializing $C$ to be $Y$, so $C$ currently has size $k$. Next, for $k/2$ steps, we find any two points in $X \backslash C$ that are in the same partition piece $S_i$, and add both of them to $C$. Hence, at the end $|C| = 2k$. Note that this procedure is always doable, since we are assuming $|X| \ge 3k+1,$ which means if we have picked at most $2k$ total elements, there are $k+1$ remaining elements in $X$, of which at least $2$ must be in the same $S_i$ by the pigeonhole principle.

\subsection{Analysis}

In this subsection, we prove that the algorithm generates an $O(1)$-approximate composable coreset, by proving \Cref{thm:matching_coreset}.

\begin{proof}[Proof of Theorem \ref{thm:matching_coreset}]
    Suppose we run this algorithm for each of $m$ datasets, $X^{(1)}, \dots, X^{(m)},$ to generate coresets $C^{(1)}, \dots, C^{(m)}$. We wish to show that the optimum $k$-matching cost of $C = \bigcup_{j=1}^m C^{(j)}$ is at least $\Omega(1)$ times the optimum $k$-matching cost of $X = \bigcup_{j=1}^m X^{(j)}$.
    
    Let $Y^{(j)} = \{y_1^{(j)}, \dots, y_k^{(j)}\}$ represent the $k$ points we obtained by running GMM on $X^{(j)}$, and let $r^{(j)}$ be the maximum distance from any point in $X^{(j)} \backslash Y^{(j)}$ to its closest point in $Y^{(j)}$. Then, note that all points in $Y^{(j)}$ are pairwise separated by at least $r^{(j)}.$ Let $r = \max_{1 \le j \le m} r^{(j)}$.
    
    First, suppose that the optimum $k$-matching cost of $X$ is $M \le 5 r \cdot k$. In this case, for the $r^{(j)}$ that equals $r$, the GMM algorithm finds $k$ points that are pairwise separated by at least $r^{(j)} = r$. Since $C^{(j)} \supset Y^{(j)},$ this means that the full coreset $C$ contains $k$ points that are pairwise separated by $r$, which has $k$-matching cost at least $r \cdot \frac{k}{2}$. Hence, we have a $10$-approximate coreset.
    
    Alternatively, the optimum $k$-matching cost of $X$ is $M \ge 5r \cdot k$. Let $S_i^{(j)}$ represent the set $S_i$ for $X^{(j)}$, and suppose $W$ is an optimal set of $k$ points in $X$ with $\MWM(W) = M$. Let $W^{(j)} = W \cap X^{(j)}$. Also, let $W_i^{(j)} = W \cap S_i^{(j)}$ and $b_i^{(j)}$ be the \emph{parity} of $|W_i^{(j)}|,$ i.e., $b_i^{(j)} = 1$ if $|W_i^{(j)}|$ is odd and $b_i^{(j)} = 0$ if $|W_i^{(j)}|$ is even. In addition, let $\tilde{W}$ be the multiset of $k$ points formed by mapping each point in $W_i^{(j)}$ to $y_i^{(j)}$. In other words, $\tilde{W}$ consists of each $y_i^{(j)}$ repeated $|W_i^{(j)}|$ times. Note that since each $W_i^{(j)}$ has distance at most $r^{(j)} \le r$ from $y_i^{(j)},$ all pairwise distances change by at most $2r,$ which means the matching cost difference $|\MWM(\tilde{W}) - \MWM(W)| \le \frac{1}{2} \cdot 2r \cdot k = rk.$ Also, note that $\tilde{W}$ only consists of points of the form $y_i^{(j)}$, with the parity of the number of times $y_i^{(j)}$ appears in $\tilde{W}$ equaling $b_i^{(j)}$.
    
    Next, we create a similar set $W' \subset C$. For each $j \le m,$ define $k^{(j)} = |W^{(j)}|$. We will find a set $(W')^{(j)} \subset C^{(j)}$ of size $k^{(j)},$ such that the parity of $|(W')^{(j)} \cap S_i^{(j)}|$ equals $b_i^{(j)}$ for all $i \le k$. To do so, first note that if $|X^{(j)}| \le 3k$, then $C^{(j)} = X^{(j)},$ so we can just choose $(W')^{(j)} = W^{(j)}$. Otherwise, $|X^{(j)}| \ge 3k+1,$ and $C^{(j)}$ consists of $2k$ points. In addition, $C^{(j)} \supset Z^{(j)}$. Now, we start by including in $(W')^{(j)}$ each point $y_i^{(j)}$ such that $b_i^{(j)} = 1.$ Since $b_i^{(j)} = 1$ means that $|W_i^{(j)}|$ is odd, for any fixed $j$ the number of $b_i^{(j)} = 1$ is at most $k^{(j)}$ and has the same parity as $k^{(j)}$. Now, as long as $|(W')^{(j)}| < k^{(j)},$ this means $|C^{(j)} \backslash (W')^{(j)}| \ge k+1,$ which means there are two points in $C^{(j)} \backslash (W')^{(j)}$ that are in the same $S_i^{(j)}$, by pigeonhole principle. We can add both of them to $(W')^{(j)}$. We can keep repeating this procedure until $|(W')^{(j)}| = k^{(j)}$, and note that this never changes the parity of each $|(W')^{(j)} \cap S_i^{(j)}|$.
    
    Our set $W' \subset C$ will just be $\bigcup_{j=1}^m (W')^{(j)}.$ Note that $|W'| = \sum_{j=1}^m k^{(j)} = k$, and $|W' \cap S_i^{(j)}|$ has parity $b_i^{(j)}$, just like $W$. Hence, we can create the multiset $\tilde{W}'$ by mapping each point $w' \in W' \cap S_i^{(j)}$ to $y_i^{(j)}.$ Again, each point moves by at most $r$, so all pairwise distances change by at most $2r$, which means that $|\MWM(\tilde{W}') - \MWM(W')| \le rk.$
    
    Finally, we will see that $\MWM(\tilde{W}')= \MWM(\tilde{W})$. Note that both $\tilde{W}$ and $\tilde{W}'$ are multisets of $y_i^{(j)}$ points, each repeated an odd number of times if and only if $b_i^{(j)} = 1$. However, we saw in the proof of Lemma \ref{lem:mwm_lem_2} that $\MWM(\tilde{W})$ equals the minimum-weight matching cost of simply including each point $y_i^{(j)}$ exactly $b_i^{(j)}$ times. This is because there exists an optimal matching that keeps matching duplicate points together as long as it is possible. The same holds for $\MWM(\tilde{W}'),$ which means $\MWM(\tilde{W}) = \MWM(\tilde{W}')$.
    
    Overall, this means that $|\MWM(W')-\MWM(W)| \le |\MWM(W')-\MWM(\tilde{W}')| + |\MWM(\tilde{W}')-\MWM(\tilde{W})| + |\MWM(\tilde{W}) - \MWM(W)| \le rk + 0 + rk = 2rk$. But since we assumed that $\MWM(W) = M \ge 5rk,$ this means the $k$-matching for $C$ is at least $M - 2rk \ge \frac{M}{2}.$ Hence, we get a $2$-approximate coreset.
    
    In either case, we obtain an $O(1)$-approximate coreset, as desired.
\end{proof}

\newcommand{\etalchar}[1]{$^{#1}$}

%%%%%%%%%%%%%%%%%%%%%%%%%%%%%%%%%%%%%%%%%%%%%%%%%%%%%%%%%%%%%%%%%%%%%%%%%%%%%%%
%%%%%%%%%%%%%%%%%%%%%%%%%%%%%%%%%%%%%%%%%%%%%%%%%%%%%%%%%%%%%%%%%%%%%%%%%%%%%%%
% APPENDIX
%%%%%%%%%%%%%%%%%%%%%%%%%%%%%%%%%%%%%%%%%%%%%%%%%%%%%%%%%%%%%%%%%%%%%%%%%%%%%%%
%%%%%%%%%%%%%%%%%%%%%%%%%%%%%%%%%%%%%%%%%%%%%%%%%%%%%%%%%%%%%%%%%%%%%%%%%%%%%%%
\newpage
\appendix
\onecolumn

\section{Offline Algorithm for Remote-Pseudoforest} \label{sec:offline_pseudoforest}

In this section, we prove an alternative method of generating an $O(1)$-approximate offline remote-pseudoforest algorithm.

\begin{theorem}[Remote-Pseudoforest, Offline Algorithm] \label{thm:pseudoforest_offline}
    Given a dataset $X = \{x_1, \dots, x_n\}$ and an integer $k \le n$, Algorithm \ref{alg:pf_offline} outputs an $O(1)$-approximate set $Z$ for remote-pseudoforest. By this, we mean that $|Z| = k$ and $\PF(Z)$ is at least $\Omega(1) \times \max_{Z' \subset [n]: |Z'| = k} \PF(Z')$.
\end{theorem}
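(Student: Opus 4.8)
The plan is to use GMM to reduce remote-pseudoforest to a combinatorial optimization over geometric scales, and to solve that optimization by dynamic programming on a net hierarchy (an HST). First, run GMM for $k$ steps to obtain $Y = \{y_1, \dots, y_k\}$, and let $r = \max_{x \in X} \rho(x, Y)$ be its covering radius. By \Cref{prop:gmm}, every pair of points in $Y$ is at distance at least $r$, so $\PF(Y) \ge k \cdot r$; hence if $\PF_k(X) = O(k \cdot r)$ we simply output $Y$. So from now on assume $\PF_k(X) \gg k \cdot r$, and after rescaling fix a geometric grid of scales $\{2^i\}_{i \in \BZ}$.

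The structural heart of the argument is that a near-optimal solution looks like a ``broom'': a small set of points that are mutually far apart and far from everything else, together with successively larger batches of points, each batch sitting inside a single cell of the previous batch. To formalize this, build a net hierarchy of $X$ --- for each scale $2^i$, a partition of $X$ into cells of diameter $\Theta(2^i)$ such that the scale-$2^i$ cells refine the scale-$2^{i+1}$ cells --- which is exactly an HST of the type underlying \Cref{prop:mst_cc_count}. Two facts drive the analysis. (i) For any $Z$ of size $k$, orienting a minimum spanning tree of $Z$ toward a root gives $\rho(z, Z \setminus z) \le$ (weight of $z$'s parent edge), so $\PF(Z) \le 2\,\MST(Z) \le O(\MST_k(X))$; by \Cref{prop:mst_cc_count} the latter is, up to a constant, $\sum_i 2^i \cdot (\text{number of cells at scale } 2^i \text{ meeting an optimal } k\text{-set})$, an upper bound on the value of any ``occupancy profile''. (ii) Conversely, given an occupancy profile consistent with the actual cell populations, the nested-cell construction realizes a genuine $k$-subset achieving a constant fraction of $\sum_i 2^i \cdot (\text{number of points introduced at scale } 2^i)$: a point introduced at scale $2^i$ sits in a cell none of whose scale-$2^i$-or-coarser relatives contain another chosen point, so its nearest-neighbor distance is $\Omega(2^i)$.

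The algorithm is a bottom-up DP on the HST: for each cell (tree node) $v$ at scale $2^i$ and each budget $b \le k$, let $f(v, b)$ be the maximum, over placements of $b$ chosen points inside $v$, of the total $\PF$-contribution generated strictly inside $v$'s subtree; the recurrence combines the children of $v$ by a knapsack over how to split $b$ among them, then adds $2^i$ times the number of children that receive exactly one chosen point (these are the points separated precisely at $v$'s level). The HST can be built in polynomial time with polynomially many nodes (contracting degree-one paths, and/or using the connected-components construction of \Cref{prop:mst_cc_count} to bound the number of relevant scales), so the DP runs in polynomial time and is deterministic. We solve $f(\text{root}, k)$, read off a corresponding $k$-point set $Z$, and output it; since cell diameters are $\Theta(2^i)$ and the parent cell of a point's ``separation cell'' contains no second chosen point, the $\PF$ of $Z$ in the true metric is within a constant of the DP value, giving an $O(1)$-approximation.

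The step I expect to be the main obstacle is the realizability in (ii): the DP must only ever propose occupancy profiles that correspond to actual subsets of $X$ --- in particular, when the construction commits to descending into one cell at a finer scale, that cell must still contain enough points (and a fine enough sub-net) to continue. When $n$ is much larger than $k$ this is easy (some cell has $\ge k$ points by pigeonhole), but the theorem is claimed for all $k \le n$, so the DP state must encode cell populations to forbid infeasible profiles, and one then has to verify that the best feasible profile still matches the $\MST_k(X)$-based upper bound of (i) up to a constant. The remaining ingredients --- the GMM easy case, the bound $\PF \le 2\,\MST$, and the constant-factor comparison between the net metric and the true metric along the chosen set --- are routine.
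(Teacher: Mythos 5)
There is a genuine gap, and it sits exactly in your step (ii). You build a hierarchical decomposition of $X$ into cells of diameter $\Theta(2^i)$ at each scale and then claim that a chosen point whose cell (at scale $2^i$) contains no other chosen point has nearest-neighbor distance $\Omega(2^i)$. For a general metric, no deterministic hierarchy of diameter-$\Theta(2^i)$ cells has this inter-cell separation property: two points in different cells at scale $2^i$ can be arbitrarily close (think of a path metric, where any partition into cells of diameter $\Theta(2^i)$ must cut between consecutive points at distance $1$). Having both diameter $O(2^i)$ and separation $\Omega(2^i)$ at every scale would amount to a constant-distortion HST embedding, which does not exist; randomized HSTs only give $O(\log n)$ expected distortion, which would degrade your guarantee to a logarithmic factor. (If instead you take the cells to be connected components of $G_{2^i}(X)$, as in the construction behind \Cref{prop:mst_cc_count}, separation holds but the diameter bound fails, and then the other direction breaks: the optimum set no longer induces a high-value occupancy profile, since a sparse well-spread $k$-subset of a dense $X$ can live entirely inside one giant component at every relevant scale.) Relatedly, your upper bound (i) via $\PF \le 2\MST$ and $\MST_k(X)$ is both unnecessary and too lossy to close the argument: $\PF_k(X)$ can be arbitrarily smaller than $\MST_k(X)$ (e.g., $X$ consisting of $k/2$ well-separated pairs of nearly coincident points), so no algorithm can output a set whose $\PF$ is comparable to the $\MST$-based quantity; the comparison has to be made directly against the optimal pseudoforest solution. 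Finally, the obstacle you flag (feasibility of occupancy profiles when $n$ is close to $k$) is not the real difficulty, and the GMM case split you open with plays no role in the rest of your argument.

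The paper's proof repairs precisely the separation issue by working with nested nets rather than partitions: it builds $S_0 \subseteq S_1 \subseteq \cdots \subseteq S_L = X$, where $S_\ell$ is a maximal $\tfrac{5^{-\ell}}{20}$-separated subset, forms the tree in which each level-$(\ell+1)$ net point hangs off its closest level-$\ell$ net point (so parent edges have length at most $\tfrac{5^{-\ell}}{20}$), and the DP (\Cref{alg:pf_offline}) selects an \emph{antichain} of $k$ tree nodes maximizing $\sum 5^{-\ell(v)}$, outputting the net points themselves rather than arbitrary points of cells. The lower bound (\Cref{lem:pf_analysis_2}) then follows because same-level antichain nodes are separated by the net property, and for nodes at levels $\ell < \ell'$ the antichain condition forces the level-$(\ell+1)$ ancestor of the deeper node to be a \emph{different} net point than the shallower node, so the geometric sum of parent-edge lengths yields distance $\Omega(5^{-\ell})$; the upper bound (\Cref{lem:pf_analysis_1}) maps each optimal point with nearest-neighbor distance about $5^{-\ell}$ to its level-$\ell$ ancestor and checks no collisions or ancestor relations arise, giving an antichain of value at least $\PF(\OPT)$. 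If you want to salvage your plan, you should replace "one chosen point per cell" by "select the net point representing the node," restrict the DP to antichains of the net tree, and prove the two sandwiching lemmas above; as written, the realization step fails and the claimed $O(1)$ factor does not follow.
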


\subsection{Algorithm}

Given a dataset $X = \{x_1, \dots, x_n\},$ we recall that the goal of the remote-matching problem is to find a subset $Z = \{z_1, \dots, z_k\} \subset X$ of $k$ points, such that the pseudoforest cost of $Z$, 
\[\PF(Z) = \sum_{j=1}^k \rho(z_j, Z \backslash z_j),\]
is approximately maximized.

By scaling, we assume without loss of generality that the diameter of $X$ is $1/20$, and let $T$ be the optimum $k$-pseudoforest value. Our goal is to find a set of $k$ points that form a pseudoforest of value $\Omega(T)$.

First, we will create a series of nets. Let $\Delta$ represent the aspect ratio, which in our case, since the diameter is $1/20$, is the reciprocal of the minimum pairwise distance between two points in $X$. We will allow a runtime that is linear in $\log \Delta$, although we remark that if $n \ge 2k$, we may assume $\Delta \le O(k)$ without loss of generality. To see why, if $x, y \in X$ are two points of distance $1/20$ away from each other (which exist since $\diam(X) = 1/20$) then either at least $k$ points are of distance $1/40$ or further from $x$, or at least $k$ points are of distance $1/40$ or further from $y$. Hence, by choosing either $x$ and $k-1$ points far from $x$ or $y$ and $k-1$ points far from $y$, there exists $k$ points with minimum pseudoforest cost at least $1/40$. Therefore, if we replace every distance $\rho(x, y)$ with $\tilde{d}(x, y) = \max\left(\rho(x, y), \frac{c}{k}\right)$ for some small constant $c$ (note that $\tilde{d}$ is still a metric), every subset of $k$ points has its minimum pseudoforest cost change by at most $k \cdot \frac{c}{k} = c,$ since every pairwise distance changes by at most $\frac{c}{k}.$ Hence, if we find $k$ points with pseudoforest cost at least $\Omega(T-c)$ under $\tilde{d}$ (note that the optimum $k$-pseudoforest cost under the new metric is at least $T-c$), the same points has cost $\Omega(T-c)-c.$ For $c$ sufficiently small, this is $\Omega(T)$ since $T \ge 1/40$.

To describe our algorithm, we will first construct a tree as follows.
For each $\ell \ge 0$, let $S_\ell$ be a \emph{maximal} set of points in $X$ that are pairwise separated by $\frac{5^{-\ell}}{20}.$ Note that $S_0$ will only consist of a single point. We will create these sets in a greedy fashion starting from $S_0$, such that $S_0 \subset S_1 \subset \cdots$. We do this for $L = \lceil \log_5 \Delta \rceil$ levels, to produce $S_0 \subseteq S_1 \subseteq \dots \subseteq S_L = X$. We will create a tree based on these sets, with $S_0$ representing the root (as well as the $0^{\text{th}}$ level) and $S_\ell$ representing the nodes at the $\ell^{\text{th}}$ level.
Note that if $x \in S_\ell$ for some $\ell < L,$ $x$ is also in $S_{\ell+1}, \dots, S_L$. As a result, we represent a node $u$ in the tree as $u = x^{(\ell)},$ where $x \in X$ is the data point and $\ell$ is the level.
For any node $u = x^{\ell}$ in the tree, define $\ell(u) = \ell$ to be its level.

To construct the edges of the tree, a point $x^{(\ell+1)}$ for $x \in S_{\ell+1}$ has parent $y^{\ell}$, where $y$ is the closest point in $S_\ell$ to $x$. (If $x \in S_\ell$ also, then the parent of $x^{(\ell+1)}$ is $x^{(\ell)}$.)
Note that this distance between any node at level $\ell+1$ and its parent is at most $5^{-\ell}/20,$ or else the point could have been added to level $S_\ell,$ contradicting maximality.

Our algorithm works as follows. We will simply find a set of $k$ distinct nodes $V = \{v_1, \dots, v_k\}$ in this tree, such that no $v_i$ is an ancestor of any other $v_j$, that maximizes the value $\sum 5^{-\ell(v_i)}$.
We can write each $v_i = z_i^{(\ell_i)}$, where $\ell_i = \ell(v_i)$. Note that the $z_i$'s must be distinct points in $X$, or else the $z_i$'s would either not be distinct or one would be an ancestor of another. The algorithm outputs $Z =\{z_1, \dots, z_k\}$.

We now explain why this algorithm is efficiently implementable: we can implement it using dynamic programming. Note that the tree has total size at most $1 + L \cdot n \le O(\log \Delta \cdot n)$. Now, for each node $u$ in the tree and any $0 \le p \le k$, we define $DP[u, p]$ to be the maximum of $\sum_{v_i \in V} 5^{-\ell(v_i)},$ where $V$ is now a set of $p$ distinct nodes, with all nodes as descendants of $u$ (possibly including $u$), such that no two nodes in $V$ can be ancestors of each other. Note that $DP[u, 0] = 0$ for all nodes $u$, and $DP[u, 1] = 5^{-\ell(u)},$ since all of the descendants of $u$ have level at least $\ell(u)$. In addition, if it is impossible to pick $p$ descendants of $u$ such that no two are descendants of each other, we set $DP[u, p] = -\infty.$ Note that for any leaf node at level $L$, $DP[u, p] = -\infty$ for all $p \ge 2.$

Now, suppose that $u$ is at a level strictly smaller than $L$. Then, $u$ has $d$ children $u_1, \dots, u_d$ for some $d \ge 1$. To compute $DP[u, p]$ for $p \ge 2,$ note that we cannot use the node $u$ as it blocks all other nodes. Instead, we must use $p_1$ nodes from $u_1$ and its descendants, $p_2$ nodes from $u_2$ and its descendants, and so on, so that each $p_i \ge 0$ and $\sum_{i=1}^d p_i = p$. Hence, the dynamic programming problem we need to solve is 
\[\max\limits_{\substack{p_1, \dots, p_d \ge 0 \\ p_1 + \cdots+ p_d = p}} \left(\sum_{i=1}^{d} DP[u_i, p_i]\right).\]

Note that if $d = 1,$ the solution is simply that $DP[u, p] = DP[u_1, p]$ for all $p \ge 2.$ For $d = 2,$ this is a straightforward calculation, as we just need to maximize $DP[u_1, p_1]+DP[u_2, p-p_1]$ over all $0 \le p_1 \le p$ (with $p_2 = p-p_1$). For larger $d \ge 3,$ we can do this procedure one step at a time. Specifically, for each $1 \le d' \le d,$ we can define 
\[DP_{d'}[u, p] := \max\limits_{\substack{p_1, \dots, p_{d'} \ge 0 \\ p_1+\cdots+p_{d'} = p}} \left(\sum_{i=1}^{d'} DP[u_i,p_i]\right),\]
and note that $DP[u, p] = DP_d[u, p]$ for all $p \ge 2$. To construct $DP_{d'+1}[u, p]$ from $DP_{d'}[u, p]$, it simply equals $\max_{0 \le p' \le p} (DP_{d'}[u, p'] + DP[u_{d'+1}, p-p']).$

Hence, we can solve the dynamic program to find the maximum value. In addition, we can also find the corresponding set of nodes, by keeping track of which $p$ nodes contributed to the maximum value for each $DP[u, p]$ (and also for $DP_{d'}[u, p]$).

We provide pseudocode for the described algorithm, in Algorithm \ref{alg:pf_offline}.

\begin{algorithm}[tb]
   \caption{\textsc{PFOffline}: $O(1)$-approximate remote-pseudoforest algorithm}
   \label{alg:pf_offline}
\begin{algorithmic}[1]
    \STATE {\bfseries Input:} data $X = \{x_1, \dots, x_n\}$ (assume $\diam(X) < 1/20$), integer $k$, aspect ratio $\Delta$.
    \STATE $S_0 \leftarrow \{x_1\}$. \hspace{1cm} \textcolor{blue}{\COMMENT{$x_1$ may be replaced with an arbitrary or random point in $X$}}
    \FOR{$\ell=1$ to $L = \lceil \log_5 \Delta \rceil$}
        \STATE $S_\ell \supset S_{\ell-1}$ is a maximal set of points separated by $\frac{5^{-\ell}}{20}.$
    \ENDFOR
    \STATE \textbf{Create} tree of depth $L$, node $\ell$ levels represented by $S_\ell$, and edge connecting $x^{(\ell)}$ (for $x \in S_\ell$) and $y^{(\ell-1)}$ (for $y \in S_{\ell-1}$) if $y$ is the closest point in $S_{\ell-1}$ to $x$.
    \FOR{each leaf node $u$}
        \STATE $DP[u, 0] \leftarrow 0$, $DP[u, 1] \leftarrow 5^{-L}$, $DP[u, p] \leftarrow -\infty$ for all $2 \le p \le k$.
        \STATE $DPpoints[u, 0] = \emptyset$, $DPpoints[u, 1] = \{u\}$  \hspace{1cm} \textcolor{blue}{\COMMENT{$DPpoints[u, p]$ will keep track of the set of points to achieve value $DP[u, p]$}}
    \ENDFOR
    \FOR{each non-leaf node $u$ (starting at level $L-1$ and ending at level $0$)}
        \STATE Let $u_1, \dots, u_d$ represent the children node of $u$ in the tree.
        \FOR{$p=2$ to $k$}
            \STATE $DP[u, p] = \max\limits_{\substack{p_1, \dots, p_d \ge 0 \\ p_1 + \cdots + p_d = p}} \sum\limits_{i=1}^d DP[u_i, p_i]$.
            \IF{$DP[u, p] \neq -\infty$}
                \STATE $DPpoints[u, p] = \bigcup_{i=1}^d DPpoints[u_i, p_i],$ for the $p_i$'s chosen to maximize the above sum.
            \ENDIF
        \ENDFOR
        \STATE $DP[u, 0] \leftarrow 0$, $DP[u, 1] \leftarrow 5^{-\ell}$ (if at level $\ell$)
        \STATE $DPpoints[u, 0] \leftarrow \emptyset$, $DP[u, 1] \leftarrow u$
    \ENDFOR
    \STATE \textbf{Return} the points in $X$ corresponding to $DPpoints[x_1^{(0)}, k]$.
\end{algorithmic}
\end{algorithm}

\subsection{Analysis (Proof of Theorem \ref{thm:pseudoforest_offline})}

In this subsection, we analyze Algorithm \ref{alg:pf_offline} to show that it indeed provides an $O(1)$-approximation for the remote-pseudoforest problem. First, we prove the following lemma, which roughly states that any set of size $k$ with large pseudoforest cost induces a set $V$ of $k$ nodes with large $\sum 5^{-\ell(v)}$ value.

\begin{lemma} \label{lem:pf_analysis_1}
    Suppose there exists a set of points $Z = \{z_1, \dots, z_k\} \subset X$, such that for each $0 \le \ell \le L$, the number of indices $j \le k$ with $\rho(z_j, Z \backslash z_j) \in (5^{-(\ell+1)}, 5^{-\ell}]$ is precisely some $a_\ell$. Then, there exist $a_\ell$ distinct nodes at each level $S_\ell$ of the tree such that, across all levels, no node is an ancestor of another node.
\end{lemma}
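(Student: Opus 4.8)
The plan is to map each point $z_j$ into the tree so that its image lies at level $\ell_j$, and then show the images are distinct and pairwise incomparable. For each $j$, let $\ell_j\in\{0,\dots,L\}$ be the level with $\rho(z_j,Z\backslash z_j)\in(5^{-(\ell_j+1)},5^{-\ell_j}]$ (well defined, since every nearest-neighbor distance lies in $(5^{-(L+1)},1]$), so that $a_\ell=\#\{j:\ell_j=\ell\}$ and $\sum_\ell a_\ell=k$. Since $z_j\in S_L=X$, the leaf $z_j^{(L)}$ is a node of the tree; let $\phi(z_j):=w_j^{(\ell_j)}$ denote its ancestor at level $\ell_j$, where $w_j\in S_{\ell_j}$. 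If I can show that $\phi(z_1),\dots,\phi(z_k)$ are pairwise distinct and that none of them is an ancestor of another, then for each $\ell$ the $a_\ell$ indices $j$ with $\ell_j=\ell$ give $a_\ell$ distinct level-$\ell$ nodes, and together these form the desired antichain.

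The main step is a distance bound. Walking up the tree from $z_j^{(L)}$ to $\phi(z_j)=w_j^{(\ell_j)}$, each edge from a level-$(m+1)$ node to its parent has metric length at most $5^{-m}/20$ (the maximality property of $S_m$ recorded above), so the triangle inequality gives
\[\rho(z_j,w_j) \le \sum_{m=\ell_j}^{L-1}\frac{5^{-m}}{20} \le \frac{1}{20}\cdot\frac{5^{-\ell_j}}{1-1/5} = \frac{5^{-\ell_j}}{16}.\]
The identical computation shows that any point whose leaf is a descendant of $\phi(z_j)$ is also within $5^{-\ell_j}/16$ of $w_j$.

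Now suppose toward a contradiction that $j\neq j'$, that $\ell_j\le\ell_{j'}$, and that $\phi(z_j)$ is a (weak) ancestor of $\phi(z_{j'})$. Since $\phi(z_{j'})$ is an ancestor of $z_{j'}^{(L)}$, so is $\phi(z_j)$, hence $\rho(z_{j'},w_j)\le 5^{-\ell_j}/16$ as well, and therefore
\[\rho(z_j,z_{j'}) \le \rho(z_j,w_j)+\rho(w_j,z_{j'}) \le \frac{5^{-\ell_j}}{8} < \frac{5^{-\ell_j}}{5} = 5^{-(\ell_j+1)}.\]
But $z_{j'}\in Z\backslash z_j$, so $\rho(z_j,z_{j'})\ge\rho(z_j,Z\backslash z_j)>5^{-(\ell_j+1)}$, a contradiction. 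This rules out $\phi(z_j)$ being an ancestor of $\phi(z_{j'})$ when $\ell_j\le\ell_{j'}$; when $\ell_j>\ell_{j'}$ it is automatic, since ancestors lie at strictly smaller levels. Specializing to $\ell_j=\ell_{j'}$ (and noting that a node is a weak ancestor of itself) also gives $\phi(z_j)\neq\phi(z_{j'})$ whenever $j\neq j'$, which completes the argument.

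The one place demanding care is the choice of constants: the geometric series of edge lengths contributes $\tfrac{1}{16}$ per endpoint, i.e.\ a total of $\tfrac{1}{8}$ between two ``adjacent'' leaves, and this must be strictly below the $\tfrac{1}{5}$ gap enforced by the base-$5$ bucketing of $\rho(z_j,Z\backslash z_j)$; since $\tfrac{1}{8}<\tfrac{1}{5}$ there is a comfortable margin. Beyond that, the proof is just the triangle inequality, the net maximality giving the $5^{-m}/20$ edge bound, and the elementary facts that two tree nodes at a common level are incomparable and that ancestry strictly decreases depth.
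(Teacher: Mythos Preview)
Your proof is correct and follows essentially the same approach as the paper: map each $z_j$ to the level-$\ell_j$ ancestor of its leaf, bound the leaf-to-ancestor distance by the geometric sum $\sum_{m\ge \ell_j} 5^{-m}/20 \le 5^{-\ell_j}/16$, and derive a contradiction from $5^{-\ell_j}/8 < 5^{-(\ell_j+1)}$. The only difference is cosmetic: you handle the equal-level and strict-ancestor cases in a single unified argument via weak ancestry, whereas the paper splits them into two cases, but the underlying estimates are identical.
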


\begin{proof}
    For some point $z_j \in X$, suppose that $\rho(z_j, Z \backslash z_j) \in (5^{-(\ell+1)}, 5^{-\ell}]$. We will \emph{map} $z_j$ to a point at level $\ell$ by starting with the node $z_j^{(L)} \in S_L$, and then finding its ancestor in level $S_\ell$ in the tree.

    It now suffices to show that we never map two points in $Z$ to the same node, or map two points in $Z$ to two distinct nodes but one is the ancestor of the other. Suppose we have some points $z_i, z_j \in Z$ which both get mapped to level $\ell$. Then, $\rho(z_i, z_j) \ge \rho(z_i, Z \backslash z_i) \ge 5^{-(\ell+1)}$.
    However, for any point $x^{(\ell+1)} \in S_{\ell+1}$ and its parent $y^{(\ell)}$ in $S_\ell$, the distance $\rho(x, y) \le 5^{-\ell}/20$.
    This means that the distance $z_j$ travels to its ancestor in level $\ell$ is at most $\frac{5^{-\ell}}{20} + \frac{5^{-(\ell+1)}}{20} + \cdots + \frac{5^{-(L-1)}}{20} \le \frac{5^{-\ell}}{16}.$
    So if $z_i, z_j$ travel to the same point in level $i$, their distance is at most $2 \cdot \frac{5^{-\ell}}{16} = \frac{5^{-\ell}}{8}$, but this is a contradiction because $\rho(z_i, z_j) \ge 5^{-(\ell+1)}$.

    Alternatively, suppose $z_i$ is mapped to level $\ell$ and $z_j$ is mapped to level $\ell' > \ell$. We still have that $\rho(z_i, z_j) \ge \rho(z_i, Z \backslash z_i) \ge 5^{-(\ell+1)}$. In addition, if the ancestor of the point $z_j$ is mapped to is the point $z_i$ is mapped to, we will still have $\rho(z_i, z_j) \le \frac{5^{-\ell}}{8}$ for the same reason.
\end{proof}

Hence, a reasonable goal is to algorithmically find $a_\ell$ distinct points in each level $\ell$, without any ancestor collision issues. But we need to make sure that if we do this, we actually output a set of $p$ points with high pseudoforest value.

\begin{lemma} \label{lem:pf_analysis_2}
    Suppose we have a subset $V$ of nodes, consisting of $a_\ell$ distinct nodes for each level $\ell$ in the tree, such that no point in the subset is an ancestor of another point.
    Then, for the corresponding points $Z$ in the set $X$, the pseudoforest value $\PF(Z) \ge \Omega(\sum a_\ell \cdot 5^{-\ell})$.
\end{lemma}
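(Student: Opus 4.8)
The plan is to prove a pointwise statement far stronger than the lemma asks for: any two distinct nodes $x^{(\ell)},{x'}^{(\ell')}$ of $V$ satisfy $\rho(x,x')\ge\frac{1}{80}\,5^{-\min(\ell,\ell')}$. Granting this, for each node $x^{(\ell)}\in V$ and any other node of $V$ the distance between the underlying points is at least $\frac{1}{80}5^{-\ell}$ (since $\min(\ell,\ell')\le\ell$), so the points of $Z$ are pairwise distinct, $\rho(z,Z\setminus z)\ge\frac{1}{80}5^{-\ell}$ for the point $z$ coming from a level-$\ell$ node, and
\[
\PF(Z)=\sum_{z\in Z}\rho(z,Z\setminus z)\ \ge\ \frac{1}{80}\sum_{x^{(\ell)}\in V}5^{-\ell}\ =\ \frac{1}{80}\sum_{\ell}a_\ell\,5^{-\ell}\ =\ \Omega\Bigl(\sum_\ell a_\ell 5^{-\ell}\Bigr).
\]

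To prove the pointwise bound, assume WLOG $\ell\le\ell'$. If $\ell=\ell'$ it is immediate: $x$ and $x'$ are distinct points of $S_\ell$, which is $\frac{5^{-\ell}}{20}$-separated, so $\rho(x,x')\ge\frac{5^{-\ell}}{20}\ge\frac{5^{-\ell}}{80}$. Now suppose $\ell<\ell'$, and for $\ell\le m\le\ell'$ let $q_m\in S_m$ be the point such that $q_m^{(m)}$ is the ancestor of ${x'}^{(\ell')}$ at level $m$ (so $q_{\ell'}=x'$). Since $V$ has no ancestor--descendant pair, $x^{(\ell)}$ is not an ancestor of ${x'}^{(\ell')}$, hence $q_\ell\ne x$; moreover $q_{\ell+1}\ne x$, since $x\in S_\ell\subseteq S_{\ell+1}$ and if $q_{\ell+1}^{(\ell+1)}$ equalled $x^{(\ell+1)}$ then its parent would be $x^{(\ell)}$, again forcing $x^{(\ell)}$ to be an ancestor of ${x'}^{(\ell')}$.

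Next I exploit that the parent $q_\ell^{(\ell)}$ of $q_{\ell+1}^{(\ell+1)}$ has, by construction, $q_\ell$ equal to the point of $S_\ell$ nearest to $q_{\ell+1}$; as $x\in S_\ell$, this gives $\rho(q_{\ell+1},q_\ell)\le\rho(q_{\ell+1},x)$. Combining with $\rho(x,q_\ell)\ge\frac{5^{-\ell}}{20}$ (distinct points of $S_\ell$) and the triangle inequality yields $\rho(x,q_{\ell+1})\ge\frac12\cdot\frac{5^{-\ell}}{20}=\frac{5^{-\ell}}{40}$. Separately, each edge of the tree between a node at level $j+1$ and its parent has length at most $\frac{5^{-j}}{20}$, so the path from ${x'}^{(\ell')}=q_{\ell'}^{(\ell')}$ up to $q_{\ell+1}^{(\ell+1)}$ has total length at most $\sum_{j\ge\ell+1}\frac{5^{-j}}{20}=\frac{5^{-\ell}}{80}$, i.e.\ $\rho(x',q_{\ell+1})\le\frac{5^{-\ell}}{80}$. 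A final triangle inequality gives $\rho(x,x')\ge\rho(x,q_{\ell+1})-\rho(q_{\ell+1},x')\ge\frac{5^{-\ell}}{40}-\frac{5^{-\ell}}{80}=\frac{5^{-\ell}}{80}$, which is the claim.

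The only delicate step is using the no-ancestor hypothesis one level deeper, at level $\ell+1$ rather than at level $\ell$. The naive bound that uses only $q_\ell\ne x$ is vacuous: the covering radii $5^{-j}/20$ accumulated as the deep node ${x'}^{(\ell')}$ climbs to level $\ell$ form a geometric series summing to $\frac{5^{-\ell}}{16}$, which exceeds the level-$\ell$ packing separation $\frac{5^{-\ell}}{20}$, so the triangle inequality at level $\ell$ gives nothing. Passing to level $\ell+1$, where the parent of a node is by definition its nearest level-$\ell$ ancestor and this buys a factor of $2$, is exactly what makes the constants close. The degenerate cases ($\ell'=\ell+1$, in which the climbing sum is empty and $\rho(x',q_{\ell+1})=0$; and $k\le1$, where there is nothing to prove) are routine.
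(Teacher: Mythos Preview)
Your proof is correct and follows essentially the same approach as the paper: both establish the pointwise bound $\rho(x,x')\ge\frac{1}{80}\,5^{-\min(\ell,\ell')}$ by passing to the level-$(\ell+1)$ ancestor $q_{\ell+1}$ of the deeper node, using the nearest-parent property to get $\rho(x,q_{\ell+1})\ge 5^{-\ell}/40$, and subtracting the geometric tail $\rho(x',q_{\ell+1})\le 5^{-\ell}/80$. Your explicit remark that the no-ancestor hypothesis must be invoked at level $\ell+1$ (because the naive level-$\ell$ bound is swamped by the accumulated covering radii) is a nice clarification that the paper leaves implicit.
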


\begin{proof}
    Let's pick two nodes $x_i^{(\ell)} \in V$ at level $\ell$ and $x_j^{(\ell')} \in V$ at level $\ell' \ge \ell$. We will show that $\rho(x_i, x_j) \ge \Omega(5^{-\ell})$. Note that this also implies $\rho(x_i, x_j) \ge \Omega(5^{-\ell'})$. 

    If $\ell = \ell'$, this is straightforward, because we defined $S_\ell$ to be a maximal \emph{net} where all points in the net have distance $5^{-\ell'}/20$ from each other. So, we may assume $\ell' > \ell$.

    If $\ell' = \ell+1$, then note that the parent of $x_j$ is \emph{not} $x_i$, which means $x_j$ was closer to some point in $S_\ell$ which had distance at least $5^{-i}/20$ away from $x_i$. Therefore, $\rho(x_j, x_i) \ge 5^{-i}/40$. If $\ell' > \ell+1$, then the same is true for the ancestor of $x_j$ at level $\ell+1$.
    In other words, if $y^{(\ell+1)}$ is the ancestor of $x_j^{(\ell')},$ then $\rho(y, x_i) \ge 5^{-i}/40$.
    In addition, the distance from $x_j$ to $y$ is at most $\frac{5^{-(\ell+1)}}{20} + \frac{5^{-(\ell+2)}}{20} + \cdots + \frac{5^{-(\ell'-1)}}{20} \le \frac{5^{-\ell+1}}{16} = \frac{5^{-\ell}}{80}$. So, $\rho(x_j, x_i) \ge \frac{5^{-\ell}}{40}-\frac{5^{-\ell}}{80} = \frac{5^{-\ell}}{80}$. 
    
    Hence, if $x_i^{(\ell)} \in V$ (which means $x_i \in Z$), then $\rho(x_i, x_j) \ge \frac{1}{80} \cdot 5^{-\ell}$ for all other points $z \in Z$, so $\rho(x_i, Z \backslash x_i) \ge \frac{1}{80} \cdot 5^{-i}$. Adding over all $x_i \in Z,$ we obtain $\PF(Z) \ge \frac{1}{80} \cdot \sum a_\ell \cdot 5^{-\ell},$ as desired.
\end{proof}

To summarize, if the optimal pseudoforest cost is some $T$, achieved by some $Z_0 \subset X$ of size $k$, we can say that there are $a_\ell$ indices such that $\rho(z_j, Z \backslash z_j) \in (5^{-(\ell+1)}, 5^{-\ell}].$ Hence, $\sum a_\ell \cdot 5^{-\ell} \ge \sum_{j=1}^{k} \rho(z_j, Z \backslash z_j) \ge T$.
By Lemma \ref{lem:pf_analysis_1}, there exist $a_\ell$ distinct nodes at level $S_\ell$ for each $\ell$, such that no node is an ancestor of another node. Hence, there exist $k$ nodes $v_1, \dots, v_k$ in the tree, such that no $v_i$ is an ancestor of $v_j$ and $\sum 5^{-\ell(v_i)} \ge T$.
So, our dynamic programming algorithm finds such a set of $k$ nodes.
Finally, by Lemma \ref{lem:pf_analysis_2}, this induces a set of $k$ distinct points $Z \subset X$, with that $\PF(Z) \ge \frac{T}{80}.$ Hence, we have an $O(1)$-approximation algorithm for remote-pseudoforest, proving Theorem \ref{thm:pseudoforest_offline}.

\end{document}